\documentclass[pdflatex,sn-mathphys-ay]{sn-jnl}

\usepackage{graphicx}%
\usepackage{multirow}%
\usepackage{amsmath,amssymb,amsfonts}%
\usepackage{amsthm}%
\usepackage{mathrsfs}%
\usepackage[title]{appendix}%
\usepackage{xcolor}%
\usepackage{textcomp}%
\usepackage{manyfoot}%
\usepackage{booktabs}%
\usepackage{algorithm}%
\usepackage{algorithmicx}%
\usepackage{algpseudocode}%
\usepackage{listings}%
\usepackage{float,adjustbox}
\usepackage{tikz}
	\usetikzlibrary{arrows.meta}
	\usetikzlibrary{shapes}
	\usetikzlibrary{calc}
	\usetikzlibrary{positioning,intersections,fit}
	\usetikzlibrary{shapes.geometric}
	\usetikzlibrary{decorations.markings,decorations.pathreplacing}
	\usetikzlibrary{fit}
\usepackage{environ}
\usepackage{epsfig,graphicx,rotating}
\usepackage{etoolbox,booktabs}
\usepackage{subcaption}
\usepackage{arydshln}
\usepackage{enumitem}

\allowdisplaybreaks

\def\circleB{(0,0) circle (1)}
\def\circleC{(1,0) circle (1)}
\def\circleA{(0.5,1) circle (1)}

\theoremstyle{thmstyleone}%
\newtheorem{theorem}{Theorem}
\newtheorem{proposition}[theorem]{Proposition}%

\theoremstyle{thmstyletwo}%
\newtheorem{example}{Example}%
\newtheorem{remark}{Remark}%

\theoremstyle{thmstylethree}%
\newtheorem{definition}{Definition}%

\begin{document}

\title[Article Title]{A diagrammatic proof-theoretic semantics for the Greimas semiotic square}


\author*[1]{Michael D. Fowler}\email{synthifou@gmail.com}

\affil*[1]{Technische Universit\"{a}t Berlin, Strasse des 17. Juni 135. 10623 Berlin Germany}


\abstract{We develop a diagrammatic proof system for a fragment of structural semantics inspired by the Greimas semiotic square, using  spider diagrams as the underlying formalism. The basic terms are represented as diagrammatic configurations, and the relations of contradiction and implication are interpreted as transformations governed by a set of inference rules. These transformations are realised as derivations, with proof trees serving as witnesses. Our main result shows that the construction of the four meta-terms can be captured uniformly: each is derivable from a conjunctive pair of basic configurations via a fixed derivation schema composed of contour introduction and habitat transformation rules. This yields a proof-theoretic account of the combinatorial operation underlying meta-term formation, and provides a semantic interpretation of the Greimasian operation `+' as a derivational construction rather than a logical combination. We further show that diagrammatic negation in this setting is not a Boolean complement, but a restricted, zone-determined semantic counter-position, reflecting the relational character of opposition in structural semiotics. The resulting framework provides a compositional, rule-based semantics in which complex configurations are generated constructively from simpler ones. In addition to extending the expressive scope of spider diagram calculi, the system illustrates how diagrammatic reasoning can be used to formalise non-classical semantic operations within a unified inferential setting.}

\keywords{diagrammatic reasoning, spider diagrams, semiotic square, structural semantics, proof-theoretic semantics, non-Boolean negation, diagrammatic logic, Greimas}
%


\pacs[MSC Classification]{03B47,03B20, 03B22}

\maketitle

\section{Introduction}

Diagrammatic reasoning systems provide formal frameworks in which logical structure is represented and manipulated visually through rule-governed transformations. Among such systems, spider diagrams offer a well-developed diagrammatic calculus for reasoning about set-theoretic and existential information, equipped with a sound and complete system of inference rules. In this paper, we develop a diagrammatic proof system based on unitary spider diagrams for representing and transforming a class of semantic configurations inspired by the semiotic square. 

The semiotic square is an analytic and heuristic tool for investigating signification in literary contexts and was developed in the early 1960s by Algirdas Julien Greimas (1917-1992) who initiated a research program to develop a structuralist theory of meaning that was based on how oppositional relations between concepts iterate within narratives. As Hawkes suggests, Greimas attempted ``to describe narrative structure in terms of an established linguistic model derived from the Saussurean notion of an underlying \textit{langue} or competence which generates a specific \textit{parole} or performance, as well as from Saussure's and Jakobson's concept of the fundamental signifying role of binary opposition.'' \citep[page 69]{Hawkes2003} 

Greimasian structural semiotics establishes the notion of a seme (denoted /seme/) as the smallest irreducible unit of meaning for which when in combination, form a sememe, or the smallest unit of meaning that associates to a morpheme or lexeme, and is thus analogous to a phoneme in phonology. For example the sememe `cat' contains the semes of /animal/, /pet/, /feline/ etc. What Greimas calls a homologation is defined as a structure in which 
\begin{quote}
two sememes $S$ and $S^{\prime}$ will be called homologous in relation to non-$S$ and non-$S^{\prime}$ if they have in common a semic content $s$ (that is to say, at least one seme) which, considered as a positive term, is present at the same time, under its negative form of non-$s$, in the sememes non-$S$ and non-$S^{\prime}$. \citep[page xxxi]{Greimas1983}
\end{quote}
As Hawkes suggests, this framing is heavily influenced by both the structural linguistics of \citet{Saussure1983} as well as how \citet{Levi-Strauss1955} identified constitutive units of myths as emerging from the condition:
\begin{equation}
\frac{A}{\mathrm{non-}A} \,\,\cong\,\, \frac{B}{\mathrm{non-}B}\,\,.
\end{equation}
Greimas and Court\'{e}s similarly consider this condition as not only relevant to myths or folktales, but more generally applicable to all semiotic domains of cultural production (i.e., texts, art, music). Homologation is thus considered a ``procedure of structuration'' \citep[page 144]{GreimasCourtes1982} which is commonly formulated as:
\begin{equation}\label{homologationEG}
A : B :: A^{\prime} : B^{\prime}.
\end{equation}
Under the analytic lens of the structural semiotics of Greimas, sign systems then become assemblages of differences where meaning arises not from inherent qualities but from relational positions within a structure. A seme itself is thus not autonomous or ``atomistic,'' but instead ``exists only because of the differential gap that opposes it to other semes.'' \citep[page 278]{GreimasCourtes1982} By this we mean that semes themselves follow the type of structure of opposition given in Equation~(\ref{homologationEG}), and form what Greimas calls semic categories \cite[page 16]{Greimas1987}. These semic categories constitute the content plane, and therefore are considered anterior to the collection of semes that form the categories themselves. But there is structure here too, which becomes fully formed and visualized through Greimas' most well-known contribution, the semiotic square. Here, \citet[page 88]{GreimasRastier1968} define a family of relations that include:  contrariety  (\raisebox{0pt}{\tikz{\draw[thick,dotted,<->] (0,0)--(0.6,0);}}), contradiction (\raisebox{0pt}{\tikz{\draw[<->] (0,0)--(0.6,0);}}) and implication (\raisebox{0pt}{\tikz{\draw[thick,dashed,->] (0,0)--(0.6,0);}}) between a seme pair $(s_{1}, s_{2})$ and their negations $(\neg s_{1}, \neg s_{2})$, and visualise these relations through the diagram given in Figure~\ref{EQSemioticSquare}.

\begin{figure}
\centering
\begin{tikzpicture}[scale=0.75]
\node at (0,0)(s1){$s_{1}$};
\node at (4,0)(s2){$s_{2}$};
\node (not-s2) at (0,-4){$\neg s_{2}$};
\node at (4,-4)(not-s1){$\neg s_{2}$};
\draw[thick,dotted,<->] (s1) -- node[above] {\scalebox{1}{$\bf{S}$}} (s2);
\draw[thick,dotted,<->]  (not-s1) -- node[below]{\scalebox{1}{$\bf\bar{S}$}}(not-s2);
\draw[thick,dashed,->] (not-s2)--node[left,xshift=-5]{\footnotesize Positive Deixis}(s1);
\draw[thick,dashed,->] (not-s1)--node[right,xshift=5]{\footnotesize Negative Deixis}(s2);
\draw[<->](s1)--(not-s1);
\draw[<->](s2)--(not-s2);
\end{tikzpicture}
\caption{The Greimas semiotic square.}
\label{EQSemioticSquare}
\end{figure}

The seme pair of $\{s_{1},s_{2}\}$ is known as the complex axis, and $\{\neg s_{1}, \neg s_{2}\}$ the neutral axis. The four central terms of $\{s_{1}, s_{2},\neg s_{1}, \neg s_{2}\}$ thus allow for other meta-terms as sememes to be derived. The procedures for these constructions are found through the ``combing of the four simple terms.'' \citep[page 81]{Hebert2020} This yields 
\begin{align*}
\bf{S}=& (s_{1} + s_{2}), \quad \mathrm{complex\,\, term},\\
\bf\bar{S}=& (\neg s_{1} + \neg s_{2}), \quad \mathrm{neutral\,\, term},
\end{align*}
as well as the Positive Deixis = ($ s_{1} + \neg s_{2}$), and Negative Deixis = ($\neg s_{1} + s_{2}$). The Greimas square has become the most popular tool in the structural semiotics method for identifying and tracking of structures of homologation within narratives, with Greimas and Court\'{e}s  arguing that it represents ``a rigorous formulation of reasoning by analogy.'' \citep[page 144]{GreimasCourtes1982} 

Yet, despite its importance, a formalisation of the process by which homologations are established within the semiotic square often remains under-specified. The contribution of this paper is to extend spider diagram calculi of \citet{HowseStapletonTaylor2005} to a new semantic domain by providing a proof-theoretic reconstruction of the operations underlying the Greimas square. Specifically, spider diagrams allow for a formalisation of the under-specification found in the initial binary term-pair $s_{1}$ and $s_{2}$. Here, Court\'{e}s describes this binary pair as formed through a `contrariety' (denoted $s_{1}/s_{2}$), which corresponds to the notion that two terms
\begin{quote}
may be said to be contrary when the presence of one presupposes the presence of the other, and when the absence of one goes hand in hand with the absence of the other. More generally speaking, two terms (S1 and S2) are said to be contrary if the negation of one implies the affirmation of the other and vice versa. \citep{Courtes1991}
\end{quote}
H\'{e}bert acknowledges the subtleties of distinguishing between the relation of contrariety and contradiction in the square through defining contradiction as ``categorical'' and contrariety as ``gradual'':
\begin{quote}
. . .  in classical logic, true and false are contradictories, since not-true equals false and not-false equals true; however, rich and poor are contraries, since not-rich does not necessarily equal poor, and not-poor does not necessarily equal rich . . . Opposition can be viewed in several diﬀerent ways: as a comparative relation at the same level as alterity and identity, or as a subspecies of alterity, or as a subspecies of similarity, among others. Actually, elements set in opposition are comparable, and therefore similar: for example, day and night are opposable because both are time spans in a day (a shared property). \citep[page 32]{Hebert2020} 
\end{quote}
As such, \citet[page 4]{Rossolatos2012} identifies the $s_{1}$/$s_{2}$ binary pair as one in which ``contrariety constitutes a deflected or fuzzier form of contradiction.'' 

Unlike the logical hexagon \citep{Moretti2012,Beziau2012}, the semiotic square is not reducible to an oppositional-geometric structure. Each vertex represents semic zones inside a restricted semantic universe generated by the square, where negation is interpreted as a semic counter-position rather than a global Boolean complement. A close reading of \citet[page 50]{Greimas1987} reveals that the square functions on a hierarchy of terms (incl. seme and sememes) such that the meta-terms are created through hyponymic relations on semes. This corresponds to a Venn diagram describing the following syntagmatic universe: 

\begin{equation}\label{Ex1}
\resizebox{0.25\textwidth}{!}{
\begin{tikzpicture}[font=\Large,line width=0.5mm]
	\scope[even odd rule]
	\clip \circleB;
	\clip \circleC;
	\fill[red] \circleB ;
	\endscope
\draw \circleB node[label={[label distance=8mm]-90:$S_{1}$}] {};
\draw \circleC node [label={[label distance=8mm]-90:$S_{2}$}] {};
\draw (0.5,0) circle (2) node[label={[label distance=20mm]90:$M$}]{};
\draw (0.5,0.5) circle (3) node[label={[label distance=30mm]90:$X$}]{};
\end{tikzpicture}
}
\end{equation}

Here, according to \citet[page 338]{GreimasCourtes1982}, $X$ is a set of generic terms that are ``defined only as intersection points of various relations,'' for which ``any term of the semiotic square is an intersection point of the relations of contrariety, contradiction, and complementarity.'' Within $X$, the subset $M$ is a collection of compound terms or meta-terms that are derived from `$+$' operations (i.e. $\mathbf{S}, \mathbf{\bar{S}}$, Postivie Deixis, Negative Deixis), and $S_{i}$ are subset collections of semes. The semes are divided into two contraneous sets $S_{1}$ and $S_{2}$ that are demarcated according to their position ($1=$ left, $2=$ right) in the square rather than their lexical categorisation. The intersection of $S_{1}$ and $S_{2}$ is empty and denoted by a shaded region. 

In terms of its construction (i.e., its application to a source text), Greimas and Court\'{e}s argue that the sememes $\bf{S}$ and $\bf\bar{S}$ (cf. Figure~(\ref{EQSemioticSquare})) are assigned through the link between the ``positive'' term $s_{1}$ and its negation $\neg s_{1}$, (or duly the ``negative'' term $s_{2}$ and its negation $\neg s_{2}$). This is known as a ``structuration'' in which there is a ``reduction of parasynonomous sememic occurrences into classes, and, on the other hand, the homologation of the identified sememic categories (or sememic oppositions).'' \citep[page 318]{GreimasCourtes1982} Van Wolde argues that this process implies series of `operations':
\begin{quote}
The fundamental syntactic description shows that the articulation of meaning, as described in the taxonomic model of the semiotic square, is rendered dynamic. This happens by means of actions or operations on the semes. The first operation is negation, which runs along the axis of contradiction: by means of negation $s_{1}$ generates non-$s_{1}$ and $s_{2}$ generates non-$s_{2}$. The second operation is assertion, which focusses on the sub-contrary terms: non-$s_{1}$ remains non-$s_{1}$, non-$s_{2}$ remains non-$s_{2}$. Assertion entails implication: non-$s_{1}$ becomes $s_{2}$, and non-$s_{2}$ becomes $s_{1}$. These operations of negation, assertion and implication succeed one another, amounting to one big transformation. There are two possibilities here: either $s_{1}$ is transformed into $s_{2}$ via non-$s_{1}$, or $s_{2}$ is transformed into $s_{1}$ via non-$s_{2}$. In short, the elementary semantic structure with its four logical semic terms and six relations (fundamental semantics) is made dynamic by the operations of negation and affirmation so that there can be changes in the semantic structure (fundamental syntax) together they constitute the first stage (fundamental grammar) of the generation of meaning. \citep[page 13]{vanWolde1989}
\end{quote}
In this article we draw on van Wolde's seme-wise operations to derive the complex ($\mathfrak{c}$) and neutral ($\mathfrak{n}$) axes of the Greimas square via
\begin{equation}\label{vanWoldeEq}
s_{1}\rightarrow \neg s_{1} \rightarrow s_{2},\quad  \mathrm{and} \quad s_{2}\rightarrow \neg s_{2} \rightarrow s_{1},
\end{equation}
within the logical framework and semantics of unitary spider diagrams. This positions the semiotic square not only as an analytic tool, but as a generative semantic system whose transformations can be made explicit within a formal diagrammatic framework. In turn, this allows the generative procedures that are only implicitly present in the traditional formulation of the semiotic square to become explicit through such a formalisation. Within the framework, unitary spider diagrams furnish a semantics in which instances of $s_{1}$ and $s_{2}$ are propagated through successive operations of negation and implication, while proof trees encode the compositional structure of these transformations. Building on the van Wolde operational account, we extend this derivational perspective to the construction of the meta-terms, including the complex and neutral axes, by treating them as the result of composed diagrammatic inferences. 

\subsection{Summary}

The article proceeds as follows. In Section~\ref{unitaryspiderdiagrams}, we introduce unitary spider diagrams as a formal diagrammatic language and reinterpret their set-theoretic semantics in terms of semantic configurations, where contours represent semes and spiders encode their actualisation and virtualisation. Section~\ref{semioticsquareapplication} applies this framework to the Greimas square by defining diagrammatic counterparts of the core relations of contrariety, negation, and implication, thereby modelling the transitions between semes as transformations of diagrams. In Section~\ref{prooftreesdiagramming}, we identify a system of inference rules and show how these operations are realised as derivations in a diagrammatic proof system, with proof trees witnessing each transformation. Section~\ref{derivingmetaterms} extends this approach to the construction of the meta-terms of the semiotic square. We prove that each meta-term is derivable from a conjunctive seme-pair via a uniform inference schema, and interpret this construction as a diagrammatic semantic lift in which a new configuration is introduced at the level of $M$. This provides a formal account of the Greimasian operation ‘+’ as a derivational process rather than a logical or set-theoretic combination. We conclude by discussing the implications of this framework for the formalisation of structural semiotics and the role of diagrammatic reasoning in modelling non-classical semantic operations.

\section{Unitary Spider Diagrams}\label{unitaryspiderdiagrams}

Diagrammatic reasoning has long played a role in mathematical practice, and more recently has motivated the development of formal systems in which diagrams serve not merely as heuristic aids, but as objects of logical inference in their own right \citep{UrbasJamnikStapleton2015}. Such systems exploit the immediacy and structural transparency of diagrams to support reasoning tasks that are often more cumbersome in symbolic form. Within the domain of semiotics, Greimas similarly emphasised the importance of diagrammatic structures as heuristic and hermeneutic tools for analysing meaning. The semiotic square itself may be understood as a diagrammatic articulation of relational structure, originally inspired by forms analogous to Cayley graphs. \citep{Fowler2025} In this sense, both diagrammatic logic and structural semiotics share a common concern with how relational configurations may be made explicit and manipulable.

In this article, we bring these perspectives together by adopting unitary spider diagrams \citep{HowseStapletonTaylor2005} as a formal diagrammatic language for modelling semic relations. Spider diagrams extend Venn diagrams by incorporating existential information and supporting a well-defined system of inference rules, enabling diagrammatic transformations to be organised into proof trees \citep{DelaneyStapletonTaylorThompson2013}. While traditionally interpreted in set-theoretic terms, we instead treat spider diagrams as representations of semantic configurations, where contours correspond to semes or sememes, and spiders encode their actualisation or virtualisation. Here we drawn on what Greimas and Court\'{e}s note as the conditions in which subjects and objects in a narrative are virtualised prior to their junction, for which ``their actualization and their realization take place in accordance with the two types of characteristic relations of the function; disjunction actualizes subjects and objects, conjunction realizes them.'' \citep[page 9]{GreimasCourtes1982} This framing allows diagrammatic inference to be understood as modelling transformations within a semic structure, thereby aligning spider diagram reasoning with the generative processes underlying the semiotic square. Our definition of unitary spider diagrams is adapted from that given by \citet{HowseStapletonTaylor2005}.

\begin{definition}[Unitary spider diagram]
Let the 4-tuple $d=\langle L, Z, Z^{\ast},SI\rangle$ denote a \emph{unitary spider diagram} with labels in the set $\mathcal{L}$ for which the following holds:
\begin{enumerate}
\item $L=L(d) \in \mathbb{F}\mathcal{L}$ is a finite set of contour labels.
\item $Z=Z(d) \subseteq \{(a, L - a)\,|\, a \subseteq L\}$ is a set of \textit{zones} $(Z(d) \subseteq \mathcal{Z})$ for which
\begin{enumerate}
\item[(i).] $\forall l \in L$ there exists $(a,L - a) \in Z$ such that $l \in a$, and
\item[(ii).] $(\varnothing, L) \in Z$, such that $r=R(d)=\mathcal{P}Z - \{\varnothing\}$ is the set of \textit{regions}.
\end{enumerate}
\item $Z^{\ast} = Z^{\ast}(d) \subseteq Z$ is the set of \textit{shaded zones}. We define the set of shaded regions through $R^{\ast}=R^{\ast}(d)=\mathcal{P}Z^{\ast} - \{\varnothing\}$.
\item Let $SI=SI(d) \subset \mathbb{Z}^{+} \times R(d)$ be a finite set of \textit{spider identifiers} for which:
\begin{equation}
\forall(n_{1},r_{1}),(n_{2},r_{2})\in SI \wedge r_{1}=r_{2} \Rightarrow n_{1}=n_{2}.
\end{equation}
Given $(n,r)\in SI$, it follows that there exists $n$ spiders with \textit{habitat} $r$.
\end{enumerate}
\label{SpiderDiagramDefinition}
\end{definition}

We utilise the properties of Venn diagrams in our unitary spider diagrams such that all contours intersect. This allows for the topological qualities of enclosure, intersection and exclusion to be visualised through closed curves that indicate set boundaries. We also follow \citet{Howse2001} and use a boundary rectangle to enclose a diagram, though this boundary is not considered a contour itself. Each contour contains a label in the set $\mathcal{L}$ (where $\mathbb{F}\mathcal{L}$ is the set of all finite subsets of $\mathcal{L}$), such that the labels of a diagram $d$ are denoted $L(d)$. A basic region $r$ in a spider diagram is an area enclosed by a contour for which we define zones as regions that contain no other regions within their contours, and are given through an ordered pair $(in,\,out)$. 

Zones or regions may contain spiders (black dots) that represent existential qualifications. A spider whose habitat spans multiple regions expresses existential uncertainty across those regions. If a spider inhabits a region $r$, it asserts the existence of an element in that region. If a spider spans several regions, it asserts the existence of an element in the union of those regions, without committing to any single region individually. Thus in our reinterpretation, spiders are taken to indicate the actualisation of semes within a given semantic configuration. A spider whose habitat is confined to a single region corresponds to a determinate actualisation, whereas a spider spanning multiple regions expresses a form of virtualisation, in which the realisation of a seme is left underdetermined across several possibilities.
Consequently, the more regions a spider spans, the weaker (more permissive) the existential claim becomes. This distinction between localised and spanning spiders will be important when interpreting diagrams as statements about the actualisation and virtualisation of semes. 

\begin{example}
Consider the following unitary spider diagram $d$:
\begin{equation}\label{Ex1}
\resizebox{0.25\textwidth}{!}{
\begin{tikzpicture}
\draw[thick] (-2,-1.25) rectangle (2.5,1.5)node[below,pos=0,xshift=65]{\large$d$};
	\scope
	\clip (-1,-1) rectangle (2,1)
      	(0,0) circle (1);
	\fill[red] (1,0) circle (1);
	\endscope
\draw[thick] (0,0) circle (1)node[above left,label={[yshift=20]90:{\large$A$}}] (white) {};
\draw[thick] (1,0) circle (1)node[above right,label={[yshift=20]90:{\large$B$}}] (black) {};
\node[circle,minimum size=2mm,inner sep=0,fill=black] at (-0.5,0)(p1){};
\node[circle,minimum size=2mm,inner sep=0,fill=black] at (-1.5,0)(p2){};
\draw[thick](p1)--(p2);
\end{tikzpicture}
}
\end{equation}
We denote the set of spiders in the diagram as $S(d)$, for which there is a corresponding function in the form 
\begin{equation}
\eta:S(d) \rightarrow \mathbb{P}Z(d) - \{\varnothing\},
\end{equation}
where $\mathbb{P}$ is the powerset of a diagram's zones. Subsequently we use $\eta_{d}(s)$ to indicate the \textit{habitat} of a spider $s$ in a diagram. As found in our diagram given in Equation~(\ref{Ex1}), a single spider may have \textit{feet} that touch different zones, but not the same zone twice, for which we give the set of contours labels of $d$ as \{$A$, $B$\}. The set of zones in $d$ with labels in $\mathcal{L}$ is given through:
\begin{equation}
\mathcal{Z}=\{(a,b)\in \mathbb{F}\mathcal{L} \times \mathbb{F}\mathcal{L}: a \cap b = \varnothing\},
\end{equation}
which yields: 
\begin{equation}
Z(d)=\{(\varnothing,\{A,B\}),(\{A\},\{B\}), (\{B\}, \{A\}), (\{A,B\},\varnothing)\}.
\end{equation}
Zones can also be shaded as $Z^{\ast}(d)$, which in turn places an upper bound on set cardinality. As \citet[page 134]{GilHowseKent1999}
note, ``the semantics of a shaded zone is that the set it denotes may not contain elements other than those indicated by the spiders which touch that zone. A shaded zone which has no spiders is thus empty.'' Hence, in Equation~(\ref{Ex1}), the diagram indicates that $B - A$ (i.e., the complement of $A$ in $B$) is empty. To identify the spider in $d$, we use the spider identifier which yields: 
\begin{equation}
SI(d)=\{(1,\{(\{A\},\{B\}),(\varnothing,\{A,B\})\})\},
\end{equation}
which allows us to translate $d$ into the logical statement:
\begin{equation}
\left( \forall x (B(x) \Rightarrow A(x))\right) \wedge \left( \exists x( (A(x) \wedge \neg B(x)) \vee (\neg A(x) \wedge \neg B(x)))\right),
\end{equation}
whose semantics in natural language equates to the sentence ``$B$ is a subset of $A$, and there exists something outside $B$ (which is possibly inside $A$, or possibly outside $A$).'' 
\end{example}

\begin{remark}
Each spider diagram thus has an interpretation in the form
\begin{equation}
I=(U,\Phi),
\end{equation}
for which $U$ is the set of all elements of an interpretation (universal set) and $\Phi$ is the function that maps contour labels to subsets of $U$ as $\Phi(l)\subseteq U$. \citep[page 493]{UrbasJamnikStapleton2015} A zone in a spider diagram in the form $(in, out)$ thus represents the set:
\begin{equation}
\Phi(in, out)=\bigcap_{l \in in} \Phi(l) \cap \bigcap_{l \in out} (U - \Phi(l)),
\end{equation}
in which $\Phi(l)$ is the set assigned to the contour label $l$. This interpretation allows diagrammatic entailment to be understood as semantic entailment between truth conditions.
\end{remark}

\subsection{Application to the semiotic square}\label{semioticsquareapplication}

In order to ground our operational account of the semiotic square via the seme-wise constructors given in Equation~\ref{vanWoldeEq}, we first turn to a partial example square by \citet[page 5]{Greimas1988} that uses the seme pair /life/ vs. /death/ for his analysis of the short story ``Two friends'' (\textit{Deux amis}) by Guy de Maupassant (1850-1893).

\begin{example}\label{GreimasExample}
Consider the following semiotic square based on the contrary seme pair $s_{1}$ = /life/, $s_{2}$ = /death/:

\begin{equation}
\begin{tikzpicture}[scale=0.75]
\node at (0,0)(s1){/life/};
\node at (4,0)(s2){/death/};
\node (not-s2) at (0,-4){/not-death/};
\node at (4,-4)(not-s1){/not-life/};
\draw[thick,dotted,<->] (s1) -- node[label={[label distance=4mm]90:/living-dead/}] {} (s2);
\draw[thick,dotted,<->]  (not-s1) -- node[label={[label distance=4mm]-90:/transcendence/}] {}(not-s2);
\draw[thick,dashed,->] (not-s2)--node[left,xshift=-5]{/existance/}(s1);
\draw[thick,dashed,->] (not-s1)--node[right,xshift=5]{/non-existance/}(s2);
\draw[<->](s1)--(not-s1);
\draw[<->](s2)--(not-s2);
\end{tikzpicture}
\end{equation}
Here, the meta-terms are constructed using the $`+'$ operator so that the hyponym (meta-term) at the $\mathbf{S}$ position, /living dead/ associates to figurative actors such as the Zombie or Vampire, while the meta-term at the $\mathbf{\bar{S}}$ position, /transcendence/, associates to figurative actors such as Spirits or Ghosts. At the Positive Dexies position is the meta-term /existence/ which carries the axiological value of `living' (i.e., /not-death/ + /life/) and the actualising modality of `being-able-to-do'. Duly, at the Negative Deixis position is the meta-term /non-existence/ (i.e., /not-life/ + /death/) containing the axiological value of `dying' which associates to the virtualising modality `not-being-able-to-do.' Each of the deixes is also associated to the thymic (classematic) categories of ``euphoria'' (Positive Deixis) and ``dysphoria'' (Negative Deixis), in turn valorising their diagrammatically implicated positive ($s_{1}$) or negative ($s_{2}$) terms. The neutral term $\mathbf{\bar{S}}$ is consequently classified as ``aphoric.'' \citep[page 346]{GreimasCourtes1982}
\end{example}

In terms of describing the semiotic square through unitary spider diagrams, the first question to be addressed concerns the existential status of the seme pair $(s_{1}$, $s_{2})$ as the complex axis $\mathfrak{c}$. In its simplest form, this would imply that the corresponding diagrams for each seme can be expressed as:
\begin{align}
\resizebox{0.5\textwidth}{!}{
\begin{tikzpicture}[line width=0.5mm,font=\Large]
\begin{scope}[local bounding box=d1]
	\scope[even odd rule]
	\clip \circleB;
	\clip \circleC;
	\fill[red] \circleB ;
	\endscope
\draw \circleB node[label={[label distance=8mm,xshift=-3,font=\huge]-90:$S_{1}$}] (S1){};
\draw \circleC node [label={[label distance=8mm,xshift=3,font=\huge]-90:$S_{2}$}] (S2){};
\draw (0.5,0) circle (2) node[label={[label distance=20mm,font=\huge]90:$M$}] (M) {};
\node[circle,minimum size=3mm,inner sep=0,fill=black] at (-0.5,0)(s1){};
\draw (0.5,0.5) circle (3) node[label={[label distance=30mm,font=\huge]90:$X$}]{};
\node[draw, rectangle, fit=(d1), inner sep=10pt, label={[font=\Huge]$d_{1}$}] (d1rec) {};
\end{scope}
\begin{scope}[local bounding box=d2,shift={(10,0)}] 
	\scope[even odd rule]
	\clip \circleB;
	\clip \circleC;
	\fill[red] \circleB ;
	\endscope
\draw \circleB node[label={[label distance=8mm,xshift=-3,font=\huge]-90:$S_{1}$}] (S1){};
\draw \circleC node [label={[label distance=8mm,xshift=3,font=\huge]-90:$S_{2}$}] (S2){};
\draw (0.5,0) circle (2) node[label={[label distance=20mm,font=\huge]90:$M$}] (M) {};
\node[circle,minimum size=3mm,inner sep=0,fill=black] at (1.5,0)(s2){};
\draw (0.5,0.5) circle (3) node[label={[label distance=30mm,font=\huge]90:$X$}]{};
\node[draw, rectangle, fit=(d2), inner sep=10pt, label={[font=\Huge]$d_{3}$}] (d2rec){};
\end{scope}
\draw[dotted,<->,shorten <=5pt, shorten >=5pt](d1rec)--(d2rec);
\end{tikzpicture}
}
\end{align}
Here, we have the left (e.g., /life/) and right (e.g., /death/) semes as the following spiders:
\begin{align}
SI(d_{1})&=\{(1,\{(\{S_{1}\},\{M\},\{X\}),(\{S_{2}\})\})\},\\
SI(d_{2})&=\{(1,\{(\{S_{2}\},\{M\},\{X\}),(\{S_{1}\})\})\}.
\end{align}

The assignment of $l \in L$ for both diagrams represents two distinct seme classes ($S_{1}, S_{2}$) that are respectively subsets of $M$ (the set of meta-terms), which in turn are subsets of $X$ (the inclusive syntagmatic set of terms or semantic universe). Spider diagrams allow us to indicate cardinality in a zone such that a spider in a non-shaded zone indicates the existence of at least one instance in that class. Hence in both $Z^{\ast}(d_{1})$ and $Z^{\ast}(d_{3})$ we use $S_{1}\cap S_{2}$ to indicate that the two semic instances are unique. By non-intersection, we mean here what Greimas alludes to in the notion of structuration as: ``the identification of the semic contents $s$ of a given inventory of occurrences, and that identification requires the `structuring' presence, that is to say, the disjoining, of the negative terms [$s_{i} \in S_{2}$] of the semic categories with the positive terms [$s_{i} \in S_{1}$] we are trying to identify.'' \citep[page 193]{Greimas1983} In terms of Example~\ref{GreimasExample} and its interpretative semantics, we assert that at the level of semes that /life/ is distinct and separate from /death/.

\begin{definition}[Diagrammatic contrariety]
Let $d_1 = \langle L, Z, Z^{\ast}, SI_1\rangle$ and $d_2 = \langle L, Z, Z^{\ast}, SI_2\rangle$ be unitary spider diagrams with
\[
L = \{S_1,S_2,M,X\},
\]
embedding the complex-axis seme pair $\mathfrak{c}$ within a semantic universe $X$, for which $\{S_{1},S_{2}\}\subseteq M \subseteq X$. We say that $d_1$ and $d_2$ are in the relation of \emph{diagrammatic contrariety}, written
\[
d_1 \;\#_{\mathrm{diag}}\; d_2,
\]
if the following conditions hold:
\begin{enumerate}
\item both diagrams contain exactly one spider,
\item the shaded zones coincide and enforce disjointness:
\[
Z^{\ast}(d_1) = Z^{\ast}(d_2) = \{(\{S_1,S_2\},\varnothing)\},
\]
so that $S_1 \cap S_2 = \varnothing$,
\item the spiders inhabit complementary single zones:
\[
\eta_{d_1}(s) = \{(\{S_1,M,X\},\{S_2\})\}, \qquad
\eta_{d_2}(s) = \{(\{S_2,M,X\},\{S_1\})\}.
\]
\end{enumerate}
\end{definition}

\begin{remark}
Diagrammatic contrariety expresses the mutual exclusion of $S_1$ and $S_2$ together with the independent actualisation of each seme in its respective diagram. This condition aligns to how \citet[page 89]{GreimasRastier1968} describe the relations found in $\mathfrak{c}$ as not only hyperonymic in regards to $\mathbf{S}$, but also presuppositional. This is expressed through the non-shading of the complementary habitat of $S_{i}$ instances compared to the intersection $S_{1}\cap S_{2}$. In contrast to diagrammatic negation (cf. Definition~\ref{Def:Negation}), which introduces indeterminacy, contrariety relates two determinate semantic configurations that are incompatible but not exhaustive.
\end{remark}

In order to represent the van Wolde composition of the square's elements and the transformation of $d_{1}$ into $d_{3}$ (e.g., as /life/ $\rightarrow$ /not-life/ $\rightarrow$ /death/) we provide the following two progressions that account for the existential states of semes given through $s_{1}\rightarrow \neg s_{1} \rightarrow s_{2}$, and $s_{2}\rightarrow \neg s_{2} \rightarrow s_{1}$:
\begin{align}\label{d1d2d3}
\resizebox{0.75\textwidth}{!}{
\begin{tikzpicture}[line width=0.5mm,font=\Large]
\begin{scope}[local bounding box=d1]
	\scope[even odd rule]
	\clip \circleB;
	\clip \circleC;
	\fill[red] \circleB ;
	\endscope
\draw \circleB node[label={[label distance=8mm,xshift=-3,font=\huge]-90:$S_{1}$}] (S1){};
\draw \circleC node [label={[label distance=8mm,xshift=3,font=\huge]-90:$S_{2}$}] (S2){};
\draw (0.5,0) circle (2) node[label={[label distance=20mm,font=\huge]90:$M$}]{};
\node[circle,minimum size=3mm,inner sep=0,fill=black] at (-0.5,0)(s1){};
\draw (0.5,0.5) circle (3) node[label={[label distance=30mm,font=\huge]90:$X$}]{};
\node[draw, rectangle, fit=(d1), inner sep=10pt, label={[font=\Huge]$d_{1}$}] (d1rec) {};
\end{scope}
\begin{scope}[local bounding box=d2,shift={(10,0)}]
	\scope[even odd rule]
	\clip \circleB;
	\clip \circleC;
	\fill[red] \circleB ;
	\endscope
\draw \circleB node[label={[label distance=8mm,xshift=-3,font=\huge]-90:$S_{1}$}] (S1){};
\draw \circleC node [label={[label distance=8mm,xshift=3,font=\huge]-90:$S_{2}$}] (S2){};
\draw (0.5,0) circle (2) node[label={[label distance=20mm,font=\huge]90:$M$}]{};
\node[circle,minimum size=3mm,inner sep=0,fill=black] at (1.5,0)(s2){};
\node[circle,minimum size=3mm,inner sep=0,fill=black] at (3,0)(x1){};
\draw(s2)--(x1);
\draw (0.5,0.5) circle (3) node[label={[label distance=30mm,font=\huge]90:$X$}]{};
\node[draw, rectangle, fit=(d2), inner sep=10pt, label={[font=\Huge]$d_{2}$}] (d2rec)  {};
\end{scope}
\begin{scope}[local bounding box=d3,shift={(20,0)}]
	\scope[even odd rule]
	\clip \circleB;
	\clip \circleC;
	\fill[red] \circleB ;
	\endscope
\draw \circleB node[label={[label distance=8mm,xshift=-3,font=\huge]-90:$S_{1}$}] (S1){};
\draw \circleC node [label={[label distance=8mm,xshift=3,font=\huge]-90:$S_{2}$}] (S2){};
\draw (0.5,0) circle (2) node[label={[label distance=20mm,font=\huge]90:$M$}]{};
\node[circle,minimum size=3mm,inner sep=0,fill=black] at (1.5,0)(s2){};
\draw (0.5,0.5) circle (3) node[label={[label distance=30mm,font=\huge]90:$X$}]{};
\node[draw, rectangle, fit=(d3), inner sep=10pt, label={[font=\Huge]$d_{3}$}] (d3rec) {};
\end{scope}
\draw[->,shorten >=5pt,shorten <=5pt](d1rec)--(d2rec);
\draw[->,shorten >=5pt,shorten <=5pt,dashed](d2rec)--(d3rec);
\end{tikzpicture}
} 
\end{align}
\begin{align}\label{d3d4d1}
\resizebox{0.75\textwidth}{!}{
\begin{tikzpicture}[line width=0.5mm,font=\Large]
\begin{scope}[local bounding box=d1]
	\scope[even odd rule]
	\clip \circleB;
	\clip \circleC;
	\fill[red] \circleB ;
	\endscope
\draw \circleB node[label={[label distance=8mm,xshift=-3,font=\huge]-90:$S_{1}$}] (S1){};
\draw \circleC node [label={[label distance=8mm,xshift=3,font=\huge]-90:$S_{2}$}] (S2){};
\draw (0.5,0) circle (2) node[label={[label distance=20mm,font=\huge]90:$M$}]{};
\node[circle,minimum size=3mm,inner sep=0,fill=black] at (1.5,0)(s2){};
\draw (0.5,0.5) circle (3) node[label={[label distance=30mm,font=\huge]90:$X$}]{};
\node[draw, rectangle, fit=(d1), inner sep=10pt, label={[font=\Huge]$d_{3}$}] (d1rec) {};
\end{scope}
\begin{scope}[local bounding box=d2,shift={(10,0)}]
	\scope[even odd rule]
	\clip \circleB;
	\clip \circleC;
	\fill[red] \circleB ;
	\endscope
\draw \circleB node[label={[label distance=8mm,xshift=-3,font=\huge]-90:$S_{1}$}] (S1){};
\draw \circleC node [label={[label distance=8mm,xshift=3,font=\huge]-90:$S_{2}$}] (S2){};
\draw (0.5,0) circle (2) node[label={[label distance=20mm,font=\huge]90:$M$}]{};
\node[circle,minimum size=3mm,inner sep=0,fill=black] at (-0.5,0)(s1){};
\node[circle,minimum size=3mm,inner sep=0,fill=black] at (-2,0)(x1){};
\draw(s1)--(x1);
\draw (0.5,0.5) circle (3) node[label={[label distance=30mm,font=\huge]90:$X$}]{};
\node[draw, rectangle, fit=(d2), inner sep=10pt, label={[font=\Huge]$d_{4}$}] (d2rec)  {};
\end{scope}
\begin{scope}[local bounding box=d3,shift={(20,0)}]
	\scope[even odd rule]
	\clip \circleB;
	\clip \circleC;
	\fill[red] \circleB ;
	\endscope
\draw \circleB node[label={[label distance=8mm,xshift=-3,font=\huge]-90:$S_{1}$}] (S1){};
\draw \circleC node [label={[label distance=8mm,xshift=3,font=\huge]-90:$S_{2}$}] (S2){};
\draw (0.5,0) circle (2) node[label={[label distance=20mm,font=\huge]90:$M$}]{};
\node[circle,minimum size=3mm,inner sep=0,fill=black] at (-0.5,0)(s1){};
\draw (0.5,0.5) circle (3) node[label={[label distance=30mm,font=\huge]90:$X$}]{};
\node[draw, rectangle, fit=(d3), inner sep=10pt, label={[font=\Huge]$d_{1}$}] (d3rec) {};
\end{scope}
\draw[->,shorten >=5pt,shorten <=5pt](d1rec)--(d2rec);
\draw[->,shorten >=5pt,shorten <=5pt,dashed](d2rec)--(d3rec);
\end{tikzpicture}
} 
\end{align}

We provide arrows between diagrams following after Greimas (cf. Figure~(\ref{EQSemioticSquare})). We note here that in the case of $d_{2}$ and $d_{4}$ as the negations of the seme instances in $d_{1}$ and $d_{3}$ respectively, we approach the diagrammatic description by defining a spider habitat in $d_{2}$ as 
\begin{equation}
SI(d_{2})=\{(1,\{(\{S_{2},M,X\},\{S_{1}\}),(\{X\},\{M, S_{1},S_{2}\})\})\},
\end{equation}
and similarly for $d_{4}$ as
\begin{equation}
SI(d_{4})=\{(1,\{(\{S_{1},M,X\},\{S_{2}\}),(\{X\},\{M, S_{1},S_{2}\})\})\}.
\end{equation}
It is important to note that the diagrammatic representation of $\neg s_{1}$ given by $d_{2}$ does not correspond to Boolean negation in classical logic or that found in the square of opposition of Aristotle. \citep{RodriguesDias2023} Rather, it captures the existential realisation of terms outside $S_{1}$ under conditions of semantic indeterminacy. In the structural semiotic view, contradiction is not purely truth-functional, but relational and generative, and allows for intermediate or virtualised semantic states. In terms of the Greimas square given in Example~\ref{GreimasExample}, this provides an interpretive semantics in which $\neg s_{1}$ as /not-life/ is either a term found in the set $S_{2}$ or some other term within the semantic universe $X$. The OR condition virtualises the instance of $\neg s_{1}$ and thereby provides an elementary axiological structure on the term at that position in the square. \citep[page 21]{GreimasCourtes1982} This is also true of the dual case of $\neg s_{2}$ as /not-death/. 

\begin{definition}[Diagrammatic negation]\label{Def:Negation}
Let $d=\langle L,Z,Z^{\ast},SI\rangle$ be a unitary spider diagram with $L(d)=\{S_{1},S_{2},M,X\}$. For a designated seme $S_i\in L(d)$, with complementary index $j\neq i$ and $\{i,j\}=\{1,2\}$, the \emph{diagrammatic negation} of $S_i$, denoted $\neg_{\mathrm{diag}}(S_i)$, is a unitary spider diagram defined as follows: 
\begin{enumerate}
\item the intersection zone is shaded:
\[
Z^{\ast}(d)=\{(\{S_1,S_2\},\varnothing)\},
\]
so that $S_1\cap S_2=\varnothing$,
\item $d$ contains exactly one spider;
\item the habitat of that spider is the union of the two zones not contained in $S_i$, namely
\[
\eta(s)=\{(\{S_j,M,X\},\{S_i\}),(\{X\},\{M,S_i,S_j\})\};
\]
equivalently,
\[
SI(d)=\left\{\left(1,\{(\{S_j,M,X\},\{S_i\}),(\{X\},\{M,S_i,S_j\})\}\right)\right\}.
\]
\end{enumerate}
\end{definition}

\begin{proposition}[Non-Boolean character of diagrammatic negation]\label{Prop:GreimasNegation}
Let $L=\{S_1,S_2,M,X\}$ with $S_1,S_2\subseteq M\subseteq X$ and $S_1\cap S_2=\varnothing$. For $i,j\in\{1,2\}$ with $i\neq j$, the habitat of $\neg_{\mathrm{diag}}(S_i)$ is not the complement of $S_i$ in $X$. Rather,
\[
\eta(\neg_{\mathrm{diag}}(S_i))=(S_j\setminus S_i)\;\cup\;\bigl(X\setminus(M\cup S_1\cup S_2)\bigr).
\]
Consequently,
\[
\eta(\neg_{\mathrm{diag}}(S_i)) \subsetneq X\setminus S_i
\]
whenever \(M\setminus(S_1\cup S_2)\neq\varnothing\).
\end{proposition}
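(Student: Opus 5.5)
The plan is to pass from the syntactic habitat in Definition~\ref{Def:Negation} to its set-theoretic denotation under an interpretation $I=(U,\Phi)$, using the zone semantics from the Remark, and then compare the result with $X\setminus S_i$. Throughout I identify each label with its interpretation, so $S_1=\Phi(S_1)$ and so on. I also read the outer zone $(\{X\},\{M,S_i,S_j\})$ as the part of $X$ lying outside $M$, which is the intended reading, since the bounding rectangle is not a contour.

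\emph{Step 1: compute the denotation of each zone.} By the Remark, the first zone $(\{S_j,M,X\},\{S_i\})$ denotes $S_j\cap M\cap X\cap (U\setminus S_i)$. Since $S_j\subseteq M\subseteq X$, this equals $S_j\setminus S_i$, and because $S_1\cap S_2=\varnothing$ it is just $S_j$. The second zone $(\{X\},\{M,S_i,S_j\})$ denotes $X\cap(U\setminus M)\cap(U\setminus S_i)\cap(U\setminus S_j)=X\setminus(M\cup S_1\cup S_2)$. A spider asserts membership in the union of the sets denoted by its habitat's zones, so the habitat denotes the union of these two sets. This gives exactly the displayed formula for $\eta(\neg_{\mathrm{diag}}(S_i))$.

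\emph{Step 2: the inclusion.} Both pieces avoid $S_i$. The first is $S_j\setminus S_i$, and the second lies outside $S_1\cup S_2\supseteq S_i$. Both also lie in $X$, using $S_j\subseteq X$. Hence $\eta(\neg_{\mathrm{diag}}(S_i))\subseteq X\setminus S_i$.

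\emph{Step 3: strictness.} Take $m\in M\setminus(S_1\cup S_2)$. Then $m\in X$ and $m\notin S_i$, so $m\in X\setminus S_i$. But $m\notin S_j$, so $m$ is not in the first piece, and $m\in M$, so it is not in the second piece. Therefore $m$ lies in $X\setminus S_i$ but not in the habitat, and the inclusion is proper. The first claim, that the habitat is not the complement of $S_i$ in $X$, is the same witness read as a counter-model. One exhibits an interpretation with $M\setminus(S_1\cup S_2)\neq\varnothing$, for instance $U=X=\{a,b,c\}$, $M=\{a,b,c\}$, $S_1=\{a\}$, $S_2=\{b\}$. In that interpretation the two sets differ, so the habitat cannot in general coincide with $X\setminus S_i$.

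\emph{Main obstacle.} The only delicate point is the mismatch between the diagram's zone notation and the stated formula. The zone $(\{S_j,M,X\},\{S_i\})$ omits nothing else from its "out" set, so one has to use the containments $S_j\subseteq M\subseteq X$ to justify that no further zones are involved. One also has to treat $X$ consistently as the ambient universe for the second zone. I would handle this by stating up front that, under these containments, $Z(d)$ consists of exactly the five non-shaded zones that are non-empty in this configuration. After that the computation is routine.
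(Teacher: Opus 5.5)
The paper states this proposition without proof; it is followed only by an interpretive Remark, so there is no argument of the paper's to compare with. Your zone-by-zone computation through the semantics $\Phi(in,out)$ is correct and is the natural verification. The two habitat zones denote $S_j$ and $X\setminus M$. Both lie in $X$ and avoid $S_i$, and any $m\in M\setminus(S_1\cup S_2)$ witnesses strictness, exactly as you say.

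Two refinements are worth making.

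First, your own identity shows that the unconditional first sentence of the proposition cannot hold in every interpretation. If $M=S_1\cup S_2$, then $S_j\cup(X\setminus M)=X\setminus S_i$. Your ``not in general'' counter-model reading is therefore the right one. A sharper way to say everything at once is at the level of zones. Among the zones not forced empty by the containments, those inside $X$ and outside $S_i$ are $(\{X\},\{M,S_1,S_2\})$, $(\{M,X\},\{S_1,S_2\})$ and $(\{S_j,M,X\},\{S_i\})$. The habitat is exactly these three minus $(\{M,X\},\{S_1,S_2\})$, whose denotation is $M\setminus(S_1\cup S_2)$. This single observation gives three things:
\begin{itemize}
\item the unconditional (syntactic) non-identity;
\item the inclusion;
\item the fact that the condition $M\setminus(S_1\cup S_2)\neq\varnothing$ is necessary as well as sufficient for strictness.
\end{itemize}

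Second, your ``main obstacle'' is not really one. The zone $(\{S_j,M,X\},\{S_i\})$ already has $in\cup out=L$, so it is a complete zone and nothing is omitted. Your computation depends only on the denotations of the two habitat zones, not on the exact composition of $Z(d)$. In any case, the phrase ``five non-shaded zones that are non-empty'' is inaccurate on two counts. Since $Z^{\ast}\subseteq Z$, the shaded zone belongs to $Z(d)$ as well. And $(\{M,X\},\{S_1,S_2\})$ may well be empty, which is precisely the degenerate case in which strictness fails.
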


\begin{remark}
Proposition~\ref{Prop:GreimasNegation} formalises the non-Boolean character of negation in the semiotic square. In classical logic, negation is interpreted as complement in a universal domain, so that $\neg S_i$ corresponds to $X \setminus S_i$. By contrast, in structural semiotics, negation is defined relationally within a system of oppositions: the term $\neg s_i$ is not an arbitrary element outside $s_i$, but a counter-position determined by its relation to the contrary term $s_j$ and the structure of the semantic universe. The diagrammatic construction reflects this distinction. The habitat of $\neg_{\mathrm{diag}}(S_i)$ is restricted to zones that are admissible relative to the semic opposition $(S_1,S_2)$ and the meta-level domain $M$, and therefore excludes regions that would be included under a Boolean complement. In particular, elements in $M \setminus (S_1 \cup S_2)$ are not part of the negation, even though they lie outside $S_i$ in the global sense. Thus, diagrammatic negation captures a structured form of semantic exclusion, rather than a truth-functional operation on sets.
\end{remark}

As we see in the Greimas square from Example~\ref{GreimasExample}, the diagrammatic negation of $S_i$ (e.g., /life/) represents the existential actualisation of an instance lying outside $S_i$ (i.e., /not-life/, which depending on the textualization could be: /purgatory/, or /hell/, or /stasis/ etc.), while leaving its realisation underdetermined between the contrary seme $S_j$ and the exterior region outside both $S_1$, $S_2$, and $M$ (i.e., some other yet to be categorised term within the semantic universe). This hyperotactic condition may be expressed as:
\[
(S_1 \cap S_2 = \varnothing) \;\wedge\;\exists x \big(x \in (S_j \setminus S_i)\;\vee\; x \in (X \setminus (S_1 \cup S_2 \cup M))\big).
\]
Here, the disjunction reflects the union of admissible habitats of the spider, rather than a truth-functional negation. As such, we follow the observation by H\'{e}bert that in structural semiotics the operation of contradiction focusses on the establishment of ``comparative relations.'' \citep[page 32]{Hebert2020} Thus, in diagrams of type $\neg_{\mathrm{diag}}(S_i)$ such as in Example~\ref{GreimasExample}, the spider encodes /not-life/ (resp. /not-death/) as a virtualised semantic configuration in which there may exist many other terms of the same semic category. Hence, diagrammatic negation reinforces the hyperotactic relation among terms, and is local and contrastive.

\begin{definition}[Diagrammatic implication]\label{Def:Implication}
Let $d_1 = \langle L, Z, Z^{\ast}, SI_1\rangle$ and $d_2 = \langle L, Z, Z^{\ast}, SI_2\rangle$ be unitary spider diagrams with
\[
L = \{S_1,S_2,M,X\}.
\]
Let $S_i \in L$ be a designated seme with complementary index $j \neq i$, where $\{i,j\}=\{1,2\}$. We say that $d_2$ is the \emph{diagrammatic implication} of $d_1$, written
\[
d_1 \Rightarrow_{\mathrm{diag}} d_2,
\]
if the following conditions hold:
\begin{enumerate}
\item $d_1 = \neg_{\mathrm{diag}}(S_i)$ is the diagrammatic negation of $S_i$,
\item $d_2$ is a unitary diagram containing exactly one spider whose habitat is the single zone
\[
\eta(s) = \{(\{S_j\},\{S_i\})\},
\]
that is, the spider is fully contained in $S_j \setminus S_i$,
\item the shaded zones are preserved:
\[
Z^{\ast}(d_1) = Z^{\ast}(d_2) = \{(\{S_1,S_2\},\varnothing)\},
\]
\item the transformation from $d_1$ to $d_2$ is monotonic with respect to habitat inclusion, i.e.
\[
\eta_{d_2}(s) \subseteq \eta_{d_1}(s),
\]
with strict inclusion corresponding to a proper refinement.
\end{enumerate}
\end{definition}

\begin{remark}
We see that diagrammatic implication corresponds to the restriction of a spider's habitat from a set of zones to a proper subset, resolving semantic indeterminacy into determinate actualisation. Diagrammatic implication is thus dual to diagrammatic negation: whereas negation expands the habitat of a spider to introduce semantic indeterminacy, implication restricts the habitat to resolve that indeterminacy---i.e. the OR condition collapses. In terms of the square given in Example~\ref{GreimasExample}, given $d_{4}=\neg_{\mathrm{diag}}(S_{2})$, then:
\[
d_{4} \Rightarrow_{\mathrm{diag}} d_{1} \cong /\text{not-death}/ \rightarrow /\mathrm{life}/. 
\]
Within an interpretive semantics of this square, the implication between /not-death/ and /life/ establishes the axiological value of /living/ in the Positive Deixis: i.e., the indeterminate status of the term /not-death/ is resolved as /life/ via the thymic category of euphoria. In this sense, implication may be understood as the diagrammatic realisation of a previously virtualised semantic configuration. \citet[page 255]{GreimasCourtes1982} define a realization as an element of the 3-tuple virtualisation/actualisation/realisation for which a modality such as `being-able-to-do' or `knowing-how-to-do' transforms a disjunction into a conjunction through a performance. As \citet[page 186]{Hebert2020} suggests ``the shift from negative competence to positive competence is a movement from the nonexistence of an action to its potentiality (ontological status: `possible'), whereas performance entails moving from the possibility of an action to its realization (ontological status: ` real').''
\end{remark}

Our formalisations of diagrammatic negation ($\neg_{\mathrm{diag}}$) and diagrammatic implication ($\Rightarrow_{\mathrm{diag}}$) may also be understood as sequences of transformations generated by inference rules on spider diagrams. This means that the core relation of diagrammatic contrariety ($\#_{\mathrm{diag}}$) between instances in $S_{1}$ and $S_{2}$ becomes the static counterpart to the dynamic operations of Greimasian negation and implication. While the latter describe transformations between semantic configurations, contrariety specifies a structural relation between configurations in which each seme is actualised in isolation from the other. In the next section, we make the transformations of $\neg_{\mathrm{diag}}$ and $\Rightarrow_{\mathrm{diag}}$ precise by identifying the rules that witness each transition, thereby interpreting the arrows of the semiotic square as derivations in a diagrammatic proof system.

\section{Diagrammatic proof trees witnessing semic entailments}\label{prooftreesdiagramming}

Intuitively, Greimasian contradiction corresponds to inference rules that expand the habitat of spiders (introducing semantic indeterminacy), while Greimasian implication corresponds to rules that restrict or refine habitats (resolving indeterminacy). Thus, the diagrammatic progressions given in Equations~(\ref{d1d2d3})--(\ref{d3d4d1}) may be understood as derivations in a formal system of spider diagram transformations. In this section, we make this correspondence explicit by identifying a set of inference rules, denoted $\mathcal{R}$, that follow those of \citet{HowseStapletonTaylor2005,UrbasJamnikStapletonFlower2012} which witness each transition, possibly after a finite sequence of structural normalisation steps ensuring that the diagrams satisfy the requirements of the inference rules. In this way, the arrows of the semiotic square---namely `contradiction' (\raisebox{0pt}{\tikz{\draw[->] (0,0)--(0.6,0);}}) and `implication' (\raisebox{0pt}{\tikz{\draw[thick,dashed,->] (0,0)--(0.6,0);}})---are interpreted as composable derivations in a diagrammatic proof system.

\begin{definition}[$\alpha$-diagram]\label{def:AlphaDiagram}
A unitary spider diagram is an $\alpha$-diagram if and only if the habitat of every spider is a single zone.
\end{definition}

\begin{definition}[Combine diagrams]\label{def:Combine}
Let $d_{0}$ and $d_{1}$ be unitary $\alpha$-diagrams such that $Z(d_{0})=Z(d_{1})$ or $d_{0}=\bot$ or $d_{1}=\top$. We denote their combination as $d^{\ast}=d_{0} \ast d_{1}$, such that a unitary $\alpha$-diagram is defined as follows:
\begin{enumerate}
\item if $d_{0}=\bot$, or $d_{1}=\bot$, then $d^{\ast}=\bot$,
\item If there is a zone that is shaded in one diagram but contains more spiders in the other diagram then $d_{0} \wedge d_{1}$ is unsatisfiable and $d_{\ast}=\bot$. That is to say that if there exists $z \in Z(d_{i}), i = 0, 1$, such that $z \in Z^{\ast}(d_{j})$, and $S(\{z\},d_{i})-S(\{z\},d_{j})\neq \varnothing$, where $j=1-i$, then $d^{\ast}=\bot$,
\item Otherwise $d^{\ast}$ is a unitary $\alpha$-diagram such that:
	\begin{itemize}
	\item the set of zones of the combined diagram is the union of the zone sets of the original diagrams: $Z(d^{\ast})=Z(d_{0}\cup Z(d_{1}))$;
	\item shaded zones in the combined diagram are shaded in one (or both) of the original diagrams: $Z^{\ast}(d^{\ast})=Z^{\ast}(d_{0}\cup Z^{\ast}(d_{1}))$;
	\item the number of spiders in any zone in the combined diagram is the maximum number of spiders inhabiting that zone in the original diagrams: $\forall z \in Z(d^{\ast}) \bullet S(\{z\},d^{\ast})=S(\{z\},d_{0})\cup S(\{z\},d_{1})$.
	\end{itemize}
\end{enumerate}
\end{definition}

\begin{definition}[Conjunction elimination]
Let $d_0$ and $d_1$ be unitary spider diagrams such that their combination $d = d_0 \ast d_1$ is defined (cf. Definition~\ref{def:Combine}). Then from $d$ we may derive either $d_0$ or $d_1$. That is, the following inference is admissible:
\[
\frac{d_0 \ast d_1}{d_i}
\quad \text{for } i \in \{0,1\}.
\]
The derived diagram $d_i$ is obtained by restricting the spider set of $d$ to those spiders whose habitats originate in $d_i$, while preserving:
\begin{enumerate}
    \item the set of contours: $L(d_i) = L(d)$,
    \item the set of zones: $Z(d_i) = Z(d)$,
    \item the shading: $Z^*(d_i) = Z^*(d)$.
\end{enumerate}
\end{definition}

\begin{definition}[Split a spider]
Let $d$ be a unitary diagram and let $r$, $r_{1}$ and $r_{2}$ be regions of $d$ such that $r = r_{1} \cup r_{2}$ and $r_{1} \cap r_{2} = \varnothing$. Let $s_{n}(r)$ be a spider in $d$ (with habitat $r$). Let $d_{1}$ and $d_{2}$ be unitary diagrams such that:
\begin{enumerate}
\item $Z(d) = Z(d_{1})=Z(d_{2})$,
\item $Z^{\ast}(d) = Z^{\ast}(d_{1})=Z^{\ast}(d_{2})$,
\item there exists spider $s_{1} \in S(d_{1})$ and $s_{2} \in S(d_{2})$ such that $\eta(s_{1})=r_{1} \wedge \eta(s_{2})=r_{2}$, and $S(d)-\{s_{n}(r)\}=S(d_{1})=\{s_{1}\}=S(d_{2})-\{s_{2}\}$.
\end{enumerate}
Then we may replace $d$ with $d_{1} \vee d_{2}$ and vice versa.
\end{definition}

\begin{definition}[Add feet to a spider]
Let $d$ be a unitary diagram which contains a spider $s_{n}(r)$ whose habitat $r$ does not contain a zone $z$ of $d$. Let $d^{\prime}$ be a unitary diagram such that:
\begin{enumerate}
\item $Z(d) = Z(d^{\prime})$,
\item $Z^{\ast}(d) = Z^{\ast}(d^{\prime})$,
\item there exists spider $s^{\ast} \in S(d^{\prime})$ such that $\eta(s^{\ast})=r \cup \{z\}$ amd $S(d^{\prime})-\{s^{\ast}\}=S(d^{\prime})-\{s_{n}(r)\}$.
\end{enumerate}
Then $d$ can be replaced with $d^{\prime}$.
\end{definition}

\begin{definition}[Erase a spider]
Let $d$ be a unitary diagram such that there exists $(n, r) \in SI(d)$ with $r \cap Z^{\ast}(d) = \varnothing$. Let $d^{\prime})$ be a unitary diagram such that:
\begin{enumerate}
\item $Z(d) = Z(d^{\prime})$,
\item $Z^{\ast}(d) = Z^{\ast}(d^{\prime})$,
\item and $S(d)-\{s_{n}(r)\}=S(d^{\prime})$.
\end{enumerate}
Then $d$ can be replaced with $d^{\prime}$.
\end{definition}

\begin{definition}[Copy a spider]
Let $d_{1}$ and $d_{2}$ be unitary spider diagrams with corresponding regions, $r_{1}$ and $r_{2}$ such that:
\begin{enumerate}
\item $r_{1}$ contains no shaded zones
\item in $d_{1}$, all of the spiders that have a foot in $r_{1}$ are also found in $S(r_{1},d_{1})$,
\item there exists a habitat preserving injective but not surjective map $\xi$ from $S(r_{1},d_{1})$ to $S(r_{2},d_{2})$ such that $\eta_{d_{1}}(s_{1})$ corresponds to $\eta_{d_{2}}(\xi(s_{1}))$
\end{enumerate}
Choose a spider, $s$, that is in $S(r_2,d_2)$ but is not in the image of $\xi$ such that there exists a region, $r^{\prime}$ in $d_1$ that corresponds to $\eta_{d_{2}}(s)$. Let $d^{\prime}$ be a copy of $d_1$ except $d^{\prime}$ contains $s$ with habitat $r^{\prime}$. Then $d_1 \wedge d_{2}$ is logically equivalent to $d^{\prime}_{1}\wedge d_{2}$
\end{definition}

\begin{definition}[Idempotency of $\vee$]
Let $d$ be a spider diagram. We can replace $d$ by $d \vee d$ and vice versa.
\end{definition}

\begin{figure}
\centering
\subfloat[b][Proof tree $T_1$ deriving $d_2 = \neg_{\mathrm{diag}}(S_1)$ from $d_1$.\label{ProofTree1}]{%
\begin{adjustbox}{width=0.4\linewidth}
\begin{tikzpicture}[line width=0.5mm,font=\Large]
\def\circleB{(0,0) circle (1)}
\def\circleC{(1,0) circle (1)}
\def\circleA{(0.5,1) circle (1)}
\begin{scope}[local bounding box=d1]
	\scope[even odd rule]
	\clip \circleB;
	\clip \circleC;
	\fill[red] \circleB ;
	\endscope
\draw \circleB node[label={[label distance=8mm,xshift=-5]-90:$S_{1}$}] (S1){};
\draw \circleC node [label={[label distance=8mm,xshift=5]-90:$S_{2}$}] (S2){};
\draw (0.5,0) circle (2) node[label={[label distance=20mm]90:$M$}] (M) {};
\node[circle,minimum size=3mm,inner sep=0,fill=black] at (-0.5,0)(s1){};
\draw (0.5,0.5) circle (3) node[label={[label distance=30mm]90:$X$}]{};
\node[draw, rectangle, fit=(d1), inner sep=10pt, label={[font=\Huge]$d_{1}$}] (d1BOX) {};
\end{scope}
\begin{scope}[local bounding box=d2,shift={($(d1BOX) + (-0.5,-11)$)}]
	\scope[even odd rule]
	\clip \circleB;
	\clip \circleC;
	\fill[red] \circleB ;
	\endscope
\draw \circleB node[label={[label distance=8mm,xshift=-5]-90:$S_{1}$}] (S1){};
\draw \circleC node [label={[label distance=8mm,xshift=5]-90:$S_{2}$}] (S2){};
\draw (0.5,0) circle (2) node[label={[label distance=20mm]90:$M$}] (M) {};
\node[circle,minimum size=3mm,inner sep=0,fill=black] at (-0.5,0)(s1){};
\node[circle,minimum size=3mm,inner sep=0,fill=black] at (1.5,0)(s2){};
\draw (s1)--(s2);
\draw (0.5,0.5) circle (3) node[label={[label distance=30mm]90:$X$}]{};
\node[draw, rectangle, fit=(d2), inner sep=10pt] (d2BOX) {};
\end{scope}
\begin{scope}[local bounding box=d3,shift={($(d2BOX) + (-4,-11)$)}] 
	\scope[even odd rule]
	\clip \circleB;
	\clip \circleC;
	\fill[red] \circleB ;
	\endscope
\draw \circleB node[label={[label distance=8mm,xshift=-5]-90:$S_{1}$}] (S1){};
\draw \circleC node [label={[label distance=8mm,xshift=5]-90:$S_{2}$}] (S2){};
\draw (0.5,0) circle (2) node[label={[label distance=20mm]90:$M$}] (M) {};
\node[circle,minimum size=3mm,inner sep=0,fill=black] at (-0.5,0)(s1){};
\draw (0.5,0.5) circle (3) node[label={[label distance=30mm]90:$X$}]{};
\begin{scope}[xshift=70mm]
	\scope[even odd rule]
	\clip \circleB;
	\clip \circleC;
	\fill[red] \circleB ;
	\endscope
\draw \circleB node[label={[label distance=8mm,xshift=-5]-90:$S_{1}$}] (S1){};
\draw \circleC node [label={[label distance=8mm,xshift=5]-90:$S_{2}$}] (S2){};
\draw (0.5,0) circle (2) node[label={[label distance=20mm]90:$M$}] (M) {};
\node[circle,minimum size=3mm,inner sep=0,fill=black] at (1.5,0)(s2){};
\draw (0.5,0.5) circle (3) node[label={[label distance=30mm]90:$X$}]{};
\node[draw, rectangle, fit=(d3), inner sep=10pt] (d3BOX) {};
\node[anchor=north west,xshift=5,yshift=-5] at (d3BOX.north west) {\Huge$\boxed\vee$};
\end{scope}
\end{scope}
\begin{scope}[local bounding box=d4,shift={($(d3BOX.south) + (-10.5,-7)$)}]
	\scope[even odd rule]
	\clip \circleB;
	\clip \circleC;
	\fill[red] \circleB ;
	\endscope
\draw \circleB node[label={[label distance=8mm,xshift=-5]-90:$S_{1}$}] (S1){};
\draw \circleC node [label={[label distance=8mm,xshift=5]-90:$S_{2}$}] (S2){};
\draw (0.5,0) circle (2) node[label={[label distance=20mm]90:$M$}] (M) {};
\node[circle,minimum size=3mm,inner sep=0,fill=black] at (-0.5,0)(s1){};
\draw (0.5,0.5) circle (3) node[label={[label distance=30mm]90:$X$}]{};
\node[draw, rectangle, fit=(d4), inner sep=10pt] (d4BOX) {};
\end{scope}
\begin{scope}[local bounding box=d5,shift={($(d3BOX.south) + (10.5,-7)$)}]
	\scope[even odd rule]
	\clip \circleB;
	\clip \circleC;
	\fill[red] \circleB ;
	\endscope
\draw \circleB node[label={[label distance=8mm,xshift=-5]-90:$S_{1}$}] (S1){};
\draw \circleC node [label={[label distance=8mm,xshift=5]-90:$S_{2}$}] (S2){};
\draw (0.5,0) circle (2) node[label={[label distance=20mm]90:$M$}] (M) {};
\node[circle,minimum size=3mm,inner sep=0,fill=black] at (1.5,0)(s2){};
\draw (0.5,0.5) circle (3) node[label={[label distance=30mm]90:$X$}]{};
\node[draw, rectangle, fit=(d5), inner sep=10pt] (d5BOX) {};
\end{scope}
\begin{scope}[local bounding box=d6,shift={($(d4BOX.south) + (-4,-7)$)}]
	\scope[even odd rule]
	\clip \circleB;
	\clip \circleC;
	\fill[red] \circleB ;
	\endscope
\draw \circleB node[label={[label distance=8mm,xshift=-5]-90:$S_{1}$}] (S1){};
\draw \circleC node [label={[label distance=8mm,xshift=5]-90:$S_{2}$}] (S2){};
\draw (0.5,0) circle (2) node[label={[label distance=20mm]90:$M$}] (M) {};
\node[circle,minimum size=3mm,inner sep=0,fill=black] at (-0.5,0)(s1){};
\draw (0.5,0.5) circle (3) node[label={[label distance=30mm]90:$X$}]{};
\begin{scope}[xshift=70mm]
	\scope[even odd rule]
	\clip \circleB;
	\clip \circleC;
	\fill[red] \circleB ;
	\endscope
\draw \circleB node[label={[label distance=8mm,xshift=-5]-90:$S_{1}$}] (S1){};
\draw \circleC node [label={[label distance=8mm,xshift=5]-90:$S_{2}$}] (S2){};
\draw (0.5,0) circle (2) node[label={[label distance=20mm]90:$M$}] (M) {};
\node[circle,minimum size=3mm,inner sep=0,fill=black] at (1.5,0)(s2){};
\draw (0.5,0.5) circle (3) node[label={[label distance=30mm]90:$X$}]{};
\node[draw, rectangle, fit=(d6), inner sep=10pt] (d6BOX) {};
\node[anchor=north west,xshift=5,yshift=-5] at (d6BOX.north west) {\Huge$\boxed\wedge$};
\end{scope}
\end{scope}
\begin{scope}[local bounding box=d7,shift={($(d6BOX) + (-0.5,-11)$)}]
	\scope[even odd rule]
	\clip \circleB;
	\clip \circleC;
	\fill[red] \circleB ;
	\endscope
\draw \circleB node[label={[label distance=8mm,xshift=-5]-90:$S_{1}$}] (S1){};
\draw \circleC node [label={[label distance=8mm,xshift=5]-90:$S_{2}$}] (S2){};
\draw (0.5,0) circle (2) node[label={[label distance=20mm]90:$M$}] (M) {};
\node[circle,minimum size=3mm,inner sep=0,fill=black] at (1.5,0)(s2){};
\node[circle,minimum size=3mm,inner sep=0,fill=black] at (-0.5,0)(s1){};
\draw (0.5,0.5) circle (3) node[label={[label distance=30mm]90:$X$}]{};
\node[draw, rectangle, fit=(d7), inner sep=10pt] (d7BOX) {};
\end{scope}
\begin{scope}[local bounding box=d8,shift={($(d7BOX.south) + (-0.5,-7)$)}]
	\scope[even odd rule]
	\clip \circleB;
	\clip \circleC;
	\fill[red] \circleB ;
	\endscope
\draw \circleB node[label={[label distance=8mm,xshift=-5]-90:$S_{1}$}] (S1){};
\draw \circleC node [label={[label distance=8mm,xshift=5]-90:$S_{2}$}] (S2){};
\draw (0.5,0) circle (2) node[label={[label distance=20mm]90:$M$}] (M) {};
\node[circle,minimum size=3mm,inner sep=0,fill=black] at (1.5,0)(s2){};
\draw (0.5,0.5) circle (3) node[label={[label distance=30mm]90:$X$}]{};
\node[draw, rectangle, fit=(d8), inner sep=10pt] (d8BOX) {};
\end{scope}
\begin{scope}[local bounding box=d9,shift={($(d8BOX.south) + (-0.5,-7)$)}]
	\scope[even odd rule]
	\clip \circleB;
	\clip \circleC;
	\fill[red] \circleB ;
	\endscope
\draw \circleB node[label={[label distance=8mm,xshift=-5]-90:$S_{1}$}] (S1){};
\draw \circleC node [label={[label distance=8mm,xshift=5]-90:$S_{2}$}] (S2){};
\draw (0.5,0) circle (2) node[label={[label distance=20mm]90:$M$}] (M) {};
\node[circle,minimum size=3mm,inner sep=0,fill=black] at (1.5,0)(s2){};
\node[circle,minimum size=3mm,inner sep=0,fill=black] at (3,0)(x1){};
\draw(s2)--(x1);
\draw (0.5,0.5) circle (3) node[label={[label distance=30mm]90:$X$}]{};
\node[draw, rectangle, fit=(d9), inner sep=10pt] (d9BOX) {};
\end{scope}
\begin{scope}[local bounding box=d10,shift={($(d5BOX.south) + (-0.5,-7)$)}]
	\scope[even odd rule]
	\clip \circleB;
	\clip \circleC;
	\fill[red] \circleB ;
	\endscope
\draw \circleB node[label={[label distance=8mm,xshift=-5]-90:$S_{1}$}] (S1){};
\draw \circleC node [label={[label distance=8mm,xshift=5]-90:$S_{2}$}] (S2){};
\draw (0.5,0) circle (2) node[label={[label distance=20mm]90:$M$}] (M) {};
\node[circle,minimum size=3mm,inner sep=0,fill=black] at (1.5,0)(s2){};
\node[circle,minimum size=3mm,inner sep=0,fill=black] at (3,0)(x1){};
\draw(s2)--(x1);
\draw (0.5,0.5) circle (3) node[label={[label distance=30mm]90:$X$}]{};
\node[draw, rectangle, fit=(d10), inner sep=10pt] (d10BOX) {};
\end{scope}
\begin{scope}[local bounding box=d11,shift={($(d2BOX.south) + (3,-67)$)}] 
	\scope[even odd rule]
	\clip \circleB;
	\clip \circleC;
	\fill[red] \circleB ;
	\endscope
\draw \circleB node[label={[label distance=8mm,xshift=-5]-90:$S_{1}$}] (S1){};
\draw \circleC node [label={[label distance=8mm,xshift=5]-90:$S_{2}$}] (S2){};
\draw (0.5,0) circle (2) node[label={[label distance=20mm]90:$M$}] (M) {};
\node[circle,minimum size=3mm,inner sep=0,fill=black] at (1.5,0)(s2){};
\node[circle,minimum size=3mm,inner sep=0,fill=black] at (3,0)(x1){};
\draw(s2)--(x1);
\draw (0.5,0.5) circle (3) node[label={[label distance=30mm]90:$X$}]{};
\begin{scope}[xshift=-70mm]
	\scope[even odd rule]
	\clip \circleB;
	\clip \circleC;
	\fill[red] \circleB ;
	\endscope
\draw \circleB node[label={[label distance=8mm,xshift=-5]-90:$S_{1}$}] (S1){};
\draw \circleC node [label={[label distance=8mm,xshift=5]-90:$S_{2}$}] (S2){};
\draw (0.5,0) circle (2) node[label={[label distance=20mm]90:$M$}] (M) {};
\node[circle,minimum size=3mm,inner sep=0,fill=black] at (1.5,0)(s2){};
\node[circle,minimum size=3mm,inner sep=0,fill=black] at (3,0)(x1){};
\draw(s2)--(x1);
\draw (0.5,0.5) circle (3) node[label={[label distance=30mm]90:$X$}]{};
\node[draw, rectangle, fit=(d11), inner sep=10pt] (d11BOX) {};
\node[anchor=north west,xshift=5,yshift=-5] at (d11BOX.north west) {\Huge$\boxed\vee$};
\end{scope}
\end{scope}
\begin{scope}[local bounding box=d12,shift={($(d11BOX.south) + (-0.5,-7)$)}]
	\scope[even odd rule]
	\clip \circleB;
	\clip \circleC;
	\fill[red] \circleB ;
	\endscope
\draw \circleB node[label={[label distance=8mm,xshift=-5]-90:$S_{1}$}] (S1){};
\draw \circleC node [label={[label distance=8mm,xshift=5]-90:$S_{2}$}] (S2){};
\draw (0.5,0) circle (2) node[label={[label distance=20mm]90:$M$}] (M) {};
\node[circle,minimum size=3mm,inner sep=0,fill=black] at (1.5,0)(s2){};
\node[circle,minimum size=3mm,inner sep=0,fill=black] at (3,0)(x1){};
\draw(s2)--(x1);
\draw (0.5,0.5) circle (3) node[label={[label distance=30mm]90:$X$}]{};
\node[draw, rectangle, fit=(d12), inner sep=10pt,label={[font=\Huge]-90:$d_{2}$}] (d12BOX) {};
\end{scope}
\draw[thick,->,double equal sign distance](d1BOX.south)--node[right,label={[font=\Huge]180:\scshape{AddFeet}}]{}(d2BOX.north);
\draw[thick,->,double equal sign distance](d2BOX.south)--node[right,label={[font=\Huge]180:\scshape{SplitSpider}}]{}(d3BOX.north);
\draw[thick,->,double equal sign distance,shorten >=25pt](d3BOX.south)--(d4BOX.north);
\draw[thick,->,double equal sign distance,shorten >=25pt](d3BOX.south)--(d5BOX.north);
\draw[thick,->,double equal sign distance](d4BOX.south)--node[right,label={[font=\Huge]180:\scshape{Combine}}]{}(d6BOX.north);
\draw[thick,->,double equal sign distance](d6BOX.south)--(d7BOX.north);
\draw[thick,->,double equal sign distance](d7BOX.south)--node[right,label={[font=\Huge]180:\scshape{EraseSpider}}]{}(d8BOX.north);
\draw[thick,->,double equal sign distance](d8BOX.south)--node[right,label={[font=\Huge]180:\scshape{AddFeet}}]{}(d9BOX.north);
\draw[thick,->,double equal sign distance](d5BOX.south)--node[right,label={[font=\Huge]0:\scshape{AddFeet}}]{}(d10BOX.north);
\draw[thick,->,double equal sign distance,shorten >=5pt](d10BOX.south)--(d11BOX);
\draw[thick,->,double equal sign distance,shorten >=25pt](d9BOX.south)--(d11BOX.north);
\draw[thick,->,double equal sign distance](d11BOX.south)--node[right,label={[font=\Huge]180:\scshape{Idempotency}}]{}(d12BOX.north);
\end{tikzpicture}
\end{adjustbox}
}
\hspace{20pt}
\subfloat[b][Proof tree $T_2$ deriving $d_4 = \neg_{\mathrm{diag}}(S_2)$ from $d_3$.\label{ProofTree2}]{%
\begin{adjustbox}{width=0.4\linewidth}
\begin{tikzpicture}[line width=0.5mm,font=\Large]
\def\circleB{(0,0) circle (1)}
\def\circleC{(1,0) circle (1)}
\def\circleA{(0.5,1) circle (1)}
\begin{scope}[local bounding box=d1]
	\scope[even odd rule]
	\clip \circleB;
	\clip \circleC;
	\fill[red] \circleB ;
	\endscope
\draw \circleB node[label={[label distance=8mm,xshift=-5]-90:$S_{1}$}] (S1){};
\draw \circleC node [label={[label distance=8mm,xshift=5]-90:$S_{2}$}] (S2){};
\draw (0.5,0) circle (2) node[label={[label distance=20mm]90:$M$}] (M) {};
\node[circle,minimum size=3mm,inner sep=0,fill=black] at (1.5,0)(s1){};
\draw (0.5,0.5) circle (3) node[label={[label distance=30mm]90:$X$}]{};
\node[draw, rectangle, fit=(d1), inner sep=10pt, label={[font=\Huge]$d_{3}$}] (d1BOX) {};
\end{scope}
\begin{scope}[local bounding box=d2,shift={($(d1BOX) + (-0.5,-11)$)}]
	\scope[even odd rule]
	\clip \circleB;
	\clip \circleC;
	\fill[red] \circleB ;
	\endscope
\draw \circleB node[label={[label distance=8mm,xshift=-5]-90:$S_{1}$}] (S1){};
\draw \circleC node [label={[label distance=8mm,xshift=5]-90:$S_{2}$}] (S2){};
\draw (0.5,0) circle (2) node[label={[label distance=20mm]90:$M$}] (M) {};
\node[circle,minimum size=3mm,inner sep=0,fill=black] at (1.5,0)(s1){};
\node[circle,minimum size=3mm,inner sep=0,fill=black] at (-0.5,0)(s2){};
\draw (s1)--(s2);
\draw (0.5,0.5) circle (3) node[label={[label distance=30mm]90:$X$}]{};
\node[draw, rectangle, fit=(d2), inner sep=10pt] (d2BOX) {};
\end{scope}
\begin{scope}[local bounding box=d3,shift={($(d2BOX) + (-4,-11)$)}] 
	\scope[even odd rule]
	\clip \circleB;
	\clip \circleC;
	\fill[red] \circleB ;
	\endscope
\draw \circleB node[label={[label distance=8mm,xshift=-5]-90:$S_{1}$}] (S1){};
\draw \circleC node [label={[label distance=8mm,xshift=5]-90:$S_{2}$}] (S2){};
\draw (0.5,0) circle (2) node[label={[label distance=20mm]90:$M$}] (M) {};
\node[circle,minimum size=3mm,inner sep=0,fill=black] at (1.5,0)(s1){};
\draw (0.5,0.5) circle (3) node[label={[label distance=30mm]90:$X$}]{};
\begin{scope}[xshift=70mm]
	\scope[even odd rule]
	\clip \circleB;
	\clip \circleC;
	\fill[red] \circleB ;
	\endscope
\draw \circleB node[label={[label distance=8mm,xshift=-5]-90:$S_{1}$}] (S1){};
\draw \circleC node [label={[label distance=8mm,xshift=5]-90:$S_{2}$}] (S2){};
\draw (0.5,0) circle (2) node[label={[label distance=20mm]90:$M$}] (M) {};
\node[circle,minimum size=3mm,inner sep=0,fill=black] at (-0.5,0)(s2){};
\draw (0.5,0.5) circle (3) node[label={[label distance=30mm]90:$X$}]{};
\node[draw, rectangle, fit=(d3), inner sep=10pt] (d3BOX) {};
\node[anchor=north west,xshift=5,yshift=-5] at (d3BOX.north west) {\Huge$\boxed\vee$};
\end{scope}
\end{scope}
\begin{scope}[local bounding box=d4,shift={($(d3BOX.south) + (-10.5,-7)$)}]
	\scope[even odd rule]
	\clip \circleB;
	\clip \circleC;
	\fill[red] \circleB ;
	\endscope
\draw \circleB node[label={[label distance=8mm,xshift=-5]-90:$S_{1}$}] (S1){};
\draw \circleC node [label={[label distance=8mm,xshift=5]-90:$S_{2}$}] (S2){};
\draw (0.5,0) circle (2) node[label={[label distance=20mm]90:$M$}] (M) {};
\node[circle,minimum size=3mm,inner sep=0,fill=black] at (1.5,0)(s1){};
\draw (0.5,0.5) circle (3) node[label={[label distance=30mm]90:$X$}]{};
\node[draw, rectangle, fit=(d4), inner sep=10pt] (d4BOX) {};
\end{scope}
\begin{scope}[local bounding box=d5,shift={($(d3BOX.south) + (10.5,-7)$)}]
	\scope[even odd rule]
	\clip \circleB;
	\clip \circleC;
	\fill[red] \circleB ;
	\endscope
\draw \circleB node[label={[label distance=8mm,xshift=-5]-90:$S_{1}$}] (S1){};
\draw \circleC node [label={[label distance=8mm,xshift=5]-90:$S_{2}$}] (S2){};
\draw (0.5,0) circle (2) node[label={[label distance=20mm]90:$M$}] (M) {};
\node[circle,minimum size=3mm,inner sep=0,fill=black] at (-0.5,0)(s2){};
\draw (0.5,0.5) circle (3) node[label={[label distance=30mm]90:$X$}]{};
\node[draw, rectangle, fit=(d5), inner sep=10pt] (d5BOX) {};
\end{scope}
\begin{scope}[local bounding box=d6,shift={($(d4BOX.south) + (-4,-7)$)}]
	\scope[even odd rule]
	\clip \circleB;
	\clip \circleC;
	\fill[red] \circleB ;
	\endscope
\draw \circleB node[label={[label distance=8mm,xshift=-5]-90:$S_{1}$}] (S1){};
\draw \circleC node [label={[label distance=8mm,xshift=5]-90:$S_{2}$}] (S2){};
\draw (0.5,0) circle (2) node[label={[label distance=20mm]90:$M$}] (M) {};
\node[circle,minimum size=3mm,inner sep=0,fill=black] at (1.5,0)(s1){};
\draw (0.5,0.5) circle (3) node[label={[label distance=30mm]90:$X$}]{};
\begin{scope}[xshift=70mm]
	\scope[even odd rule]
	\clip \circleB;
	\clip \circleC;
	\fill[red] \circleB ;
	\endscope
\draw \circleB node[label={[label distance=8mm,xshift=-5]-90:$S_{1}$}] (S1){};
\draw \circleC node [label={[label distance=8mm,xshift=5]-90:$S_{2}$}] (S2){};
\draw (0.5,0) circle (2) node[label={[label distance=20mm]90:$M$}] (M) {};
\node[circle,minimum size=3mm,inner sep=0,fill=black] at (-0.5,0)(s2){};
\draw (0.5,0.5) circle (3) node[label={[label distance=30mm]90:$X$}]{};
\node[draw, rectangle, fit=(d6), inner sep=10pt] (d6BOX) {};
\node[anchor=north west,xshift=5,yshift=-5] at (d6BOX.north west) {\Huge$\boxed\wedge$};
\end{scope}
\end{scope}
\begin{scope}[local bounding box=d7,shift={($(d6BOX) + (-0.5,-11)$)}]
	\scope[even odd rule]
	\clip \circleB;
	\clip \circleC;
	\fill[red] \circleB ;
	\endscope
\draw \circleB node[label={[label distance=8mm,xshift=-5]-90:$S_{1}$}] (S1){};
\draw \circleC node [label={[label distance=8mm,xshift=5]-90:$S_{2}$}] (S2){};
\draw (0.5,0) circle (2) node[label={[label distance=20mm]90:$M$}] (M) {};
\node[circle,minimum size=3mm,inner sep=0,fill=black] at (-0.5,0)(s2){};
\node[circle,minimum size=3mm,inner sep=0,fill=black] at (1.5,0)(s1){};
\draw (0.5,0.5) circle (3) node[label={[label distance=30mm]90:$X$}]{};
\node[draw, rectangle, fit=(d7), inner sep=10pt] (d7BOX) {};
\end{scope}
\begin{scope}[local bounding box=d8,shift={($(d7BOX.south) + (-0.5,-7)$)}]
	\scope[even odd rule]
	\clip \circleB;
	\clip \circleC;
	\fill[red] \circleB ;
	\endscope
\draw \circleB node[label={[label distance=8mm,xshift=-5]-90:$S_{1}$}] (S1){};
\draw \circleC node [label={[label distance=8mm,xshift=5]-90:$S_{2}$}] (S2){};
\draw (0.5,0) circle (2) node[label={[label distance=20mm]90:$M$}] (M) {};
\node[circle,minimum size=3mm,inner sep=0,fill=black] at (-0.5,0)(s2){};
\draw (0.5,0.5) circle (3) node[label={[label distance=30mm]90:$X$}]{};
\node[draw, rectangle, fit=(d8), inner sep=10pt] (d8BOX) {};
\end{scope}
\begin{scope}[local bounding box=d9,shift={($(d8BOX.south) + (-0.5,-7)$)}]
	\scope[even odd rule]
	\clip \circleB;
	\clip \circleC;
	\fill[red] \circleB ;
	\endscope
\draw \circleB node[label={[label distance=8mm,xshift=-5]-90:$S_{1}$}] (S1){};
\draw \circleC node [label={[label distance=8mm,xshift=5]-90:$S_{2}$}] (S2){};
\draw (0.5,0) circle (2) node[label={[label distance=20mm]90:$M$}] (M) {};
\node[circle,minimum size=3mm,inner sep=0,fill=black] at (-0.5,0)(s2){};
\node[circle,minimum size=3mm,inner sep=0,fill=black] at (-2,0)(x1){};
\draw(s2)--(x1);
\draw (0.5,0.5) circle (3) node[label={[label distance=30mm]90:$X$}]{};
\node[draw, rectangle, fit=(d9), inner sep=10pt] (d9BOX) {};
\end{scope}
\begin{scope}[local bounding box=d10,shift={($(d5BOX.south) + (-0.5,-7)$)}]
	\scope[even odd rule]
	\clip \circleB;
	\clip \circleC;
	\fill[red] \circleB ;
	\endscope
\draw \circleB node[label={[label distance=8mm,xshift=-5]-90:$S_{1}$}] (S1){};
\draw \circleC node [label={[label distance=8mm,xshift=5]-90:$S_{2}$}] (S2){};
\draw (0.5,0) circle (2) node[label={[label distance=20mm]90:$M$}] (M) {};
\node[circle,minimum size=3mm,inner sep=0,fill=black] at (-0.5,0)(s2){};
\node[circle,minimum size=3mm,inner sep=0,fill=black] at (-2,0)(x1){};
\draw(s2)--(x1);
\draw (0.5,0.5) circle (3) node[label={[label distance=30mm]90:$X$}]{};
\node[draw, rectangle, fit=(d10), inner sep=10pt] (d10BOX) {};
\end{scope}
\begin{scope}[local bounding box=d11,shift={($(d2BOX.south) + (3,-67)$)}] 
	\scope[even odd rule]
	\clip \circleB;
	\clip \circleC;
	\fill[red] \circleB ;
	\endscope
\draw \circleB node[label={[label distance=8mm,xshift=-5]-90:$S_{1}$}] (S1){};
\draw \circleC node [label={[label distance=8mm,xshift=5]-90:$S_{2}$}] (S2){};
\draw (0.5,0) circle (2) node[label={[label distance=20mm]90:$M$}] (M) {};
\node[circle,minimum size=3mm,inner sep=0,fill=black] at (-0.5,0)(s2){};
\node[circle,minimum size=3mm,inner sep=0,fill=black] at (-2,0)(x1){};
\draw(s2)--(x1);
\draw (0.5,0.5) circle (3) node[label={[label distance=30mm]90:$X$}]{};
\begin{scope}[xshift=-70mm]
	\scope[even odd rule]
	\clip \circleB;
	\clip \circleC;
	\fill[red] \circleB ;
	\endscope
\draw \circleB node[label={[label distance=8mm,xshift=-5]-90:$S_{1}$}] (S1){};
\draw \circleC node [label={[label distance=8mm,xshift=5]-90:$S_{2}$}] (S2){};
\draw (0.5,0) circle (2) node[label={[label distance=20mm]90:$M$}] (M) {};
\node[circle,minimum size=3mm,inner sep=0,fill=black] at (-0.5,0)(s2){};
\node[circle,minimum size=3mm,inner sep=0,fill=black] at (-2,0)(x1){};
\draw(s2)--(x1);
\draw (0.5,0.5) circle (3) node[label={[label distance=30mm]90:$X$}]{};
\node[draw, rectangle, fit=(d11), inner sep=10pt] (d11BOX) {};
\node[anchor=north west,xshift=5,yshift=-5] at (d11BOX.north west) {\Huge$\boxed\vee$};
\end{scope}
\end{scope}
\begin{scope}[local bounding box=d12,shift={($(d11BOX.south) + (-0.5,-7)$)}]
	\scope[even odd rule]
	\clip \circleB;
	\clip \circleC;
	\fill[red] \circleB ;
	\endscope
\draw \circleB node[label={[label distance=8mm,xshift=-5]-90:$S_{1}$}] (S1){};
\draw \circleC node [label={[label distance=8mm,xshift=5]-90:$S_{2}$}] (S2){};
\draw (0.5,0) circle (2) node[label={[label distance=20mm]90:$M$}] (M) {};
\node[circle,minimum size=3mm,inner sep=0,fill=black] at (-0.5,0)(s2){};
\node[circle,minimum size=3mm,inner sep=0,fill=black] at (-2,0)(x1){};
\draw(s2)--(x1);
\draw (0.5,0.5) circle (3) node[label={[label distance=30mm]90:$X$}]{};
\node[draw, rectangle, fit=(d12), inner sep=10pt,label={[font=\Huge]-90:$d_{4}$}] (d12BOX) {};
\end{scope}
\draw[thick,->,double equal sign distance](d1BOX.south)--node[right,label={[font=\Huge]180:\scshape{AddFeet}}]{}(d2BOX.north);
\draw[thick,->,double equal sign distance](d2BOX.south)--node[right,label={[font=\Huge]180:\scshape{SplitSpider}}]{}(d3BOX.north);
\draw[thick,->,double equal sign distance,shorten >=25pt](d3BOX.south)--(d4BOX.north);
\draw[thick,->,double equal sign distance,shorten >=25pt](d3BOX.south)--(d5BOX.north);
\draw[thick,->,double equal sign distance](d4BOX.south)--node[right,label={[font=\Huge]180:\scshape{Combine}}]{}(d6BOX.north);
\draw[thick,->,double equal sign distance](d6BOX.south)--(d7BOX.north);
\draw[thick,->,double equal sign distance](d7BOX.south)--node[right,label={[font=\Huge]180:\scshape{EraseSpider}}]{}(d8BOX.north);
\draw[thick,->,double equal sign distance](d8BOX.south)--node[right,label={[font=\Huge]180:\scshape{AddFeet}}]{}(d9BOX.north);
\draw[thick,->,double equal sign distance](d5BOX.south)--node[right,label={[font=\Huge]0:\scshape{AddFeet}}]{}(d10BOX.north);
\draw[thick,->,double equal sign distance,shorten >=5pt](d10BOX.south)--(d11BOX);
\draw[thick,->,double equal sign distance,shorten >=25pt](d9BOX.south)--(d11BOX.north);
\draw[thick,->,double equal sign distance](d11BOX.south)--node[right,label={[font=\Huge]180:\scshape{Idempotency}}]{}(d12BOX.north);
\end{tikzpicture}
\end{adjustbox}
}
\caption{Proof trees witnessing diagrammatic negation on the complex axis $\mathfrak{c}$. Each tree corresponds to a derivation in the inference system $\mathcal{R}$, transforming an initial diagram representing an actualised seme into its diagrammatic negation.} 
\label{ProofTrees1-2}
\end{figure}

Consider the proof trees $T_{1}$ and $T_{2}$ we provide in Figure~\ref{ProofTrees1-2} that witness Greimasian negation as $\neg_{\mathrm{diag}}(S_{i})$. In Figure~\ref{ProofTree1} we derive $\neg s_{1}$ through $d_2 = \neg_{\mathrm{diag}}(S_1)$ from $d_1$ ($\cong s_{1}$) via a sequence of inference rules that includes: \textsc{AddFeet}, \textsc{SplitSpider}, \textsc{Combine}, \textsc{EraseSpider}, \textsc{Idempotency}. These inferences rules correspond to the expansion of the spider habitat from a determinate to an indeterminate configuration. Duly, in $T_2$ (Figure~\ref{ProofTree2}) we derive $\neg s_{2}$ through $d_4 = \neg_{\mathrm{diag}}(S_2)$ from $d_3$, in a symmetric manner as a diagrammatic negation on the second seme $s_{2}$ of the complex axis $\mathfrak{c}$. Each proof tree $T_i$ witnesses an instance of diagrammatic negation in the sense of Definition~\ref{Def:Negation}, for which we write 
\begin{equation}
T_1 : d_1 \vdash_{\mathcal{R}} d_2,
\end{equation}
to denote that $d_2$ is derivable from $d_1$ under the inference system $\mathcal{R}$, and duly,
\begin{equation}
T_2 : d_3 \vdash_{\mathcal{R}} d_4,
\end{equation}
to denote the derivation of $d_{4}$ from $d_{3}$ under $\mathcal{R}$.

\begin{figure}
\centering
\subfloat[b][Proof tree $T_3: d_5 \vdash_{\mathcal{R}} d_6$.\label{ProofTree3}]{%
\begin{adjustbox}{width=0.4\linewidth}
\begin{tikzpicture}[line width=0.5mm,font=\Large]
\def\circleB{(0,0) circle (1)}
\def\circleC{(1,0) circle (1)}
\def\circleA{(0.5,1) circle (1)}
%
\begin{scope}[local bounding box=d2,shift={($(d1BOX) + (-0.5,-11)$)}]
	\scope[even odd rule]
	\clip \circleB;
	\clip \circleC;
	\fill[red] \circleB ;
	\endscope
\draw \circleB node[label={[label distance=8mm,xshift=-5]-90:$S_{1}$}] (S1){};
\draw \circleC node [label={[label distance=8mm,xshift=5]-90:$S_{2}$}] (S2){};
\draw (0.5,0) circle (2) node[label={[label distance=20mm]90:$M$}] (M) {};
\node[circle,minimum size=3mm,inner sep=0,fill=black] at (3,0)(x1){};
\node[circle,minimum size=3mm,inner sep=0,fill=black] at (1.5,0)(s2){};
\draw (x1)--(s2);
\draw (0.5,0.5) circle (3) node[label={[label distance=30mm]90:$X$}]{};
\node[draw, rectangle, fit=(d2), inner sep=10pt,label={[font=\Huge]$d_{5}$}] (d2BOX) {};
\end{scope}
\begin{scope}[local bounding box=d3,shift={($(d2BOX) + (-4,-11)$)}] 
	\scope[even odd rule]
	\clip \circleB;
	\clip \circleC;
	\fill[red] \circleB ;
	\endscope
\draw \circleB node[label={[label distance=8mm,xshift=-5]-90:$S_{1}$}] (S1){};
\draw \circleC node [label={[label distance=8mm,xshift=5]-90:$S_{2}$}] (S2){};
\draw (0.5,0) circle (2) node[label={[label distance=20mm]90:$M$}] (M) {};
\node[circle,minimum size=3mm,inner sep=0,fill=black] at (1.5,0)(s1){};
\draw (0.5,0.5) circle (3) node[label={[label distance=30mm]90:$X$}]{};
\begin{scope}[xshift=70mm]
	\scope[even odd rule]
	\clip \circleB;
	\clip \circleC;
	\fill[red] \circleB ;
	\endscope
\draw \circleB node[label={[label distance=8mm,xshift=-5]-90:$S_{1}$}] (S1){};
\draw \circleC node [label={[label distance=8mm,xshift=5]-90:$S_{2}$}] (S2){};
\draw (0.5,0) circle (2) node[label={[label distance=20mm]90:$M$}] (M) {};
\node[circle,minimum size=3mm,inner sep=0,fill=black] at (3,0)(x1){};
\draw (0.5,0.5) circle (3) node[label={[label distance=30mm]90:$X$}]{};
\node[draw, rectangle, fit=(d3), inner sep=10pt] (d3BOX) {};
\node[anchor=north west,xshift=5,yshift=-5] at (d3BOX.north west) {\Huge$\boxed\vee$};
\end{scope}
\end{scope}
\begin{scope}[local bounding box=d4,shift={($(d3BOX.south) + (10.5,-7)$)}]
	\scope[even odd rule]
	\clip \circleB;
	\clip \circleC;
	\fill[red] \circleB ;
	\endscope
\draw \circleB node[label={[label distance=8mm,xshift=-5]-90:$S_{1}$}] (S1){};
\draw \circleC node [label={[label distance=8mm,xshift=5]-90:$S_{2}$}] (S2){};
\draw (0.5,0) circle (2) node[label={[label distance=20mm]90:$M$}] (M) {};
\node[circle,minimum size=3mm,inner sep=0,fill=black] at (3,0)(x1){};
\draw (0.5,0.5) circle (3) node[label={[label distance=30mm]90:$X$}]{};
\node[draw, rectangle, fit=(d4), inner sep=10pt] (d4BOX) {};
\end{scope}
\begin{scope}[local bounding box=d5,shift={($(d3BOX.south) + (-10.5,-7)$)}]
	\scope[even odd rule]
	\clip \circleB;
	\clip \circleC;
	\fill[red] \circleB ;
	\endscope
\draw \circleB node[label={[label distance=8mm,xshift=-5]-90:$S_{1}$}] (S1){};
\draw \circleC node [label={[label distance=8mm,xshift=5]-90:$S_{2}$}] (S2){};
\draw (0.5,0) circle (2) node[label={[label distance=20mm]90:$M$}] (M) {};
\node[circle,minimum size=3mm,inner sep=0,fill=black] at (1.5,0)(s1){};
\draw (0.5,0.5) circle (3) node[label={[label distance=30mm]90:$X$}]{};
\node[draw, rectangle, fit=(d5), inner sep=10pt] (d5BOX) {};
\end{scope}
\begin{scope}[local bounding box=d6,shift={($(d4BOX.south) + (-4,-7)$)}]
	\scope[even odd rule]
	\clip \circleB;
	\clip \circleC;
	\fill[red] \circleB ;
	\endscope
\draw \circleB node[label={[label distance=8mm,xshift=-5]-90:$S_{1}$}] (S1){};
\draw \circleC node [label={[label distance=8mm,xshift=5]-90:$S_{2}$}] (S2){};
\draw (0.5,0) circle (2) node[label={[label distance=20mm]90:$M$}] (M) {};
\node[circle,minimum size=3mm,inner sep=0,fill=black] at (1.5,0)(x1){};
\draw (0.5,0.5) circle (3) node[label={[label distance=30mm]90:$X$}]{};
\begin{scope}[xshift=70mm]
	\scope[even odd rule]
	\clip \circleB;
	\clip \circleC;
	\fill[red] \circleB ;
	\endscope
\draw \circleB node[label={[label distance=8mm,xshift=-5]-90:$S_{1}$}] (S1){};
\draw \circleC node [label={[label distance=8mm,xshift=5]-90:$S_{2}$}] (S2){};
\draw (0.5,0) circle (2) node[label={[label distance=20mm]90:$M$}] (M) {};
\node[circle,minimum size=3mm,inner sep=0,fill=black] at (3,0)(x1){};
\draw (0.5,0.5) circle (3) node[label={[label distance=30mm]90:$X$}]{};
\node[draw, rectangle, fit=(d6), inner sep=10pt] (d6BOX) {};
\node[anchor=north west,xshift=5,yshift=-5] at (d6BOX.north west) {\Huge$\boxed\wedge$};
\end{scope}
\end{scope}
\begin{scope}[local bounding box=d7,shift={($(d6BOX) + (-0.5,-11)$)}]
	\scope[even odd rule]
	\clip \circleB;
	\clip \circleC;
	\fill[red] \circleB ;
	\endscope
\draw \circleB node[label={[label distance=8mm,xshift=-5]-90:$S_{1}$}] (S1){};
\draw \circleC node [label={[label distance=8mm,xshift=5]-90:$S_{2}$}] (S2){};
\draw (0.5,0) circle (2) node[label={[label distance=20mm]90:$M$}] (M) {};
\node[circle,minimum size=3mm,inner sep=0,fill=black] at (3,0)(x1){};
\node[circle,minimum size=3mm,inner sep=0,fill=black] at (1.5,0)(s1){};
\draw (0.5,0.5) circle (3) node[label={[label distance=30mm]90:$X$}]{};
\node[draw, rectangle, fit=(d7), inner sep=10pt] (d7BOX) {};
\end{scope}
\begin{scope}[local bounding box=d8,shift={($(d7BOX.south) + (-0.5,-7)$)}]
	\scope[even odd rule]
	\clip \circleB;
	\clip \circleC;
	\fill[red] \circleB ;
	\endscope
\draw \circleB node[label={[label distance=8mm,xshift=-5]-90:$S_{1}$}] (S1){};
\draw \circleC node [label={[label distance=8mm,xshift=5]-90:$S_{2}$}] (S2){};
\draw (0.5,0) circle (2) node[label={[label distance=20mm]90:$M$}] (M) {};
\node[circle,minimum size=3mm,inner sep=0,fill=black] at (1.5,0)(s1){};
\draw (0.5,0.5) circle (3) node[label={[label distance=30mm]90:$X$}]{};
\node[draw, rectangle, fit=(d8), inner sep=10pt] (d8BOX) {};
\end{scope}
\begin{scope}[local bounding box=d11,shift={($(d2BOX) + (3,-61)$)}] 
	\scope[even odd rule]
	\clip \circleB;
	\clip \circleC;
	\fill[red] \circleB ;
	\endscope
\draw \circleB node[label={[label distance=8mm,xshift=-5]-90:$S_{1}$}] (S1){};
\draw \circleC node [label={[label distance=8mm,xshift=5]-90:$S_{2}$}] (S2){};
\draw (0.5,0) circle (2) node[label={[label distance=20mm]90:$M$}] (M) {};
\node[circle,minimum size=3mm,inner sep=0,fill=black] at (1.5,0)(s1){};
\draw (0.5,0.5) circle (3) node[label={[label distance=30mm]90:$X$}]{};
\begin{scope}[xshift=-70mm]
	\scope[even odd rule]
	\clip \circleB;
	\clip \circleC;
	\fill[red] \circleB ;
	\endscope
\draw \circleB node[label={[label distance=8mm,xshift=-5]-90:$S_{1}$}] (S1){};
\draw \circleC node [label={[label distance=8mm,xshift=5]-90:$S_{2}$}] (S2){};
\draw (0.5,0) circle (2) node[label={[label distance=20mm]90:$M$}] (M) {};
\node[circle,minimum size=3mm,inner sep=0,fill=black] at (1.5,0)(s1){};
\draw (0.5,0.5) circle (3) node[label={[label distance=30mm]90:$X$}]{};
\node[draw, rectangle, fit=(d11), inner sep=10pt] (d11BOX) {};
\node[anchor=north west,xshift=5,yshift=-5] at (d11BOX.north west) {\Huge$\boxed\vee$};
\end{scope}
\end{scope}
\begin{scope}[local bounding box=d12,shift={($(d11BOX.south) + (-0.5,-7)$)}]
	\scope[even odd rule]
	\clip \circleB;
	\clip \circleC;
	\fill[red] \circleB ;
	\endscope
\draw \circleB node[label={[label distance=8mm,xshift=-5]-90:$S_{1}$}] (S1){};
\draw \circleC node [label={[label distance=8mm,xshift=5]-90:$S_{2}$}] (S2){};
\draw (0.5,0) circle (2) node[label={[label distance=20mm]90:$M$}] (M) {};
\node[circle,minimum size=3mm,inner sep=0,fill=black] at (1.5,0)(s1){};
\draw (0.5,0.5) circle (3) node[label={[label distance=30mm]90:$X$}]{};
\node[draw, rectangle, fit=(d12), inner sep=10pt,label={[font=\Huge]-90:$d_{6}$}] (d12BOX) {};
\end{scope}
\draw[thick,->,double equal sign distance](d2BOX.south)--node[right,label={[font=\Huge]180:\scshape{SplitSpider}}]{}(d3BOX.north);
\draw[thick,->,double equal sign distance,shorten >=25pt](d3BOX.south)--(d4BOX.north);
\draw[thick,->,double equal sign distance,shorten >=25pt](d3BOX.south)--(d5BOX.north);
\draw[thick,->,double equal sign distance](d4BOX.south)--node[right,label={[font=\Huge]0:\scshape{Combine}}]{}(d6BOX.north);
\draw[thick,->,double equal sign distance](d6BOX.south)--(d7BOX.north);
\draw[thick,->,double equal sign distance](d7BOX.south)--node[right,label={[font=\Huge]0:\scshape{EraseSpider}}]{}(d8BOX.north);
\draw[thick,->,double equal sign distance](d5BOX.south)--(d11BOX.north);
\draw[thick,->,double equal sign distance,shorten >=25pt](d8BOX.south)--(d11BOX.80);
\draw[thick,->,double equal sign distance](d11BOX.south)--node[right,label={[font=\Huge]180:\scshape{Idempotency}}]{}(d12BOX.north);
\end{tikzpicture}
\end{adjustbox}
}
\hspace{55pt}
\subfloat[b][Proof tree $T_4: d_7 \vdash_{\mathcal{R}} d_8$.\label{ProofTree4}]{%
\begin{adjustbox}{width=0.4\linewidth}
\begin{tikzpicture}[line width=0.5mm,font=\Large]
\def\circleB{(0,0) circle (1)}
\def\circleC{(1,0) circle (1)}
\def\circleA{(0.5,1) circle (1)}
%
\begin{scope}[local bounding box=d2,shift={($(d1BOX) + (-0.5,-11)$)}]
	\scope[even odd rule]
	\clip \circleB;
	\clip \circleC;
	\fill[red] \circleB ;
	\endscope
\draw \circleB node[label={[label distance=8mm,xshift=-5]-90:$S_{1}$}] (S1){};
\draw \circleC node [label={[label distance=8mm,xshift=5]-90:$S_{2}$}] (S2){};
\draw (0.5,0) circle (2) node[label={[label distance=20mm]90:$M$}] (M) {};
\node[circle,minimum size=3mm,inner sep=0,fill=black] at (-2,0)(x1){};
\node[circle,minimum size=3mm,inner sep=0,fill=black] at (-0.5,0)(s2){};
\draw (x1)--(s2);
\draw (0.5,0.5) circle (3) node[label={[label distance=30mm]90:$X$}]{};
\node[draw, rectangle, fit=(d2), inner sep=10pt,label={[font=\Huge]$d_{7}$}] (d2BOX) {};
\end{scope}
\begin{scope}[local bounding box=d3,shift={($(d2BOX) + (-4,-11)$)}] 
	\scope[even odd rule]
	\clip \circleB;
	\clip \circleC;
	\fill[red] \circleB ;
	\endscope
\draw \circleB node[label={[label distance=8mm,xshift=-5]-90:$S_{1}$}] (S1){};
\draw \circleC node [label={[label distance=8mm,xshift=5]-90:$S_{2}$}] (S2){};
\draw (0.5,0) circle (2) node[label={[label distance=20mm]90:$M$}] (M) {};
\node[circle,minimum size=3mm,inner sep=0,fill=black] at (-0.5,0)(s1){};
\draw (0.5,0.5) circle (3) node[label={[label distance=30mm]90:$X$}]{};
\begin{scope}[xshift=70mm]
	\scope[even odd rule]
	\clip \circleB;
	\clip \circleC;
	\fill[red] \circleB ;
	\endscope
\draw \circleB node[label={[label distance=8mm,xshift=-5]-90:$S_{1}$}] (S1){};
\draw \circleC node [label={[label distance=8mm,xshift=5]-90:$S_{2}$}] (S2){};
\draw (0.5,0) circle (2) node[label={[label distance=20mm]90:$M$}] (M) {};
\node[circle,minimum size=3mm,inner sep=0,fill=black] at (-2,0)(x1){};
\draw (0.5,0.5) circle (3) node[label={[label distance=30mm]90:$X$}]{};
\node[draw, rectangle, fit=(d3), inner sep=10pt] (d3BOX) {};
\node[anchor=north west,xshift=5,yshift=-5] at (d3BOX.north west) {\Huge$\boxed\vee$};
\end{scope}
\end{scope}
\begin{scope}[local bounding box=d4,shift={($(d3BOX.south) + (10.5,-7)$)}]
	\scope[even odd rule]
	\clip \circleB;
	\clip \circleC;
	\fill[red] \circleB ;
	\endscope
\draw \circleB node[label={[label distance=8mm,xshift=-5]-90:$S_{1}$}] (S1){};
\draw \circleC node [label={[label distance=8mm,xshift=5]-90:$S_{2}$}] (S2){};
\draw (0.5,0) circle (2) node[label={[label distance=20mm]90:$M$}] (M) {};
\node[circle,minimum size=3mm,inner sep=0,fill=black] at (-2,0)(x1){};
\draw (0.5,0.5) circle (3) node[label={[label distance=30mm]90:$X$}]{};
\node[draw, rectangle, fit=(d4), inner sep=10pt] (d4BOX) {};
\end{scope}
\begin{scope}[local bounding box=d5,shift={($(d3BOX.south) + (-10.5,-7)$)}]
	\scope[even odd rule]
	\clip \circleB;
	\clip \circleC;
	\fill[red] \circleB ;
	\endscope
\draw \circleB node[label={[label distance=8mm,xshift=-5]-90:$S_{1}$}] (S1){};
\draw \circleC node [label={[label distance=8mm,xshift=5]-90:$S_{2}$}] (S2){};
\draw (0.5,0) circle (2) node[label={[label distance=20mm]90:$M$}] (M) {};
\node[circle,minimum size=3mm,inner sep=0,fill=black] at (-0.5,0)(s1){};
\draw (0.5,0.5) circle (3) node[label={[label distance=30mm]90:$X$}]{};
\node[draw, rectangle, fit=(d5), inner sep=10pt] (d5BOX) {};
\end{scope}
\begin{scope}[local bounding box=d6,shift={($(d4BOX.south) + (-4,-7)$)}]
	\scope[even odd rule]
	\clip \circleB;
	\clip \circleC;
	\fill[red] \circleB ;
	\endscope
\draw \circleB node[label={[label distance=8mm,xshift=-5]-90:$S_{1}$}] (S1){};
\draw \circleC node [label={[label distance=8mm,xshift=5]-90:$S_{2}$}] (S2){};
\draw (0.5,0) circle (2) node[label={[label distance=20mm]90:$M$}] (M) {};
\node[circle,minimum size=3mm,inner sep=0,fill=black] at (-0.5,0)(s1){};
\draw (0.5,0.5) circle (3) node[label={[label distance=30mm]90:$X$}]{};
\begin{scope}[xshift=70mm]
	\scope[even odd rule]
	\clip \circleB;
	\clip \circleC;
	\fill[red] \circleB ;
	\endscope
\draw \circleB node[label={[label distance=8mm,xshift=-5]-90:$S_{1}$}] (S1){};
\draw \circleC node [label={[label distance=8mm,xshift=5]-90:$S_{2}$}] (S2){};
\draw (0.5,0) circle (2) node[label={[label distance=20mm]90:$M$}] (M) {};
\node[circle,minimum size=3mm,inner sep=0,fill=black] at (-2,0)(x1){};
\draw (0.5,0.5) circle (3) node[label={[label distance=30mm]90:$X$}]{};
\node[draw, rectangle, fit=(d6), inner sep=10pt] (d6BOX) {};
\node[anchor=north west,xshift=5,yshift=-5] at (d6BOX.north west) {\Huge$\boxed\wedge$};
\end{scope}
\end{scope}
\begin{scope}[local bounding box=d7,shift={($(d6BOX) + (-0.5,-11)$)}]
	\scope[even odd rule]
	\clip \circleB;
	\clip \circleC;
	\fill[red] \circleB ;
	\endscope
\draw \circleB node[label={[label distance=8mm,xshift=-5]-90:$S_{1}$}] (S1){};
\draw \circleC node [label={[label distance=8mm,xshift=5]-90:$S_{2}$}] (S2){};
\draw (0.5,0) circle (2) node[label={[label distance=20mm]90:$M$}] (M) {};
\node[circle,minimum size=3mm,inner sep=0,fill=black] at (-2,0)(x1){};
\node[circle,minimum size=3mm,inner sep=0,fill=black] at (-0.5,0)(s1){};
\draw (0.5,0.5) circle (3) node[label={[label distance=30mm]90:$X$}]{};
\node[draw, rectangle, fit=(d7), inner sep=10pt] (d7BOX) {};
\end{scope}
\begin{scope}[local bounding box=d8,shift={($(d7BOX.south) + (-0.5,-7)$)}]
	\scope[even odd rule]
	\clip \circleB;
	\clip \circleC;
	\fill[red] \circleB ;
	\endscope
\draw \circleB node[label={[label distance=8mm,xshift=-5]-90:$S_{1}$}] (S1){};
\draw \circleC node [label={[label distance=8mm,xshift=5]-90:$S_{2}$}] (S2){};
\draw (0.5,0) circle (2) node[label={[label distance=20mm]90:$M$}] (M) {};
\node[circle,minimum size=3mm,inner sep=0,fill=black] at (-0.5,0)(s1){};
\draw (0.5,0.5) circle (3) node[label={[label distance=30mm]90:$X$}]{};
\node[draw, rectangle, fit=(d8), inner sep=10pt] (d8BOX) {};
\end{scope}
\begin{scope}[local bounding box=d11,shift={($(d2BOX) + (3,-61)$)}] 
	\scope[even odd rule]
	\clip \circleB;
	\clip \circleC;
	\fill[red] \circleB ;
	\endscope
\draw \circleB node[label={[label distance=8mm,xshift=-5]-90:$S_{1}$}] (S1){};
\draw \circleC node [label={[label distance=8mm,xshift=5]-90:$S_{2}$}] (S2){};
\draw (0.5,0) circle (2) node[label={[label distance=20mm]90:$M$}] (M) {};
\node[circle,minimum size=3mm,inner sep=0,fill=black] at (-0.5,0)(s1){};
\draw (0.5,0.5) circle (3) node[label={[label distance=30mm]90:$X$}]{};
\begin{scope}[xshift=-70mm]
	\scope[even odd rule]
	\clip \circleB;
	\clip \circleC;
	\fill[red] \circleB ;
	\endscope
\draw \circleB node[label={[label distance=8mm,xshift=-5]-90:$S_{1}$}] (S1){};
\draw \circleC node [label={[label distance=8mm,xshift=5]-90:$S_{2}$}] (S2){};
\draw (0.5,0) circle (2) node[label={[label distance=20mm]90:$M$}] (M) {};
\node[circle,minimum size=3mm,inner sep=0,fill=black] at (-0.5,0)(s1){};
\draw (0.5,0.5) circle (3) node[label={[label distance=30mm]90:$X$}]{};
\node[draw, rectangle, fit=(d11), inner sep=10pt] (d11BOX) {};
\node[anchor=north west,xshift=5,yshift=-5] at (d11BOX.north west) {\Huge$\boxed\vee$};
\end{scope}
\end{scope}
\begin{scope}[local bounding box=d12,shift={($(d11BOX.south) + (-0.5,-7)$)}]
	\scope[even odd rule]
	\clip \circleB;
	\clip \circleC;
	\fill[red] \circleB ;
	\endscope
\draw \circleB node[label={[label distance=8mm,xshift=-5]-90:$S_{1}$}] (S1){};
\draw \circleC node [label={[label distance=8mm,xshift=5]-90:$S_{2}$}] (S2){};
\draw (0.5,0) circle (2) node[label={[label distance=20mm]90:$M$}] (M) {};
\node[circle,minimum size=3mm,inner sep=0,fill=black] at (-0.5,0)(s1){};
\draw (0.5,0.5) circle (3) node[label={[label distance=30mm]90:$X$}]{};
\node[draw, rectangle, fit=(d12), inner sep=10pt,label={[font=\Huge]-90:$d_{8}$}] (d12BOX) {};
\end{scope}
\draw[thick,->,double equal sign distance](d2BOX.south)--node[right,label={[font=\Huge]180:\scshape{SplitSpider}}]{}(d3BOX.north);
\draw[thick,->,double equal sign distance,shorten >=25pt](d3BOX.south)--(d4BOX.north);
\draw[thick,->,double equal sign distance,shorten >=25pt](d3BOX.south)--(d5BOX.north);
\draw[thick,->,double equal sign distance](d4BOX.south)--node[right,label={[font=\Huge]0:\scshape{Combine}}]{}(d6BOX.north);
\draw[thick,->,double equal sign distance](d6BOX.south)--(d7BOX.north);
\draw[thick,->,double equal sign distance](d7BOX.south)--node[right,label={[font=\Huge]0:\scshape{EraseSpider}}]{}(d8BOX.north);
\draw[thick,->,double equal sign distance](d5BOX.south)--(d11BOX.north);
\draw[thick,->,double equal sign distance,shorten >=25pt](d8BOX.south)--(d11BOX.80);
\draw[thick,->,double equal sign distance](d11BOX.south)--node[right,label={[font=\Huge]180:\scshape{Idempotency}}]{}(d12BOX.north);
\end{tikzpicture}
\end{adjustbox}
}
\caption{Proof trees witnessing diagrammatic implication. Each tree corresponds to a derivation in the inference system $\mathcal{R}$, transforming an initial diagram representing a virtualised seme into its diagrammatic implication.} 
\label{ProofTrees3-4}
\end{figure}

In Figure~\ref{ProofTrees3-4} we use the inference system $\mathcal{R}$ to transform the diagrammatic negations of the elements of the negative axis as $\mathfrak{n}=(\neg s_{1}, \neg s_{2})$ into the realised seme pair of the complex axis $\mathfrak{c}=(s_{1}, s_{2})$. We write
\begin{align}
T_{3}: d_{5} \vdash_{\mathcal{R}} d_{6},\\
T_{4}: d_{7} \vdash_{\mathcal{R}} d_{8},
\end{align}
to denote the derivation of $d_{6}$ from $d_{5}$ through the diagrammatic implication $d_{5}\Rightarrow_{\mathrm{diag}} d_{6}$ for which $d_{5}=\neg_{\mathrm{diag}}(S_{1})$. Here, we distinguish between the semantic relation $d_1 \Rightarrow_{\mathrm{diag}} d_2$ and its proof-theoretic realisation $d_1 \vdash_{\mathcal{R}} d_2$, where the latter witnesses the former via a derivation in $\mathcal{R}$. This is dual to the case of $d_{7}$ and $d_{8}$. 

Within an interpretive semantics of the Greimas square we gave in Example~\ref{GreimasExample}, the proof trees 
\[
T_{1}\cong \text{/life/}\rightarrow\text{/not-life/}, \quad T_{2}\cong \text{/death/}\rightarrow\text{/not-death/},
\]
and
\[
T_{3}\cong \text{/not-life/}\rightarrow\text{/death/}, \quad T_{4}\cong \text{/not-death/}\rightarrow\text{/life/},
\]
provide the basis for syntagmatic units to emerge at the level of the meta-terms. Here, the process of transformation of states of the initial seme pair associate to a figurativization \citep[page 119]{GreimasCourtes1982} in Greimas's source text in which objects and actants generate narrative trajectories according to conjunctions and disjunctions. Thus, like the steps taken in each application of inference rules to derive a proof-theoretic semantics, Greimas similarly argues that the semiotic square is a descriptor of an

\begin{quote}
autonomy of two distinct levels of semiotic representation, a logico-semantic level where logical operations which can account for the manipulation of contents are found, and a discoursive level where these same logical operations, once they are converted, can receive actantial formulations as part of a surface narrative grammar and processual and aspectual representations as part of the semantic level. \citep[page 6]{Greimas1988}
\end{quote}
 
But we also note here that the proof trees are not merely illustrative of a logico-semantic structure, but witness the definitional clauses of diagrammatic negation and implication given in Definitions  \ref{Def:Negation} and \ref{Def:Implication}. In particular, the inference rules in $\mathcal{R}$ are sufficient to generate the required habitat expansions and refinements that characterise these operations. Thus, diagrammatic negation and implication admit a constructive realisation within the proof system, with each instance corresponding to a derivation in $\mathcal{R}$.

\subsection{Deriving the meta-terms}\label{derivingmetaterms}

As we discussed earlier, in the traditional formulation of the semiotic square, the construction of meta-terms as sememes is given through 
\begin{align*}
\mathbf{S} &= s_1 + s_2,\\
\mathbf{\bar{S}}&=\neg s_{1} + \neg s_{2},\\
\mathrm{Positive \,Deixis} &= s_{1} + \neg s_{2},\\
\mathrm{Negative \,Deixis} &= s_{2} + \neg s_{1},
\end{align*}
where $\mathbf{S}$ is called the complex term and $\mathbf{\bar{S}}$ is called the neutral term. These constructions rely on an informal combinatorial operation usually denoted as `+.' In Greimas and Y\"{u}cel's analysis of the oeuvre of writer Georges Bernanos (1888-1948), they frame the construction of the complex term $\mathbf{S}$ (and therefore its contrary $\mathbf{\bar{S}}$) as that of an ``isotopy,'' \cite[page 260]{Greimas1983} or reoccurring pattern. In this article, we treat each of the meta-terms as sememes, so that their general construction may point to how isotopies function in a source text through the recurrence of shared semantic traits (classemes). Here, Greimas and Y\"{u}cel also note that ``the equilibrium of the two elements of the structure [$s_{1} + s_{2}$] is precarious, and it leans at times on the negative side, at other times on the positive side, thus instituting the dominance of one of the two elements.'' \citep[page 260-261]{Greimas1983} 

In our following Proposition~\ref{The:MetaTerms} we frame the `+' operation of meta-term generation as a composition of diagrammatic derivations in the inference system $\mathcal{R}$. Although our approach formalises the syntactic structure of the meta-terms, it concurrently accommodates what Greimas and Y\"{u}cel identify as ``structural apprehensions'' that emerge in both abstract and figurative manifestations of  sememes and isotopies.

\begin{figure}
\centering
\subfloat[$T_5: d_{s_{1}\wedge s_{2}} \vdash_{\mathcal{R}} d^{+}_{s_{1} s_{2}}$.\label{ProofTree5}]{%
\begin{adjustbox}{width=0.4\linewidth}
\begin{tikzpicture}[line width=0.5mm,font=\Large]
\def\circleB{(0,0) circle (1)}
\def\circleC{(1,0) circle (1)}
\def\circleA{(0.5,1) circle (1)}
\tikzset{fontscale/.style = {font=\relsize{#1}}}
%
\begin{scope}[local bounding box=d2] 
	\scope[even odd rule]
	\clip \circleB;
	\clip \circleC;
	\fill[red] \circleB ;
	\endscope
\draw \circleB node[label={[label distance=8mm,xshift=-5]-90:$S_{1}$}] (S1){};
\draw \circleC node [label={[label distance=8mm,xshift=5]-90:$S_{2}$}] (S2){};
\draw (0.5,0) circle (2) node[label={[label distance=20mm]90:$M$}] (M) {};
\node[circle,minimum size=3mm,inner sep=0,fill=black] at (-0.5,0)(s1){};
\draw (0.5,0.5) circle (3) node[label={[label distance=30mm]90:$X$}]{};
\begin{scope}[xshift=70mm]
	\scope[even odd rule]
	\clip \circleB;
	\clip \circleC;
	\fill[red] \circleB ;
	\endscope
\draw \circleB node[label={[label distance=8mm,xshift=-5]-90:$S_{1}$}] (S1){};
\draw \circleC node [label={[label distance=8mm,xshift=5]-90:$S_{2}$}] (S2){};
\draw (0.5,0) circle (2) node[label={[label distance=20mm]90:$M$}] (M) {};
\node[circle,minimum size=3mm,inner sep=0,fill=black] at (1.5,0)(s2){};
\draw (0.5,0.5) circle (3) node[label={[label distance=30mm]90:$X$}]{};
\node[draw, rectangle, fit=(d2), inner sep=10pt,label={[font=\Huge]90:$d^{+}_{s_{1} \wedge s_{2}}$}] (d2BOX) {};
\node[anchor=north west,xshift=5,yshift=-5] at (d2BOX.north west) {\Huge$\boxed\wedge$};
\end{scope}
\end{scope}
\begin{scope}[local bounding box=d3,shift={($(d2BOX.south) + (-0.5,-7)$)}]
	\scope[even odd rule]
	\clip \circleB;
	\clip \circleC;
	\fill[red] \circleB ;
	\endscope
\draw \circleB node[label={[label distance=8mm,xshift=-5]-90:$S_{1}$}] (S1){};
\draw \circleC node [label={[label distance=8mm,xshift=5]-90:$S_{2}$}] (S2){};
\draw (0.5,0) circle (2) node[label={[label distance=20mm]90:$M$}] (M) {};
\node[circle,minimum size=3mm,inner sep=0,fill=black] at (1.5,0)(s2){};
\node[circle,minimum size=3mm,inner sep=0,fill=black] at (-0.5,0)(s1){};
\draw (0.5,0.5) circle (3) node[label={[label distance=30mm]90:$X$}]{};
\node[draw, rectangle, fit=(d3), inner sep=10pt] (d3BOX) {};
\end{scope}
\begin{scope}[local bounding box=d4,shift={($(d3BOX.south) + (-0.5,-7)$)}]
	\scope[even odd rule]
	\clip \circleB;
	\clip \circleC;
	\fill[red] \circleB ;
	\endscope
\draw \circleB node[label={[label distance=8mm,xshift=-5]-90:$S_{1}$}] (S1){};
\draw \circleC node [label={[label distance=8mm,xshift=5]-90:$S_{2}$}] (S2){};
\draw (0.5,0) circle (2) node[label={[label distance=20mm]90:$M$}] (M) {};
\node[circle,minimum size=3mm,inner sep=0,fill=black] at (1.5,0)(s2){};
\node[circle,minimum size=3mm,inner sep=0,fill=black] at (-0.5,0)(s1){};
\node[circle,minimum size=3mm,inner sep=0,fill=black] at (0.5,1.5)(m1){};
\draw(m1)--(s1);
\draw (0.5,0.5) circle (3) node[label={[label distance=30mm]90:$X$}]{};
\node[draw, rectangle, fit=(d4), inner sep=10pt] (d4BOX) {};
\end{scope}
\begin{scope}[local bounding box=d5,shift={($(d4BOX.south) + (3,-7)$)}] 
	\scope[even odd rule]
	\clip \circleB;
	\clip \circleC;
	\fill[red] \circleB ;
	\endscope
\draw \circleB node[label={[label distance=8mm,xshift=-5]-90:$S_{1}$}] (S1){};
\draw \circleC node [label={[label distance=8mm,xshift=5]-90:$S_{2}$}] (S2){};
\draw (0.5,0) circle (2) node[label={[label distance=20mm]90:$M$}] (M) {};
\node[circle,minimum size=3mm,inner sep=0,fill=black] at (0.5,1.5)(m1){};
\draw (0.5,0.5) circle (3) node[label={[label distance=30mm]90:$X$}]{};
\begin{scope}[xshift=-70mm]
	\scope[even odd rule]
	\clip \circleB;
	\clip \circleC;
	\fill[red] \circleB ;
	\endscope
\draw \circleB node[label={[label distance=8mm,xshift=-5]-90:$S_{1}$}] (S1){};
\draw \circleC node [label={[label distance=8mm,xshift=5]-90:$S_{2}$}] (S2){};
\draw (0.5,0) circle (2) node[label={[label distance=20mm]90:$M$}] (M) {};
\node[circle,minimum size=3mm,inner sep=0,fill=black] at (1.5,0)(s2){};
\node[circle,minimum size=3mm,inner sep=0,fill=black] at (-0.5,0)(s1){};
\draw (0.5,0.5) circle (3) node[label={[label distance=30mm]90:$X$}]{};
\node[draw, rectangle, fit=(d5), inner sep=10pt] (d5BOX) {};
\node[anchor=north west,xshift=5,yshift=-5] at (d5BOX.north west) {\Huge$\boxed\vee$};
\end{scope}
\end{scope}
\begin{scope}[local bounding box=d6,shift={($(d5BOX.south) + (-10.5,-7)$)}]
	\scope[even odd rule]
	\clip \circleB;
	\clip \circleC;
	\fill[red] \circleB ;
	\endscope
\draw \circleB node[label={[label distance=8mm,xshift=-5]-90:$S_{1}$}] (S1){};
\draw \circleC node [label={[label distance=8mm,xshift=5]-90:$S_{2}$}] (S2){};
\draw (0.5,0) circle (2) node[label={[label distance=20mm]90:$M$}] (M) {};
\node[circle,minimum size=3mm,inner sep=0,fill=black] at (1.5,0)(s2){};
\node[circle,minimum size=3mm,inner sep=0,fill=black] at (-0.5,0)(s1){};
\draw (0.5,0.5) circle (3) node[label={[label distance=30mm]90:$X$}]{};
\node[draw, rectangle, fit=(d6), inner sep=10pt] (d6BOX) {};
\end{scope}
\begin{scope}[local bounding box=d7,shift={($(d5BOX.south) + (10,-7)$)}]
	\scope[even odd rule]
	\clip \circleB;
	\clip \circleC;
	\fill[red] \circleB ;
	\endscope
\draw \circleB node[label={[label distance=8mm,xshift=-5]-90:$S_{1}$}] (S1){};
\draw \circleC node [label={[label distance=8mm,xshift=5]-90:$S_{2}$}] (S2){};
\draw (0.5,0) circle (2) node[label={[label distance=20mm]90:$M$}] (M) {};
\node[circle,minimum size=3mm,inner sep=0,fill=black] at (0.5,1.5)(m1){};
\draw (0.5,0.5) circle (3) node[label={[label distance=30mm]90:$X$}]{};
\node[draw, rectangle, fit=(d7), inner sep=10pt] (d7BOX) {};
\end{scope}
\begin{scope}[local bounding box=d8,shift={($(d6BOX.south) + (-4,-7)$)}] 
	\scope[even odd rule]
	\clip \circleB;
	\clip \circleC;
	\fill[red] \circleB ;
	\endscope
\draw \circleB node[label={[label distance=8mm,xshift=-5]-90:$S_{1}$}] (S1){};
\draw \circleC node [label={[label distance=8mm,xshift=5]-90:$S_{2}$}] (S2){};
\draw (0.5,0) circle (2) node[label={[label distance=20mm]90:$M$}] (M) {};
\node[circle,minimum size=3mm,inner sep=0,fill=black] at (0.5,1.5)(m1){};
\node[circle,minimum size=3mm,inner sep=0,fill=black] at (1.5,0)(s2){};
\node[circle,minimum size=3mm,inner sep=0,fill=black] at (-0.5,0)(s1){};
\draw (0.5,0.5) circle (3) node[label={[label distance=30mm]90:$X$}]{};
\begin{scope}[xshift=70mm]
	\scope[even odd rule]
	\clip \circleB;
	\clip \circleC;
	\fill[red] \circleB ;
	\endscope
\draw \circleB node[label={[label distance=8mm,xshift=-5]-90:$S_{1}$}] (S1){};
\draw \circleC node [label={[label distance=8mm,xshift=5]-90:$S_{2}$}] (S2){};
\draw (0.5,0) circle (2) node[label={[label distance=20mm]90:$M$}] (M) {};
\node[circle,minimum size=3mm,inner sep=0,fill=black] at (0.5,1.5)(m1){};
\draw (0.5,0.5) circle (3) node[label={[label distance=30mm]90:$X$}]{};
\node[draw, rectangle, fit=(d8), inner sep=10pt] (d8BOX) {};
\node[anchor=north west,xshift=5,yshift=-5] at (d8BOX.north west) {\Huge$\boxed\wedge$};
\end{scope}
\end{scope}
\begin{scope}[local bounding box=d9,shift={($(d7BOX.south) + (-4,-7)$)}] 
	\scope[even odd rule]
	\clip \circleB;
	\clip \circleC;
	\fill[red] \circleB ;
	\endscope
\draw \circleB node[label={[label distance=8mm,xshift=-5]-90:$S_{1}$}] (S1){};
\draw \circleC node [label={[label distance=8mm,xshift=5]-90:$S_{2}$}] (S2){};
\draw (0.5,0) circle (2) node[label={[label distance=20mm]90:$M$}] (M) {};
\node[circle,minimum size=3mm,inner sep=0,fill=black] at (0.5,1.5)(m1){};
\node[circle,minimum size=3mm,inner sep=0,fill=black] at (1.5,0)(s2){};
\node[circle,minimum size=3mm,inner sep=0,fill=black] at (-0.5,0)(s1){};
\draw (0.5,0.5) circle (3) node[label={[label distance=30mm]90:$X$}]{};
\begin{scope}[xshift=70mm]
	\scope[even odd rule]
	\clip \circleB;
	\clip \circleC;
	\fill[red] \circleB ;
	\endscope
\draw \circleB node[label={[label distance=8mm,xshift=-5]-90:$S_{1}$}] (S1){};
\draw \circleC node [label={[label distance=8mm,xshift=5]-90:$S_{2}$}] (S2){};
\draw (0.5,0) circle (2) node[label={[label distance=20mm]90:$M$}] (M) {};
\node[circle,minimum size=3mm,inner sep=0,fill=black] at (0.5,1.5)(m1){};
\draw (0.5,0.5) circle (3) node[label={[label distance=30mm]90:$X$}]{};
\node[draw, rectangle, fit=(d9), inner sep=10pt] (d9BOX) {};
\node[anchor=north west,xshift=5,yshift=-5] at (d9BOX.north west) {\Huge$\boxed\wedge$};
\end{scope}
\end{scope}
\begin{scope}[local bounding box=d10,shift={($(d8BOX.south) + (-0.5,-7)$)}]
	\scope[even odd rule]
	\clip \circleB;
	\clip \circleC;
	\fill[red] \circleB ;
	\endscope
\draw \circleB node[label={[label distance=8mm,xshift=-5]-90:$S_{1}$}] (S1){};
\draw \circleC node [label={[label distance=8mm,xshift=5]-90:$S_{2}$}] (S2){};
\draw (0.5,0) circle (2) node[label={[label distance=20mm]90:$M$}] (M) {};
\node[circle,minimum size=3mm,inner sep=0,fill=black] at (0.5,1.5)(m1){};
\node[circle,minimum size=3mm,inner sep=0,fill=black] at (1.5,0)(s2){};
\node[circle,minimum size=3mm,inner sep=0,fill=black] at (-0.5,0)(s1){};
\draw (0.5,0.5) circle (3) node[label={[label distance=30mm]90:$X$}]{};
\node[draw, rectangle, fit=(d10), inner sep=10pt] (d10BOX) {};
\end{scope}
\begin{scope}[local bounding box=d11,shift={($(d9BOX.south) + (-0.5,-7)$)}]
	\scope[even odd rule]
	\clip \circleB;
	\clip \circleC;
	\fill[red] \circleB ;
	\endscope
\draw \circleB node[label={[label distance=8mm,xshift=-5]-90:$S_{1}$}] (S1){};
\draw \circleC node [label={[label distance=8mm,xshift=5]-90:$S_{2}$}] (S2){};
\draw (0.5,0) circle (2) node[label={[label distance=20mm]90:$M$}] (M) {};
\node[circle,minimum size=3mm,inner sep=0,fill=black] at (0.5,1.5)(m1){};
\node[circle,minimum size=3mm,inner sep=0,fill=black] at (1.5,0)(s2){};
\node[circle,minimum size=3mm,inner sep=0,fill=black] at (-0.5,0)(s1){};
\draw (0.5,0.5) circle (3) node[label={[label distance=30mm]90:$X$}]{};
\node[draw, rectangle, fit=(d11), inner sep=10pt] (d11BOX) {};
\end{scope}
\begin{scope}[local bounding box=d12,shift={($(d11BOX.south) + (-7.25,-7)$)}] 
	\scope[even odd rule]
	\clip \circleB;
	\clip \circleC;
	\fill[red] \circleB ;
	\endscope
\draw \circleB node[label={[label distance=8mm,xshift=-5]-90:$S_{1}$}] (S1){};
\draw \circleC node [label={[label distance=8mm,xshift=5]-90:$S_{2}$}] (S2){};
\draw (0.5,0) circle (2) node[label={[label distance=20mm]90:$M$}] (M) {};
\node[circle,minimum size=3mm,inner sep=0,fill=black] at (0.5,1.5)(m1){};
\node[circle,minimum size=3mm,inner sep=0,fill=black] at (1.5,0)(s2){};
\node[circle,minimum size=3mm,inner sep=0,fill=black] at (-0.5,0)(s1){};
\draw (0.5,0.5) circle (3) node[label={[label distance=30mm]90:$X$}]{};
\begin{scope}[xshift=-70mm]
	\scope[even odd rule]
	\clip \circleB;
	\clip \circleC;
	\fill[red] \circleB ;
	\endscope
\draw \circleB node[label={[label distance=8mm,xshift=-5]-90:$S_{1}$}] (S1){};
\draw \circleC node [label={[label distance=8mm,xshift=5]-90:$S_{2}$}] (S2){};
\draw (0.5,0) circle (2) node[label={[label distance=20mm]90:$M$}] (M) {};
\node[circle,minimum size=3mm,inner sep=0,fill=black] at (0.5,1.5)(m1){};
\node[circle,minimum size=3mm,inner sep=0,fill=black] at (1.5,0)(s2){};
\node[circle,minimum size=3mm,inner sep=0,fill=black] at (-0.5,0)(s1){};
\draw (0.5,0.5) circle (3) node[label={[label distance=30mm]90:$X$}]{};
\node[draw, rectangle, fit=(d12), inner sep=10pt] (d12BOX) {};
\node[anchor=north west,xshift=5,yshift=-5] at (d12BOX.north west) {\Huge$\boxed\vee$};
\end{scope}
\end{scope}
\begin{scope}[local bounding box=d13,shift={($(d12BOX.south) + (-0.5,-7)$)}]
	\scope[even odd rule]
	\clip \circleB;
	\clip \circleC;
	\fill[red] \circleB ;
	\endscope
\draw \circleB node[label={[label distance=8mm,xshift=-5]-90:$S_{1}$}] (S1){};
\draw \circleC node [label={[label distance=8mm,xshift=5]-90:$S_{2}$}] (S2){};
\draw (0.5,0) circle (2) node[label={[label distance=20mm]90:$M$}] (M) {};
\node[circle,minimum size=3mm,inner sep=0,fill=black] at (0.5,1.5)(m1){};
\node[circle,minimum size=3mm,inner sep=0,fill=black] at (1.5,0)(s2){};
\node[circle,minimum size=3mm,inner sep=0,fill=black] at (-0.5,0)(s1){};
\draw (0.5,0.5) circle (3) node[label={[label distance=30mm]90:$X$}]{};
\node[draw, rectangle, fit=(d13), inner sep=10pt,label={[font=\Huge]-90:$d^{+}_{ s_{1} s_{2}}$}] (d13BOX) {};
\end{scope}
\draw[thick,->,double equal sign distance](d2BOX)--node[right,label={[align=center,font=\Huge]180:\scshape{Combine}}]{}(d3BOX);
\draw[thick,->,double equal sign distance](d3BOX)--node[right,label={[align=center,font=\Huge]180:\scshape{AddFeet}}]{}(d4BOX);
\draw[thick,->,double equal sign distance](d4BOX)--node[right,label={[align=center,font=\Huge]180:\scshape{SplitSpider}}]{}(d5BOX);
\draw[thick,->,double equal sign distance,shorten >=20pt](d5BOX.south)--(d6BOX.north);
\draw[thick,->,double equal sign distance,shorten >=20pt](d5BOX.south)--(d7BOX.north);
\draw[thick,->,double equal sign distance](d6BOX)--node[right,label={[align=center,font=\Huge]180:\scshape{CopySpider}}]{}(d8BOX);
\draw[thick,->,double equal sign distance](d7BOX)--node[right,label={[align=center,font=\Huge]0:\scshape{CopySpider}}]{}(d9BOX);
\draw[thick,->,double equal sign distance](d8BOX)--node[right,label={[align=center,font=\Huge]180:\scshape{Combine}}]{}(d10BOX);
\draw[thick,->,double equal sign distance](d9BOX)--node[right,label={[align=center,font=\Huge]0:\scshape{Combine}}]{}(d11BOX);
\draw[thick,->,double equal sign distance,shorten >=20pt](d10BOX.south)--(d12BOX.north);
\draw[thick,->,double equal sign distance,shorten >=20pt](d11BOX.south)--(d12BOX.north);
\draw[thick,->,double equal sign distance](d12BOX)--node[right,label={[align=center,font=\Huge]0:\scshape{Idempotency}}]{}(d13BOX);
\end{tikzpicture}
\end{adjustbox}
}
\subfloat[$T_6: d_{\neg s_{1}\wedge \neg s_{2}} \vdash_{\mathcal{R}} d^{+}_{\neg s_{1}\neg s_{2}}$.\label{ProofTree6}]{%
\begin{adjustbox}{width=0.5\linewidth}
\begin{tikzpicture}[line width=0.5mm,font=\Large]
\def\circleB{(0,0) circle (1)}
\def\circleC{(1,0) circle (1)}
\def\circleA{(0.5,1) circle (1)}
\tikzset{fontscale/.style = {font=\relsize{#1}}}
\begin{scope}[local bounding box=d1,shift={(0,0)}]
	\scope[even odd rule]
	\clip \circleB;
	\clip \circleC;
	\fill[red] \circleB ;
	\endscope
\draw \circleB node[label={[label distance=8mm,xshift=-5]-90:$S_{1}$}] (S1){};
\draw \circleC node [label={[label distance=8mm,xshift=5]-90:$S_{2}$}] (S2){};
\draw (0.5,0) circle (2) node[label={[label distance=20mm]90:$M$}] (M) {};
\node[circle,minimum size=3mm,inner sep=0,fill=black] at (-0.5,0)(s1){};
\node[circle,minimum size=3mm,inner sep=0,fill=black] at (-2,0)(x1){};
\draw(x1)--(s1);
\draw (0.5,0.5) circle (3) node[label={[label distance=30mm]90:$X$}]{};
\begin{scope}[xshift=70mm]
	\scope[even odd rule]
	\clip \circleB;
	\clip \circleC;
	\fill[red] \circleB ;
	\endscope
\draw \circleB node[label={[label distance=8mm,xshift=-5]-90:$S_{1}$}] (S1){};
\draw \circleC node [label={[label distance=8mm,xshift=5]-90:$S_{2}$}] (S2){};
\draw (0.5,0) circle (2) node[label={[label distance=20mm]90:$M$}] (M) {};
\node[circle,minimum size=3mm,inner sep=0,fill=black] at (1.5,0)(s2){};
\node[circle,minimum size=3mm,inner sep=0,fill=black] at (3,0)(x1){};
\draw(x1)--(s2);
\draw (0.5,0.5) circle (3) node[label={[label distance=30mm]90:$X$}]{};
\node[draw, rectangle, fit=(d1), inner sep=10pt,label={[font=\Huge]90:$d^{+}_{\neg s_{1}\wedge\neg s_{2}}$}] (d1BOX) {};
\node[anchor=north west,xshift=5,yshift=-5] at (d1BOX.north west) {\Huge$\boxed\wedge$};
\end{scope}
\end{scope}
\begin{scope}[local bounding box=d2,shift={($(d1BOX.south) + (-4,-7)$)}] 
	\scope[even odd rule]
	\clip \circleB;
	\clip \circleC;
	\fill[red] \circleB ;
	\endscope
\draw \circleB node[label={[label distance=8mm,xshift=-5]-90:$S_{1}$}] (S1){};
\draw \circleC node [label={[label distance=8mm,xshift=5]-90:$S_{2}$}] (S2){};
\draw (0.5,0) circle (2) node[label={[label distance=20mm]90:$M$}] (M) {};
\node[circle,minimum size=3mm,inner sep=0,fill=black] at (-0.5,0)(s1){};
\node[circle,minimum size=3mm,inner sep=0,fill=black] at (-2,0)(x1){};
\draw(x1)--(s1);
\node[circle,minimum size=3mm,inner sep=0,fill=black] at (1.5,0)(s2){};
\node[circle,minimum size=3mm,inner sep=0,fill=black] at (3,0)(x2){};
\draw(x2)--(s2);
\draw (0.5,0.5) circle (3) node[label={[label distance=30mm]90:$X$}]{};
\begin{scope}[xshift=70mm]
	\scope[even odd rule]
	\clip \circleB;
	\clip \circleC;
	\fill[red] \circleB ;
	\endscope
\draw \circleB node[label={[label distance=8mm,xshift=-5]-90:$S_{1}$}] (S1){};
\draw \circleC node [label={[label distance=8mm,xshift=5]-90:$S_{2}$}] (S2){};
\draw (0.5,0) circle (2) node[label={[label distance=20mm]90:$M$}] (M) {};
\node[circle,minimum size=3mm,inner sep=0,fill=black] at (1.5,0)(s2){};
\node[circle,minimum size=3mm,inner sep=0,fill=black] at (3,0)(x1){};
\draw(x1)--(s2);
\draw (0.5,0.5) circle (3) node[label={[label distance=30mm]90:$X$}]{};
\node[draw, rectangle, fit=(d2), inner sep=10pt] (d2BOX) {};
\node[anchor=north west,xshift=5,yshift=-5] at (d2BOX.north west) {\Huge$\boxed\wedge$};
\end{scope}
\end{scope}
\begin{scope}[local bounding box=d3,shift={($(d2BOX.south) + (-0.5,-7)$)}]
	\scope[even odd rule]
	\clip \circleB;
	\clip \circleC;
	\fill[red] \circleB ;
	\endscope
\draw \circleB node[label={[label distance=8mm,xshift=-5]-90:$S_{1}$}] (S1){};
\draw \circleC node [label={[label distance=8mm,xshift=5]-90:$S_{2}$}] (S2){};
\draw (0.5,0) circle (2) node[label={[label distance=20mm]90:$M$}] (M) {};
\node[circle,minimum size=3mm,inner sep=0,fill=black] at (1.5,0)(s2){};
\node[circle,minimum size=3mm,inner sep=0,fill=black] at (-0.5,0)(s1){};
\node[circle,minimum size=3mm,inner sep=0,fill=black] at (-2,0)(x1){};
\node[circle,minimum size=3mm,inner sep=0,fill=black] at (3,0)(x2){};
\draw(x2)--(s2);
\draw(x1)--(s1);
\draw (0.5,0.5) circle (3) node[label={[label distance=30mm]90:$X$}]{};
\node[draw, rectangle, fit=(d3), inner sep=10pt] (d3BOX) {};
\end{scope}
\begin{scope}[local bounding box=d4,shift={($(d3BOX.south) + (-0.5,-7)$)}]
	\scope[even odd rule]
	\clip \circleB;
	\clip \circleC;
	\fill[red] \circleB ;
	\endscope
\draw \circleB node[label={[label distance=8mm,xshift=-5]-90:$S_{1}$}] (S1){};
\draw \circleC node [label={[label distance=8mm,xshift=5]-90:$S_{2}$}] (S2){};
\draw (0.5,0) circle (2) node[label={[label distance=20mm]90:$M$}] (M) {};
\node[circle,minimum size=3mm,inner sep=0,fill=black] at (1.5,0)(s2){};
\node[circle,minimum size=3mm,inner sep=0,fill=black] at (-0.5,0)(s1){};
\node[circle,minimum size=3mm,inner sep=0,fill=black] at (0.5,1.5)(m1){};
\node[circle,minimum size=3mm,inner sep=0,fill=black] at (-2,0)(x1){};
\node[circle,minimum size=3mm,inner sep=0,fill=black] at (3,0)(x2){};
\draw(x2)--(s2);
\draw(x1)--(s1);
\draw(m1)--(s1);
\draw (0.5,0.5) circle (3) node[label={[label distance=30mm]90:$X$}]{};
\node[draw, rectangle, fit=(d4), inner sep=10pt] (d4BOX) {};
\end{scope}
\begin{scope}[local bounding box=d5,shift={($(d4BOX.south) + (3,-7)$)}] 
	\scope[even odd rule]
	\clip \circleB;
	\clip \circleC;
	\fill[red] \circleB ;
	\endscope
\draw \circleB node[label={[label distance=8mm,xshift=-5]-90:$S_{1}$}] (S1){};
\draw \circleC node [label={[label distance=8mm,xshift=5]-90:$S_{2}$}] (S2){};
\draw (0.5,0) circle (2) node[label={[label distance=20mm]90:$M$}] (M) {};
\node[circle,minimum size=3mm,inner sep=0,fill=black] at (0.5,1.5)(m1){};
\draw (0.5,0.5) circle (3) node[label={[label distance=30mm]90:$X$}]{};
\begin{scope}[xshift=-70mm]
	\scope[even odd rule]
	\clip \circleB;
	\clip \circleC;
	\fill[red] \circleB ;
	\endscope
\draw \circleB node[label={[label distance=8mm,xshift=-5]-90:$S_{1}$}] (S1){};
\draw \circleC node [label={[label distance=8mm,xshift=5]-90:$S_{2}$}] (S2){};
\draw (0.5,0) circle (2) node[label={[label distance=20mm]90:$M$}] (M) {};
\node[circle,minimum size=3mm,inner sep=0,fill=black] at (1.5,0)(s2){};
\node[circle,minimum size=3mm,inner sep=0,fill=black] at (-0.5,0)(s1){};
\node[circle,minimum size=3mm,inner sep=0,fill=black] at (-2,0)(x1){};
\node[circle,minimum size=3mm,inner sep=0,fill=black] at (3,0)(x2){};
\draw(x2)--(s2);
\draw(x1)--(s1);
\draw (0.5,0.5) circle (3) node[label={[label distance=30mm]90:$X$}]{};
\node[draw, rectangle, fit=(d5), inner sep=10pt] (d5BOX) {};
\node[anchor=north west,xshift=5,yshift=-5] at (d5BOX.north west) {\Huge$\boxed\vee$};
\end{scope}
\end{scope}
\begin{scope}[local bounding box=d6,shift={($(d5BOX.south) + (-10.5,-7)$)}]
	\scope[even odd rule]
	\clip \circleB;
	\clip \circleC;
	\fill[red] \circleB ;
	\endscope
\draw \circleB node[label={[label distance=8mm,xshift=-5]-90:$S_{1}$}] (S1){};
\draw \circleC node [label={[label distance=8mm,xshift=5]-90:$S_{2}$}] (S2){};
\draw (0.5,0) circle (2) node[label={[label distance=20mm]90:$M$}] (M) {};
\node[circle,minimum size=3mm,inner sep=0,fill=black] at (1.5,0)(s2){};
\node[circle,minimum size=3mm,inner sep=0,fill=black] at (-0.5,0)(s1){};
\node[circle,minimum size=3mm,inner sep=0,fill=black] at (-2,0)(x1){};
\node[circle,minimum size=3mm,inner sep=0,fill=black] at (3,0)(x2){};
\draw(x2)--(s2);
\draw(x1)--(s1);
\draw (0.5,0.5) circle (3) node[label={[label distance=30mm]90:$X$}]{};
\node[draw, rectangle, fit=(d6), inner sep=10pt] (d6BOX) {};
\end{scope}
\begin{scope}[local bounding box=d7,shift={($(d5BOX.south) + (10,-7)$)}]
	\scope[even odd rule]
	\clip \circleB;
	\clip \circleC;
	\fill[red] \circleB ;
	\endscope
\draw \circleB node[label={[label distance=8mm,xshift=-5]-90:$S_{1}$}] (S1){};
\draw \circleC node [label={[label distance=8mm,xshift=5]-90:$S_{2}$}] (S2){};
\draw (0.5,0) circle (2) node[label={[label distance=20mm]90:$M$}] (M) {};
\node[circle,minimum size=3mm,inner sep=0,fill=black] at (0.5,1.5)(m1){};
\draw (0.5,0.5) circle (3) node[label={[label distance=30mm]90:$X$}]{};
\node[draw, rectangle, fit=(d7), inner sep=10pt] (d7BOX) {};
\end{scope}
\begin{scope}[local bounding box=d8,shift={($(d6BOX.south) + (-4,-7)$)}] 
	\scope[even odd rule]
	\clip \circleB;
	\clip \circleC;
	\fill[red] \circleB ;
	\endscope
\draw \circleB node[label={[label distance=8mm,xshift=-5]-90:$S_{1}$}] (S1){};
\draw \circleC node [label={[label distance=8mm,xshift=5]-90:$S_{2}$}] (S2){};
\draw (0.5,0) circle (2) node[label={[label distance=20mm]90:$M$}] (M) {};
\node[circle,minimum size=3mm,inner sep=0,fill=black] at (0.5,1.5)(m1){};
\node[circle,minimum size=3mm,inner sep=0,fill=black] at (-0.5,0)(s1){};
\node[circle,minimum size=3mm,inner sep=0,fill=black] at (-2,0)(x1){};
\draw(x1)--(s1);
\node[circle,minimum size=3mm,inner sep=0,fill=black] at (1.5,0)(s2){};
\node[circle,minimum size=3mm,inner sep=0,fill=black] at (3,0)(x2){};
\draw(x2)--(s2);
\draw (0.5,0.5) circle (3) node[label={[label distance=30mm]90:$X$}]{};
\begin{scope}[xshift=70mm]
	\scope[even odd rule]
	\clip \circleB;
	\clip \circleC;
	\fill[red] \circleB ;
	\endscope
\draw \circleB node[label={[label distance=8mm,xshift=-5]-90:$S_{1}$}] (S1){};
\draw \circleC node [label={[label distance=8mm,xshift=5]-90:$S_{2}$}] (S2){};
\draw (0.5,0) circle (2) node[label={[label distance=20mm]90:$M$}] (M) {};
\node[circle,minimum size=3mm,inner sep=0,fill=black] at (0.5,1.5)(m1){};
\draw (0.5,0.5) circle (3) node[label={[label distance=30mm]90:$X$}]{};
\node[draw, rectangle, fit=(d8), inner sep=10pt] (d8BOX) {};
\node[anchor=north west,xshift=5,yshift=-5] at (d8BOX.north west) {\Huge$\boxed\wedge$};
\end{scope}
\end{scope}
\begin{scope}[local bounding box=d9,shift={($(d7BOX.south) + (-4,-7)$)}] 
	\scope[even odd rule]
	\clip \circleB;
	\clip \circleC;
	\fill[red] \circleB ;
	\endscope
\draw \circleB node[label={[label distance=8mm,xshift=-5]-90:$S_{1}$}] (S1){};
\draw \circleC node [label={[label distance=8mm,xshift=5]-90:$S_{2}$}] (S2){};
\draw (0.5,0) circle (2) node[label={[label distance=20mm]90:$M$}] (M) {};
\node[circle,minimum size=3mm,inner sep=0,fill=black] at (0.5,1.5)(m1){};
\node[circle,minimum size=3mm,inner sep=0,fill=black] at (-0.5,0)(s1){};
\node[circle,minimum size=3mm,inner sep=0,fill=black] at (-2,0)(x1){};
\draw(x1)--(s1);
\node[circle,minimum size=3mm,inner sep=0,fill=black] at (1.5,0)(s2){};
\node[circle,minimum size=3mm,inner sep=0,fill=black] at (3,0)(x2){};
\draw(x2)--(s2);
\draw (0.5,0.5) circle (3) node[label={[label distance=30mm]90:$X$}]{};
\begin{scope}[xshift=70mm]
	\scope[even odd rule]
	\clip \circleB;
	\clip \circleC;
	\fill[red] \circleB ;
	\endscope
\draw \circleB node[label={[label distance=8mm,xshift=-5]-90:$S_{1}$}] (S1){};
\draw \circleC node [label={[label distance=8mm,xshift=5]-90:$S_{2}$}] (S2){};
\draw (0.5,0) circle (2) node[label={[label distance=20mm]90:$M$}] (M) {};
\node[circle,minimum size=3mm,inner sep=0,fill=black] at (0.5,1.5)(m1){};
\draw (0.5,0.5) circle (3) node[label={[label distance=30mm]90:$X$}]{};
\node[draw, rectangle, fit=(d9), inner sep=10pt] (d9BOX) {};
\node[anchor=north west,xshift=5,yshift=-5] at (d9BOX.north west) {\Huge$\boxed\wedge$};
\end{scope}
\end{scope}
\begin{scope}[local bounding box=d10,shift={($(d8BOX.south) + (-0.5,-7)$)}]
	\scope[even odd rule]
	\clip \circleB;
	\clip \circleC;
	\fill[red] \circleB ;
	\endscope
\draw \circleB node[label={[label distance=8mm,xshift=-5]-90:$S_{1}$}] (S1){};
\draw \circleC node [label={[label distance=8mm,xshift=5]-90:$S_{2}$}] (S2){};
\draw (0.5,0) circle (2) node[label={[label distance=20mm]90:$M$}] (M) {};
\node[circle,minimum size=3mm,inner sep=0,fill=black] at (0.5,1.5)(m1){};
\node[circle,minimum size=3mm,inner sep=0,fill=black] at (1.5,0)(s2){};
\node[circle,minimum size=3mm,inner sep=0,fill=black] at (-0.5,0)(s1){};
\node[circle,minimum size=3mm,inner sep=0,fill=black] at (-2,0)(x1){};
\node[circle,minimum size=3mm,inner sep=0,fill=black] at (3,0)(x2){};
\draw(x2)--(s2);
\draw(x1)--(s1);
\draw (0.5,0.5) circle (3) node[label={[label distance=30mm]90:$X$}]{};
\node[draw, rectangle, fit=(d10), inner sep=10pt] (d10BOX) {};
\end{scope}
\begin{scope}[local bounding box=d11,shift={($(d9BOX.south) + (-0.5,-7)$)}]
	\scope[even odd rule]
	\clip \circleB;
	\clip \circleC;
	\fill[red] \circleB ;
	\endscope
\draw \circleB node[label={[label distance=8mm,xshift=-5]-90:$S_{1}$}] (S1){};
\draw \circleC node [label={[label distance=8mm,xshift=5]-90:$S_{2}$}] (S2){};
\draw (0.5,0) circle (2) node[label={[label distance=20mm]90:$M$}] (M) {};
\node[circle,minimum size=3mm,inner sep=0,fill=black] at (0.5,1.5)(m1){};
\node[circle,minimum size=3mm,inner sep=0,fill=black] at (1.5,0)(s2){};
\node[circle,minimum size=3mm,inner sep=0,fill=black] at (-0.5,0)(s1){};
\node[circle,minimum size=3mm,inner sep=0,fill=black] at (-2,0)(x1){};
\node[circle,minimum size=3mm,inner sep=0,fill=black] at (3,0)(x2){};
\draw(x2)--(s2);
\draw(x1)--(s1);
\draw (0.5,0.5) circle (3) node[label={[label distance=30mm]90:$X$}]{};
\node[draw, rectangle, fit=(d11), inner sep=10pt] (d11BOX) {};
\end{scope}
\begin{scope}[local bounding box=d12,shift={($(d11BOX.south) + (-7.25,-7)$)}] 
	\scope[even odd rule]
	\clip \circleB;
	\clip \circleC;
	\fill[red] \circleB ;
	\endscope
\draw \circleB node[label={[label distance=8mm,xshift=-5]-90:$S_{1}$}] (S1){};
\draw \circleC node [label={[label distance=8mm,xshift=5]-90:$S_{2}$}] (S2){};
\draw (0.5,0) circle (2) node[label={[label distance=20mm]90:$M$}] (M) {};
\node[circle,minimum size=3mm,inner sep=0,fill=black] at (0.5,1.5)(m1){};
\node[circle,minimum size=3mm,inner sep=0,fill=black] at (1.5,0)(s2){};
\node[circle,minimum size=3mm,inner sep=0,fill=black] at (-0.5,0)(s1){};
\node[circle,minimum size=3mm,inner sep=0,fill=black] at (-2,0)(x1){};
\node[circle,minimum size=3mm,inner sep=0,fill=black] at (3,0)(x2){};
\draw(x2)--(s2);
\draw(x1)--(s1);
\draw (0.5,0.5) circle (3) node[label={[label distance=30mm]90:$X$}]{};
\begin{scope}[xshift=-70mm]
	\scope[even odd rule]
	\clip \circleB;
	\clip \circleC;
	\fill[red] \circleB ;
	\endscope
\draw \circleB node[label={[label distance=8mm,xshift=-5]-90:$S_{1}$}] (S1){};
\draw \circleC node [label={[label distance=8mm,xshift=5]-90:$S_{2}$}] (S2){};
\draw (0.5,0) circle (2) node[label={[label distance=20mm]90:$M$}] (M) {};
\node[circle,minimum size=3mm,inner sep=0,fill=black] at (0.5,1.5)(m1){};
\node[circle,minimum size=3mm,inner sep=0,fill=black] at (1.5,0)(s2){};
\node[circle,minimum size=3mm,inner sep=0,fill=black] at (-0.5,0)(s1){};
\node[circle,minimum size=3mm,inner sep=0,fill=black] at (-2,0)(x1){};
\node[circle,minimum size=3mm,inner sep=0,fill=black] at (3,0)(x2){};
\draw(x2)--(s2);
\draw(x1)--(s1);
\draw (0.5,0.5) circle (3) node[label={[label distance=30mm]90:$X$}]{};
\node[draw, rectangle, fit=(d12), inner sep=10pt] (d12BOX) {};
\node[anchor=north west,xshift=5,yshift=-5] at (d12BOX.north west) {\Huge$\boxed\vee$};
\end{scope}
\end{scope}
\begin{scope}[local bounding box=d13,shift={($(d12BOX.south) + (-0.5,-7)$)}]
	\scope[even odd rule]
	\clip \circleB;
	\clip \circleC;
	\fill[red] \circleB ;
	\endscope
\draw \circleB node[label={[label distance=8mm,xshift=-5]-90:$S_{1}$}] (S1){};
\draw \circleC node [label={[label distance=8mm,xshift=5]-90:$S_{2}$}] (S2){};
\draw (0.5,0) circle (2) node[label={[label distance=20mm]90:$M$}] (M) {};
\node[circle,minimum size=3mm,inner sep=0,fill=black] at (0.5,1.5)(m1){};
\node[circle,minimum size=3mm,inner sep=0,fill=black] at (1.5,0)(s2){};
\node[circle,minimum size=3mm,inner sep=0,fill=black] at (-0.5,0)(s1){};
\node[circle,minimum size=3mm,inner sep=0,fill=black] at (-2,0)(x1){};
\node[circle,minimum size=3mm,inner sep=0,fill=black] at (3,0)(x2){};
\draw(x2)--(s2);
\draw(x1)--(s1);
\draw (0.5,0.5) circle (3) node[label={[label distance=30mm]90:$X$}]{};
\node[draw, rectangle, fit=(d13), inner sep=10pt,label={[font=\Huge]-90:$d^{+}_{\neg s_{1}\neg s_{2}}$}] (d13BOX) {};
\end{scope}
\draw[thick,->,double equal sign distance](d1BOX)--node[right,label={[align=center,font=\Huge]180:\scshape{CopySpider}}]{}(d2BOX);
\draw[thick,->,double equal sign distance](d2BOX)--node[right,label={[align=center,font=\Huge]180:\scshape{ConjunctionElimination}}]{}(d3BOX);
\draw[thick,->,double equal sign distance](d3BOX)--node[right,label={[align=center,font=\Huge]180:\scshape{AddFeet}}]{}(d4BOX);
\draw[thick,->,double equal sign distance](d4BOX)--node[right,label={[align=center,font=\Huge]180:\scshape{SplitSpider}}]{}(d5BOX);
\draw[thick,->,double equal sign distance,shorten >=20pt](d5BOX.south)--(d6BOX.north);
\draw[thick,->,double equal sign distance,shorten >=20pt](d5BOX.south)--(d7BOX.north);
\draw[thick,->,double equal sign distance](d6BOX)--node[right,label={[align=center,font=\Huge]180:\scshape{CopySpider}}]{}(d8BOX);
\draw[thick,->,double equal sign distance](d7BOX)--node[right,label={[align=center,font=\Huge]0:\scshape{CopySpider}}]{}(d9BOX);
\draw[thick,->,double equal sign distance](d8BOX)--node[right,label={[align=center,font=\Huge]180:\scshape{ConjunctionElimination}}]{}(d10BOX);
\draw[thick,->,double equal sign distance](d9BOX)--node[right,label={[align=center,font=\Huge]0:\scshape{ConjunctionElimination}}]{}(d11BOX);
\draw[thick,->,double equal sign distance,shorten >=20pt](d10BOX.south)--(d12BOX.north);
\draw[thick,->,double equal sign distance,shorten >=20pt](d11BOX.south)--(d12BOX.north);
\draw[thick,->,double equal sign distance](d12BOX)--node[right,label={[align=center,font=\Huge]0:\scshape{Idempotency}}]{}(d13BOX);
\end{tikzpicture}
\end{adjustbox}
}
\caption{Proof trees witnessing meta-term construction of $\mathbf{S}$ and $\mathbf{\bar{S}}$. Each tree corresponds to a derivation in the inference system $\mathcal{R}$, transforming an initial diagram representing a seme pair into a meta-term.} 
\label{ProofTrees5-6}
\end{figure}

\begin{figure}
\centering
\subfloat[$T_7: d_{s_{1} \wedge \neg s_{2}} \vdash_{\mathcal{R}} d^{+}_{s_{1}\neg s_{2}}$.\label{ProofTree7}]{%
\begin{adjustbox}{width=0.5\linewidth}
\begin{tikzpicture}[line width=0.5mm,font=\Large]
\def\circleB{(0,0) circle (1)}
\def\circleC{(1,0) circle (1)}
\def\circleA{(0.5,1) circle (1)}
\tikzset{fontscale/.style = {font=\relsize{#1}}}
%
\begin{scope}[local bounding box=d2] 
	\scope[even odd rule]
	\clip \circleB;
	\clip \circleC;
	\fill[red] \circleB ;
	\endscope
\draw \circleB node[label={[label distance=8mm,xshift=-5]-90:$S_{1}$}] (S1){};
\draw \circleC node [label={[label distance=8mm,xshift=5]-90:$S_{2}$}] (S2){};
\draw (0.5,0) circle (2) node[label={[label distance=20mm]90:$M$}] (M) {};
\node[circle,minimum size=3mm,inner sep=0,fill=black] at (-0.5,0)(s1){};
\node[circle,minimum size=3mm,inner sep=0,fill=black] at (-2,0)(x1){};
\draw(x1)--(s1);
\draw (0.5,0.5) circle (3) node[label={[label distance=30mm]90:$X$}]{};
\begin{scope}[xshift=70mm]
	\scope[even odd rule]
	\clip \circleB;
	\clip \circleC;
	\fill[red] \circleB ;
	\endscope
\draw \circleB node[label={[label distance=8mm,xshift=-5]-90:$S_{1}$}] (S1){};
\draw \circleC node [label={[label distance=8mm,xshift=5]-90:$S_{2}$}] (S2){};
\draw (0.5,0) circle (2) node[label={[label distance=20mm]90:$M$}] (M) {};
\node[circle,minimum size=3mm,inner sep=0,fill=black] at (-0.5,0)(s1){};
\draw (0.5,0.5) circle (3) node[label={[label distance=30mm]90:$X$}]{};
\node[draw, rectangle, fit=(d2), inner sep=10pt,label={[font=\Huge]90:$d^{+}_{s_{1}\neg \wedge s_{2}}$}] (d2BOX) {};
\node[anchor=north west,xshift=5,yshift=-5] at (d2BOX.north west) {\Huge$\boxed\wedge$};
\end{scope}
\end{scope}
\begin{scope}[local bounding box=d3,shift={($(d2BOX.south) + (-0.5,-7)$)}]
	\scope[even odd rule]
	\clip \circleB;
	\clip \circleC;
	\fill[red] \circleB ;
	\endscope
\draw \circleB node[label={[label distance=8mm,xshift=-5]-90:$S_{1}$}] (S1){};
\draw \circleC node [label={[label distance=8mm,xshift=5]-90:$S_{2}$}] (S2){};
\draw (0.5,0) circle (2) node[label={[label distance=20mm]90:$M$}] (M) {};
\node[circle,minimum size=3mm,inner sep=0,fill=black] at (-0.5,0)(s1){};
\node[circle,minimum size=3mm,inner sep=0,fill=black] at (-2,0)(x1){};
\draw(x1)--(s1);
\draw (0.5,0.5) circle (3) node[label={[label distance=30mm]90:$X$}]{};
\node[draw, rectangle, fit=(d3), inner sep=10pt] (d3BOX) {};
\end{scope}
\begin{scope}[local bounding box=d4,shift={($(d3BOX.south) + (-0.5,-7)$)}]
	\scope[even odd rule]
	\clip \circleB;
	\clip \circleC;
	\fill[red] \circleB ;
	\endscope
\draw \circleB node[label={[label distance=8mm,xshift=-5]-90:$S_{1}$}] (S1){};
\draw \circleC node [label={[label distance=8mm,xshift=5]-90:$S_{2}$}] (S2){};
\draw (0.5,0) circle (2) node[label={[label distance=20mm]90:$M$}] (M) {};
\node[circle,minimum size=3mm,inner sep=0,fill=black] at (-0.5,0)(s1){};
\node[circle,minimum size=3mm,inner sep=0,fill=black] at (0.5,1.5)(m1){};
\node[circle,minimum size=3mm,inner sep=0,fill=black] at (-2,0)(x1){};
\draw(x1)--(s1);
\draw(m1)--(s1);
\draw (0.5,0.5) circle (3) node[label={[label distance=30mm]90:$X$}]{};
\node[draw, rectangle, fit=(d4), inner sep=10pt] (d4BOX) {};
\end{scope}
\begin{scope}[local bounding box=d5,shift={($(d4BOX.south) + (3,-7)$)}] 
	\scope[even odd rule]
	\clip \circleB;
	\clip \circleC;
	\fill[red] \circleB ;
	\endscope
\draw \circleB node[label={[label distance=8mm,xshift=-5]-90:$S_{1}$}] (S1){};
\draw \circleC node [label={[label distance=8mm,xshift=5]-90:$S_{2}$}] (S2){};
\draw (0.5,0) circle (2) node[label={[label distance=20mm]90:$M$}] (M) {};
\node[circle,minimum size=3mm,inner sep=0,fill=black] at (0.5,1.5)(m1){};
\draw (0.5,0.5) circle (3) node[label={[label distance=30mm]90:$X$}]{};
\begin{scope}[xshift=-70mm]
	\scope[even odd rule]
	\clip \circleB;
	\clip \circleC;
	\fill[red] \circleB ;
	\endscope
\draw \circleB node[label={[label distance=8mm,xshift=-5]-90:$S_{1}$}] (S1){};
\draw \circleC node [label={[label distance=8mm,xshift=5]-90:$S_{2}$}] (S2){};
\draw (0.5,0) circle (2) node[label={[label distance=20mm]90:$M$}] (M) {};
\node[circle,minimum size=3mm,inner sep=0,fill=black] at (-0.5,0)(s1){};
\node[circle,minimum size=3mm,inner sep=0,fill=black] at (-2,0)(x1){};
\draw(x1)--(s1);
\draw (0.5,0.5) circle (3) node[label={[label distance=30mm]90:$X$}]{};
\node[draw, rectangle, fit=(d5), inner sep=10pt] (d5BOX) {};
\node[anchor=north west,xshift=5,yshift=-5] at (d5BOX.north west) {\Huge$\boxed\vee$};
\end{scope}
\end{scope}
\begin{scope}[local bounding box=d6,shift={($(d5BOX.south) + (-10.5,-7)$)}]
	\scope[even odd rule]
	\clip \circleB;
	\clip \circleC;
	\fill[red] \circleB ;
	\endscope
\draw \circleB node[label={[label distance=8mm,xshift=-5]-90:$S_{1}$}] (S1){};
\draw \circleC node [label={[label distance=8mm,xshift=5]-90:$S_{2}$}] (S2){};
\draw (0.5,0) circle (2) node[label={[label distance=20mm]90:$M$}] (M) {};
\node[circle,minimum size=3mm,inner sep=0,fill=black] at (-0.5,0)(s1){};
\node[circle,minimum size=3mm,inner sep=0,fill=black] at (-2,0)(x1){};
\draw(x1)--(s1);
\draw (0.5,0.5) circle (3) node[label={[label distance=30mm]90:$X$}]{};
\node[draw, rectangle, fit=(d6), inner sep=10pt] (d6BOX) {};
\end{scope}
\begin{scope}[local bounding box=d7,shift={($(d5BOX.south) + (10,-7)$)}]
	\scope[even odd rule]
	\clip \circleB;
	\clip \circleC;
	\fill[red] \circleB ;
	\endscope
\draw \circleB node[label={[label distance=8mm,xshift=-5]-90:$S_{1}$}] (S1){};
\draw \circleC node [label={[label distance=8mm,xshift=5]-90:$S_{2}$}] (S2){};
\draw (0.5,0) circle (2) node[label={[label distance=20mm]90:$M$}] (M) {};
\node[circle,minimum size=3mm,inner sep=0,fill=black] at (0.5,1.5)(m1){};
\draw (0.5,0.5) circle (3) node[label={[label distance=30mm]90:$X$}]{};
\node[draw, rectangle, fit=(d7), inner sep=10pt] (d7BOX) {};
\end{scope}
\begin{scope}[local bounding box=d8,shift={($(d6BOX.south) + (-4,-7)$)}] 
	\scope[even odd rule]
	\clip \circleB;
	\clip \circleC;
	\fill[red] \circleB ;
	\endscope
\draw \circleB node[label={[label distance=8mm,xshift=-5]-90:$S_{1}$}] (S1){};
\draw \circleC node [label={[label distance=8mm,xshift=5]-90:$S_{2}$}] (S2){};
\draw (0.5,0) circle (2) node[label={[label distance=20mm]90:$M$}] (M) {};
\node[circle,minimum size=3mm,inner sep=0,fill=black] at (0.5,1.5)(m1){};
\node[circle,minimum size=3mm,inner sep=0,fill=black] at (-0.5,0)(s1){};
\node[circle,minimum size=3mm,inner sep=0,fill=black] at (-2,0)(x1){};
\draw(x1)--(s1);
\draw (0.5,0.5) circle (3) node[label={[label distance=30mm]90:$X$}]{};
\begin{scope}[xshift=70mm]
	\scope[even odd rule]
	\clip \circleB;
	\clip \circleC;
	\fill[red] \circleB ;
	\endscope
\draw \circleB node[label={[label distance=8mm,xshift=-5]-90:$S_{1}$}] (S1){};
\draw \circleC node [label={[label distance=8mm,xshift=5]-90:$S_{2}$}] (S2){};
\draw (0.5,0) circle (2) node[label={[label distance=20mm]90:$M$}] (M) {};
\node[circle,minimum size=3mm,inner sep=0,fill=black] at (0.5,1.5)(m1){};
\draw (0.5,0.5) circle (3) node[label={[label distance=30mm]90:$X$}]{};
\node[draw, rectangle, fit=(d8), inner sep=10pt] (d8BOX) {};
\node[anchor=north west,xshift=5,yshift=-5] at (d8BOX.north west) {\Huge$\boxed\wedge$};
\end{scope}
\end{scope}
\begin{scope}[local bounding box=d9,shift={($(d7BOX.south) + (-4,-7)$)}] 
	\scope[even odd rule]
	\clip \circleB;
	\clip \circleC;
	\fill[red] \circleB ;
	\endscope
\draw \circleB node[label={[label distance=8mm,xshift=-5]-90:$S_{1}$}] (S1){};
\draw \circleC node [label={[label distance=8mm,xshift=5]-90:$S_{2}$}] (S2){};
\draw (0.5,0) circle (2) node[label={[label distance=20mm]90:$M$}] (M) {};
\node[circle,minimum size=3mm,inner sep=0,fill=black] at (0.5,1.5)(m1){};
\node[circle,minimum size=3mm,inner sep=0,fill=black] at (-0.5,0)(s1){};
\node[circle,minimum size=3mm,inner sep=0,fill=black] at (-2,0)(x1){};
\draw(x1)--(s1);
\draw (0.5,0.5) circle (3) node[label={[label distance=30mm]90:$X$}]{};
\begin{scope}[xshift=70mm]
	\scope[even odd rule]
	\clip \circleB;
	\clip \circleC;
	\fill[red] \circleB ;
	\endscope
\draw \circleB node[label={[label distance=8mm,xshift=-5]-90:$S_{1}$}] (S1){};
\draw \circleC node [label={[label distance=8mm,xshift=5]-90:$S_{2}$}] (S2){};
\draw (0.5,0) circle (2) node[label={[label distance=20mm]90:$M$}] (M) {};
\node[circle,minimum size=3mm,inner sep=0,fill=black] at (0.5,1.5)(m1){};
\draw (0.5,0.5) circle (3) node[label={[label distance=30mm]90:$X$}]{};
\node[draw, rectangle, fit=(d9), inner sep=10pt] (d9BOX) {};
\node[anchor=north west,xshift=5,yshift=-5] at (d9BOX.north west) {\Huge$\boxed\wedge$};
\end{scope}
\end{scope}
\begin{scope}[local bounding box=d10,shift={($(d8BOX.south) + (-0.5,-7)$)}]
	\scope[even odd rule]
	\clip \circleB;
	\clip \circleC;
	\fill[red] \circleB ;
	\endscope
\draw \circleB node[label={[label distance=8mm,xshift=-5]-90:$S_{1}$}] (S1){};
\draw \circleC node [label={[label distance=8mm,xshift=5]-90:$S_{2}$}] (S2){};
\draw (0.5,0) circle (2) node[label={[label distance=20mm]90:$M$}] (M) {};
\node[circle,minimum size=3mm,inner sep=0,fill=black] at (0.5,1.5)(m1){};
\node[circle,minimum size=3mm,inner sep=0,fill=black] at (-0.5,0)(s1){};
\node[circle,minimum size=3mm,inner sep=0,fill=black] at (-2,0)(x1){};
\draw(s1)--(x1);
\draw (0.5,0.5) circle (3) node[label={[label distance=30mm]90:$X$}]{};
\node[draw, rectangle, fit=(d10), inner sep=10pt] (d10BOX) {};
\end{scope}
\begin{scope}[local bounding box=d11,shift={($(d9BOX.south) + (-0.5,-7)$)}]
	\scope[even odd rule]
	\clip \circleB;
	\clip \circleC;
	\fill[red] \circleB ;
	\endscope
\draw \circleB node[label={[label distance=8mm,xshift=-5]-90:$S_{1}$}] (S1){};
\draw \circleC node [label={[label distance=8mm,xshift=5]-90:$S_{2}$}] (S2){};
\draw (0.5,0) circle (2) node[label={[label distance=20mm]90:$M$}] (M) {};
\node[circle,minimum size=3mm,inner sep=0,fill=black] at (0.5,1.5)(m1){};
\node[circle,minimum size=3mm,inner sep=0,fill=black] at (-0.5,0)(s1){};
\node[circle,minimum size=3mm,inner sep=0,fill=black] at (-2,0)(x1){};
\draw(x1)--(s1);
\draw (0.5,0.5) circle (3) node[label={[label distance=30mm]90:$X$}]{};
\node[draw, rectangle, fit=(d11), inner sep=10pt] (d11BOX) {};
\end{scope}
\begin{scope}[local bounding box=d12,shift={($(d11BOX.south) + (-7.25,-7)$)}] 
	\scope[even odd rule]
	\clip \circleB;
	\clip \circleC;
	\fill[red] \circleB ;
	\endscope
\draw \circleB node[label={[label distance=8mm,xshift=-5]-90:$S_{1}$}] (S1){};
\draw \circleC node [label={[label distance=8mm,xshift=5]-90:$S_{2}$}] (S2){};
\draw (0.5,0) circle (2) node[label={[label distance=20mm]90:$M$}] (M) {};
\node[circle,minimum size=3mm,inner sep=0,fill=black] at (0.5,1.5)(m1){};
\node[circle,minimum size=3mm,inner sep=0,fill=black] at (-0.5,0)(s1){};
\node[circle,minimum size=3mm,inner sep=0,fill=black] at (-2,0)(x1){};
\draw(x1)--(s1);
\draw (0.5,0.5) circle (3) node[label={[label distance=30mm]90:$X$}]{};
\begin{scope}[xshift=-70mm]
	\scope[even odd rule]
	\clip \circleB;
	\clip \circleC;
	\fill[red] \circleB ;
	\endscope
\draw \circleB node[label={[label distance=8mm,xshift=-5]-90:$S_{1}$}] (S1){};
\draw \circleC node [label={[label distance=8mm,xshift=5]-90:$S_{2}$}] (S2){};
\draw (0.5,0) circle (2) node[label={[label distance=20mm]90:$M$}] (M) {};
\node[circle,minimum size=3mm,inner sep=0,fill=black] at (0.5,1.5)(m1){};
\node[circle,minimum size=3mm,inner sep=0,fill=black] at (-0.5,0)(s1){};
\node[circle,minimum size=3mm,inner sep=0,fill=black] at (-2,0)(x1){};;
\draw(x1)--(s1);
\draw (0.5,0.5) circle (3) node[label={[label distance=30mm]90:$X$}]{};
\node[draw, rectangle, fit=(d12), inner sep=10pt] (d12BOX) {};
\node[anchor=north west,xshift=5,yshift=-5] at (d12BOX.north west) {\Huge$\boxed\vee$};
\end{scope}
\end{scope}
\begin{scope}[local bounding box=d13,shift={($(d12BOX.south) + (-0.5,-7)$)}]
	\scope[even odd rule]
	\clip \circleB;
	\clip \circleC;
	\fill[red] \circleB ;
	\endscope
\draw \circleB node[label={[label distance=8mm,xshift=-5]-90:$S_{1}$}] (S1){};
\draw \circleC node [label={[label distance=8mm,xshift=5]-90:$S_{2}$}] (S2){};
\draw (0.5,0) circle (2) node[label={[label distance=20mm]90:$M$}] (M) {};
\node[circle,minimum size=3mm,inner sep=0,fill=black] at (0.5,1.5)(m1){};
\node[circle,minimum size=3mm,inner sep=0,fill=black] at (-0.5,0)(s1){};
\node[circle,minimum size=3mm,inner sep=0,fill=black] at (-2,0)(x1){};
\draw(x1)--(s1);
\draw (0.5,0.5) circle (3) node[label={[label distance=30mm]90:$X$}]{};
\node[draw, rectangle, fit=(d13), inner sep=10pt,label={[font=\Huge]-90:$d^{+}_{ s_{1}\neg s_{2}}$}] (d13BOX) {};
\end{scope}
\draw[thick,->,double equal sign distance](d2BOX)--node[right,label={[align=center,font=\Huge]180:\scshape{ConjunctionElimination}}]{}(d3BOX);
\draw[thick,->,double equal sign distance](d3BOX)--node[right,label={[align=center,font=\Huge]180:\scshape{AddFeet}}]{}(d4BOX);
\draw[thick,->,double equal sign distance](d4BOX)--node[right,label={[align=center,font=\Huge]180:\scshape{SplitSpider}}]{}(d5BOX);
\draw[thick,->,double equal sign distance,shorten >=20pt](d5BOX.south)--(d6BOX.north);
\draw[thick,->,double equal sign distance,shorten >=20pt](d5BOX.south)--(d7BOX.north);
\draw[thick,->,double equal sign distance](d6BOX)--node[right,label={[align=center,font=\Huge]180:\scshape{CopySpider}}]{}(d8BOX);
\draw[thick,->,double equal sign distance](d7BOX)--node[right,label={[align=center,font=\Huge]0:\scshape{CopySpider}}]{}(d9BOX);
\draw[thick,->,double equal sign distance](d8BOX)--node[right,label={[align=center,font=\Huge]180:\scshape{ConjunctionElimination}}]{}(d10BOX);
\draw[thick,->,double equal sign distance](d9BOX)--node[right,label={[align=center,font=\Huge]0:\scshape{ConjunctionElimination}}]{}(d11BOX);
\draw[thick,->,double equal sign distance,shorten >=20pt](d10BOX.south)--(d12BOX.north);
\draw[thick,->,double equal sign distance,shorten >=20pt](d11BOX.south)--(d12BOX.north);
\draw[thick,->,double equal sign distance](d12BOX)--node[right,label={[align=center,font=\Huge]0:\scshape{Idempotency}}]{}(d13BOX);
\end{tikzpicture}
\end{adjustbox}
}
\subfloat[$T_8: d_{s_{2}\wedge \neg s_{1}} \vdash_{\mathcal{R}} d^{+}_{s_{2} \neg s_{1}}$.\label{ProofTree8}]{%
\begin{adjustbox}{width=0.5\linewidth}
\begin{tikzpicture}[line width=0.5mm,font=\Large]
\def\circleB{(0,0) circle (1)}
\def\circleC{(1,0) circle (1)}
\def\circleA{(0.5,1) circle (1)}
\tikzset{fontscale/.style = {font=\relsize{#1}}}
%
\begin{scope}[local bounding box=d2,shift={($(d1BOX.south) + (-4,-7)$)}] 
	\scope[even odd rule]
	\clip \circleB;
	\clip \circleC;
	\fill[red] \circleB ;
	\endscope
\draw \circleB node[label={[label distance=8mm,xshift=-5]-90:$S_{1}$}] (S1){};
\draw \circleC node [label={[label distance=8mm,xshift=5]-90:$S_{2}$}] (S2){};
\draw (0.5,0) circle (2) node[label={[label distance=20mm]90:$M$}] (M) {};
\node[circle,minimum size=3mm,inner sep=0,fill=black] at (1.5,0)(s1){};
\node[circle,minimum size=3mm,inner sep=0,fill=black] at (3,0)(x1){};
\draw(x1)--(s1);
\draw (0.5,0.5) circle (3) node[label={[label distance=30mm]90:$X$}]{};
\begin{scope}[xshift=70mm]
	\scope[even odd rule]
	\clip \circleB;
	\clip \circleC;
	\fill[red] \circleB ;
	\endscope
\draw \circleB node[label={[label distance=8mm,xshift=-5]-90:$S_{1}$}] (S1){};
\draw \circleC node [label={[label distance=8mm,xshift=5]-90:$S_{2}$}] (S2){};
\draw (0.5,0) circle (2) node[label={[label distance=20mm]90:$M$}] (M) {};
\node[circle,minimum size=3mm,inner sep=0,fill=black] at (1.5,0)(s1){};
\draw (0.5,0.5) circle (3) node[label={[label distance=30mm]90:$X$}]{};
\node[draw, rectangle, fit=(d2), inner sep=10pt,label={[font=\Huge]90:$d^{+}_{s_{2}\wedge\neg s_{1}}$}] (d2BOX) {};
\node[anchor=north west,xshift=5,yshift=-5] at (d2BOX.north west) {\Huge$\boxed\wedge$};
\end{scope}
\end{scope}
\begin{scope}[local bounding box=d3,shift={($(d2BOX.south) + (-0.5,-7)$)}]
	\scope[even odd rule]
	\clip \circleB;
	\clip \circleC;
	\fill[red] \circleB ;
	\endscope
\draw \circleB node[label={[label distance=8mm,xshift=-5]-90:$S_{1}$}] (S1){};
\draw \circleC node [label={[label distance=8mm,xshift=5]-90:$S_{2}$}] (S2){};
\draw (0.5,0) circle (2) node[label={[label distance=20mm]90:$M$}] (M) {};
\node[circle,minimum size=3mm,inner sep=0,fill=black] at (1.5,0)(s1){};
\node[circle,minimum size=3mm,inner sep=0,fill=black] at (3,0)(x1){};
\draw(x1)--(s1);
\draw (0.5,0.5) circle (3) node[label={[label distance=30mm]90:$X$}]{};
\node[draw, rectangle, fit=(d3), inner sep=10pt] (d3BOX) {};
\end{scope}
\begin{scope}[local bounding box=d4,shift={($(d3BOX.south) + (-0.5,-7)$)}]
	\scope[even odd rule]
	\clip \circleB;
	\clip \circleC;
	\fill[red] \circleB ;
	\endscope
\draw \circleB node[label={[label distance=8mm,xshift=-5]-90:$S_{1}$}] (S1){};
\draw \circleC node [label={[label distance=8mm,xshift=5]-90:$S_{2}$}] (S2){};
\draw (0.5,0) circle (2) node[label={[label distance=20mm]90:$M$}] (M) {};
\node[circle,minimum size=3mm,inner sep=0,fill=black] at (1.5,0)(s1){};
\node[circle,minimum size=3mm,inner sep=0,fill=black] at (0.5,1.5)(m1){};
\node[circle,minimum size=3mm,inner sep=0,fill=black] at (3,0)(x1){};
\draw(x1)--(s1);
\draw(m1)--(s1);
\draw (0.5,0.5) circle (3) node[label={[label distance=30mm]90:$X$}]{};
\node[draw, rectangle, fit=(d4), inner sep=10pt] (d4BOX) {};
\end{scope}
\begin{scope}[local bounding box=d5,shift={($(d4BOX.south) + (3,-7)$)}] 
	\scope[even odd rule]
	\clip \circleB;
	\clip \circleC;
	\fill[red] \circleB ;
	\endscope
\draw \circleB node[label={[label distance=8mm,xshift=-5]-90:$S_{1}$}] (S1){};
\draw \circleC node [label={[label distance=8mm,xshift=5]-90:$S_{2}$}] (S2){};
\draw (0.5,0) circle (2) node[label={[label distance=20mm]90:$M$}] (M) {};
\node[circle,minimum size=3mm,inner sep=0,fill=black] at (0.5,1.5)(m1){};
\draw (0.5,0.5) circle (3) node[label={[label distance=30mm]90:$X$}]{};
\begin{scope}[xshift=-70mm]
	\scope[even odd rule]
	\clip \circleB;
	\clip \circleC;
	\fill[red] \circleB ;
	\endscope
\draw \circleB node[label={[label distance=8mm,xshift=-5]-90:$S_{1}$}] (S1){};
\draw \circleC node [label={[label distance=8mm,xshift=5]-90:$S_{2}$}] (S2){};
\draw (0.5,0) circle (2) node[label={[label distance=20mm]90:$M$}] (M) {};
\node[circle,minimum size=3mm,inner sep=0,fill=black] at (1.5,0)(s1){};
\node[circle,minimum size=3mm,inner sep=0,fill=black] at (3,0)(x1){};
\draw(x1)--(s1);
\draw (0.5,0.5) circle (3) node[label={[label distance=30mm]90:$X$}]{};
\node[draw, rectangle, fit=(d5), inner sep=10pt] (d5BOX) {};
\node[anchor=north west,xshift=5,yshift=-5] at (d5BOX.north west) {\Huge$\boxed\vee$};
\end{scope}
\end{scope}
\begin{scope}[local bounding box=d6,shift={($(d5BOX.south) + (-10.5,-7)$)}]
	\scope[even odd rule]
	\clip \circleB;
	\clip \circleC;
	\fill[red] \circleB ;
	\endscope
\draw \circleB node[label={[label distance=8mm,xshift=-5]-90:$S_{1}$}] (S1){};
\draw \circleC node [label={[label distance=8mm,xshift=5]-90:$S_{2}$}] (S2){};
\draw (0.5,0) circle (2) node[label={[label distance=20mm]90:$M$}] (M) {};
\node[circle,minimum size=3mm,inner sep=0,fill=black] at (1.5,0)(s1){};
\node[circle,minimum size=3mm,inner sep=0,fill=black] at (3,0)(x1){};
\draw(x1)--(s1);
\draw (0.5,0.5) circle (3) node[label={[label distance=30mm]90:$X$}]{};
\node[draw, rectangle, fit=(d6), inner sep=10pt] (d6BOX) {};
\end{scope}
\begin{scope}[local bounding box=d7,shift={($(d5BOX.south) + (10,-7)$)}]
	\scope[even odd rule]
	\clip \circleB;
	\clip \circleC;
	\fill[red] \circleB ;
	\endscope
\draw \circleB node[label={[label distance=8mm,xshift=-5]-90:$S_{1}$}] (S1){};
\draw \circleC node [label={[label distance=8mm,xshift=5]-90:$S_{2}$}] (S2){};
\draw (0.5,0) circle (2) node[label={[label distance=20mm]90:$M$}] (M) {};
\node[circle,minimum size=3mm,inner sep=0,fill=black] at (0.5,1.5)(m1){};
\draw (0.5,0.5) circle (3) node[label={[label distance=30mm]90:$X$}]{};
\node[draw, rectangle, fit=(d7), inner sep=10pt] (d7BOX) {};
\end{scope}
\begin{scope}[local bounding box=d8,shift={($(d6BOX.south) + (-4,-7)$)}] 
	\scope[even odd rule]
	\clip \circleB;
	\clip \circleC;
	\fill[red] \circleB ;
	\endscope
\draw \circleB node[label={[label distance=8mm,xshift=-5]-90:$S_{1}$}] (S1){};
\draw \circleC node [label={[label distance=8mm,xshift=5]-90:$S_{2}$}] (S2){};
\draw (0.5,0) circle (2) node[label={[label distance=20mm]90:$M$}] (M) {};
\node[circle,minimum size=3mm,inner sep=0,fill=black] at (0.5,1.5)(m1){};
\node[circle,minimum size=3mm,inner sep=0,fill=black] at (1.5,0)(s1){};
\node[circle,minimum size=3mm,inner sep=0,fill=black] at (3,0)(x1){};
\draw(x1)--(s1);
\draw (0.5,0.5) circle (3) node[label={[label distance=30mm]90:$X$}]{};
\begin{scope}[xshift=70mm]
	\scope[even odd rule]
	\clip \circleB;
	\clip \circleC;
	\fill[red] \circleB ;
	\endscope
\draw \circleB node[label={[label distance=8mm,xshift=-5]-90:$S_{1}$}] (S1){};
\draw \circleC node [label={[label distance=8mm,xshift=5]-90:$S_{2}$}] (S2){};
\draw (0.5,0) circle (2) node[label={[label distance=20mm]90:$M$}] (M) {};
\node[circle,minimum size=3mm,inner sep=0,fill=black] at (0.5,1.5)(m1){};
\draw (0.5,0.5) circle (3) node[label={[label distance=30mm]90:$X$}]{};
\node[draw, rectangle, fit=(d8), inner sep=10pt] (d8BOX) {};
\node[anchor=north west,xshift=5,yshift=-5] at (d8BOX.north west) {\Huge$\boxed\wedge$};
\end{scope}
\end{scope}
\begin{scope}[local bounding box=d9,shift={($(d7BOX.south) + (-4,-7)$)}] 
	\scope[even odd rule]
	\clip \circleB;
	\clip \circleC;
	\fill[red] \circleB ;
	\endscope
\draw \circleB node[label={[label distance=8mm,xshift=-5]-90:$S_{1}$}] (S1){};
\draw \circleC node [label={[label distance=8mm,xshift=5]-90:$S_{2}$}] (S2){};
\draw (0.5,0) circle (2) node[label={[label distance=20mm]90:$M$}] (M) {};
\node[circle,minimum size=3mm,inner sep=0,fill=black] at (0.5,1.5)(m1){};
\node[circle,minimum size=3mm,inner sep=0,fill=black] at (1.5,0)(s1){};
\node[circle,minimum size=3mm,inner sep=0,fill=black] at (3,0)(x1){};
\draw(x1)--(s1);
\draw (0.5,0.5) circle (3) node[label={[label distance=30mm]90:$X$}]{};
\begin{scope}[xshift=70mm]
	\scope[even odd rule]
	\clip \circleB;
	\clip \circleC;
	\fill[red] \circleB ;
	\endscope
\draw \circleB node[label={[label distance=8mm,xshift=-5]-90:$S_{1}$}] (S1){};
\draw \circleC node [label={[label distance=8mm,xshift=5]-90:$S_{2}$}] (S2){};
\draw (0.5,0) circle (2) node[label={[label distance=20mm]90:$M$}] (M) {};
\node[circle,minimum size=3mm,inner sep=0,fill=black] at (0.5,1.5)(m1){};
\draw (0.5,0.5) circle (3) node[label={[label distance=30mm]90:$X$}]{};
\node[draw, rectangle, fit=(d9), inner sep=10pt] (d9BOX) {};
\node[anchor=north west,xshift=5,yshift=-5] at (d9BOX.north west) {\Huge$\boxed\wedge$};
\end{scope}
\end{scope}
\begin{scope}[local bounding box=d10,shift={($(d8BOX.south) + (-0.5,-7)$)}]
	\scope[even odd rule]
	\clip \circleB;
	\clip \circleC;
	\fill[red] \circleB ;
	\endscope
\draw \circleB node[label={[label distance=8mm,xshift=-5]-90:$S_{1}$}] (S1){};
\draw \circleC node [label={[label distance=8mm,xshift=5]-90:$S_{2}$}] (S2){};
\draw (0.5,0) circle (2) node[label={[label distance=20mm]90:$M$}] (M) {};
\node[circle,minimum size=3mm,inner sep=0,fill=black] at (0.5,1.5)(m1){};
\node[circle,minimum size=3mm,inner sep=0,fill=black] at (1.5,0)(s1){};
\node[circle,minimum size=3mm,inner sep=0,fill=black] at (3,0)(x1){};
\draw(s1)--(x1);
\draw (0.5,0.5) circle (3) node[label={[label distance=30mm]90:$X$}]{};
\node[draw, rectangle, fit=(d10), inner sep=10pt] (d10BOX) {};
\end{scope}
\begin{scope}[local bounding box=d11,shift={($(d9BOX.south) + (-0.5,-7)$)}]
	\scope[even odd rule]
	\clip \circleB;
	\clip \circleC;
	\fill[red] \circleB ;
	\endscope
\draw \circleB node[label={[label distance=8mm,xshift=-5]-90:$S_{1}$}] (S1){};
\draw \circleC node [label={[label distance=8mm,xshift=5]-90:$S_{2}$}] (S2){};
\draw (0.5,0) circle (2) node[label={[label distance=20mm]90:$M$}] (M) {};
\node[circle,minimum size=3mm,inner sep=0,fill=black] at (0.5,1.5)(m1){};
\node[circle,minimum size=3mm,inner sep=0,fill=black] at (1.5,0)(s1){};
\node[circle,minimum size=3mm,inner sep=0,fill=black] at (3,0)(x1){};
\draw(x1)--(s1);
\draw (0.5,0.5) circle (3) node[label={[label distance=30mm]90:$X$}]{};
\node[draw, rectangle, fit=(d11), inner sep=10pt] (d11BOX) {};
\end{scope}
\begin{scope}[local bounding box=d12,shift={($(d11BOX.south) + (-7.25,-7)$)}] 
	\scope[even odd rule]
	\clip \circleB;
	\clip \circleC;
	\fill[red] \circleB ;
	\endscope
\draw \circleB node[label={[label distance=8mm,xshift=-5]-90:$S_{1}$}] (S1){};
\draw \circleC node [label={[label distance=8mm,xshift=5]-90:$S_{2}$}] (S2){};
\draw (0.5,0) circle (2) node[label={[label distance=20mm]90:$M$}] (M) {};
\node[circle,minimum size=3mm,inner sep=0,fill=black] at (0.5,1.5)(m1){};
\node[circle,minimum size=3mm,inner sep=0,fill=black] at (1.5,0)(s1){};
\node[circle,minimum size=3mm,inner sep=0,fill=black] at (3,0)(x1){};
\draw(x1)--(s1);
\draw (0.5,0.5) circle (3) node[label={[label distance=30mm]90:$X$}]{};
\begin{scope}[xshift=-70mm]
	\scope[even odd rule]
	\clip \circleB;
	\clip \circleC;
	\fill[red] \circleB ;
	\endscope
\draw \circleB node[label={[label distance=8mm,xshift=-5]-90:$S_{1}$}] (S1){};
\draw \circleC node [label={[label distance=8mm,xshift=5]-90:$S_{2}$}] (S2){};
\draw (0.5,0) circle (2) node[label={[label distance=20mm]90:$M$}] (M) {};
\node[circle,minimum size=3mm,inner sep=0,fill=black] at (0.5,1.5)(m1){};
\node[circle,minimum size=3mm,inner sep=0,fill=black] at (1.5,0)(s1){};
\node[circle,minimum size=3mm,inner sep=0,fill=black] at (3,0)(x1){};;
\draw(x1)--(s1);
\draw (0.5,0.5) circle (3) node[label={[label distance=30mm]90:$X$}]{};
\node[draw, rectangle, fit=(d12), inner sep=10pt] (d12BOX) {};
\node[anchor=north west,xshift=5,yshift=-5] at (d12BOX.north west) {\Huge$\boxed\vee$};
\end{scope}
\end{scope}
\begin{scope}[local bounding box=d13,shift={($(d12BOX.south) + (-0.5,-7)$)}]
	\scope[even odd rule]
	\clip \circleB;
	\clip \circleC;
	\fill[red] \circleB ;
	\endscope
\draw \circleB node[label={[label distance=8mm,xshift=-5]-90:$S_{1}$}] (S1){};
\draw \circleC node [label={[label distance=8mm,xshift=5]-90:$S_{2}$}] (S2){};
\draw (0.5,0) circle (2) node[label={[label distance=20mm]90:$M$}] (M) {};
\node[circle,minimum size=3mm,inner sep=0,fill=black] at (0.5,1.5)(m1){};
\node[circle,minimum size=3mm,inner sep=0,fill=black] at (1.5,0)(s1){};
\node[circle,minimum size=3mm,inner sep=0,fill=black] at (3,0)(x1){};
\draw(x1)--(s1);
\draw (0.5,0.5) circle (3) node[label={[label distance=30mm]90:$X$}]{};
\node[draw, rectangle, fit=(d13), inner sep=10pt,label={[font=\Huge]-90:$d^{+}_{ s_{2}\neg s_{1}}$}] (d13BOX) {};
\end{scope}
\draw[thick,->,double equal sign distance](d2BOX)--node[right,label={[align=center,font=\Huge]180:\scshape{ConjunctionElimination}}]{}(d3BOX);
\draw[thick,->,double equal sign distance](d3BOX)--node[right,label={[align=center,font=\Huge]180:\scshape{AddFeet}}]{}(d4BOX);
\draw[thick,->,double equal sign distance](d4BOX)--node[right,label={[align=center,font=\Huge]180:\scshape{SplitSpider}}]{}(d5BOX);
\draw[thick,->,double equal sign distance,shorten >=20pt](d5BOX.south)--(d6BOX.north);
\draw[thick,->,double equal sign distance,shorten >=20pt](d5BOX.south)--(d7BOX.north);
\draw[thick,->,double equal sign distance](d6BOX)--node[right,label={[align=center,font=\Huge]180:\scshape{CopySpider}}]{}(d8BOX);
\draw[thick,->,double equal sign distance](d7BOX)--node[right,label={[align=center,font=\Huge]0:\scshape{CopySpider}}]{}(d9BOX);
\draw[thick,->,double equal sign distance](d8BOX)--node[right,label={[align=center,font=\Huge]180:\scshape{ConjunctionElimination}}]{}(d10BOX);
\draw[thick,->,double equal sign distance](d9BOX)--node[right,label={[align=center,font=\Huge]0:\scshape{ConjunctionElimination}}]{}(d11BOX);
\draw[thick,->,double equal sign distance,shorten >=20pt](d10BOX.south)--(d12BOX.north);
\draw[thick,->,double equal sign distance,shorten >=20pt](d11BOX.south)--(d12BOX.north);
\draw[thick,->,double equal sign distance](d12BOX)--node[right,label={[align=center,font=\Huge]0:\scshape{Idempotency}}]{}(d13BOX);
\end{tikzpicture}
\end{adjustbox}
}
\caption{Proof trees witnessing meta-term construction of Positive Deixis and Negative Deixis. Each tree corresponds to a derivation in the inference system $\mathcal{R}$, transforming an initial diagram representing a seme pair into a meta-term.} 
\label{ProofTrees7-8}
\end{figure}

\begin{proposition}[Derivation of Greimasian meta-terms]\label{The:MetaTerms}
Let 
\[
(a,b)\in\{(s_1,s_2),(\neg s_1,\neg s_2),(s_1,\neg s_2),(s_2,\neg s_1)\},
\]
and let $d_a$ and $d_b$ be unitary spider diagrams representing the corresponding semic configurations in the language with contours $L=\{S_1,S_2,M,X\}$, where $S_1,S_2 \subseteq M \subseteq X$ and $S_1 \cap S_2 = \varnothing$. Let $d_{a \wedge b}$ denote the diagrammatic combination of $d_a$ and $d_b$ via the rule \textsc{Combine}, when defined. Then there exists a derivation in the inference system $\mathcal R$,
\[
d_{a \wedge b} \;\vdash_{\mathcal R}\; d^{+}_{ab},
\]
such that the resulting diagram $d^{+}_{ab}$ satisfies the following:
\begin{enumerate}
    \item it contains witnesses for the two input semic configurations $a$ and $b$,
    \item it contains a witness whose habitat lies within $M$,
    \item the witness in $M$ is distinct from the input witnesses and represents the
    sememic actualisation of the compound configuration,
    \item the shading condition $S_1 \cap S_2 = \varnothing$ is preserved.
\end{enumerate}

The derivation proceeds by a uniform schema consisting of applications of \textsc{Combine}, \textsc{AddFeet}, \textsc{SplitSpider}, and \textsc{Idempotency}, together with structural normalisation steps involving \textsc{ConjunctionElimination} and \textsc{CopySpider} where required. These additional steps ensure that all intermediate diagrams satisfy the $\alpha$-diagram condition (Definition~\ref{def:AlphaDiagram}), in particular in cases where the input configurations contain spiders whose habitats span multiple zones. The four admissible pairs yield the canonical Greimasian meta-terms:
\[
\begin{aligned}
&d^{+}_{s_1 s_2} \quad\emph{(complex term $S$)},\\
&d^{+}_{\neg s_1 \neg s_2} \quad\emph{(neutral term $\bar S$)},\\
&d^{+}_{s_1 \neg s_2} \quad\emph{(Positive Deixis)},\\
&d^{+}_{s_2 \neg s_1} \quad\emph{(Negative Deixis)}.
\end{aligned}
\]
\end{proposition}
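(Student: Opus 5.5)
The plan is to give one generic derivation schema, parametrised by the pair $(a,b)$, and check that each rule is applicable at every step. The four proof trees $T_5$--$T_8$ in Figures~\ref{ProofTrees5-6} and~\ref{ProofTrees7-8} are then instances of it.

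Write $z_M=(\{M,X\},\{S_1,S_2\})$ for the zone of $M$ outside both semes. Let $r_a$ and $r_b$ be the habitats of the input witnesses. For $a=s_i$ this is the single zone $(\{S_i,M,X\},\{S_j\})$, and for $a=\neg s_i$ it is the two-zone habitat of Definition~\ref{Def:Negation}. Since all inputs share $L=\{S_1,S_2,M,X\}$, $Z$, and $Z^{\ast}=\{(\{S_1,S_2\},\varnothing)\}$, and no habitat $r_a,r_b$ meets the shaded zone, the combination $d_{a\wedge b}$ is never $\bot$. When both inputs are $\alpha$-diagrams (the case $(s_1,s_2)$), I apply \textsc{Combine} of Definition~\ref{def:Combine} directly. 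When one or both inputs contain a spanning spider, \textsc{Combine} is not literally applicable. I then first use \textsc{CopySpider} to transport the spanning spider into the other conjunct, and \textsc{ConjunctionElimination} to obtain a single unitary diagram containing both witnesses. This is the normalisation step mentioned in the statement. In every case the result is a unitary diagram $d_0$ with spiders of habitats $r_a$, $r_b$ and the original shading.

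The uniform core then runs as follows. First, \textsc{AddFeet} extends the $a$-witness to habitat $r_a\cup\{z_M\}$; this zone is unshaded, so the rule applies. Second, \textsc{SplitSpider} turns this into the disjunction $d_0\vee d_M$. Here $d_0$ keeps the original $a$-witness, and $d_M$ keeps the $b$-witness and has a new spider with habitat $\{z_M\}$. Third, on each disjunct, \textsc{CopySpider} (legitimate because $z_M$ and $r_a$ contain no shaded zones) followed by \textsc{Combine} or \textsc{ConjunctionElimination} produces the same diagram $d^{+}_{ab}$. This diagram has three spiders, with habitats $r_a$, $r_b$ and $\{z_M\}$. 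Fourth, \textsc{Idempotency} collapses $d^{+}_{ab}\vee d^{+}_{ab}$ to $d^{+}_{ab}$.

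The four properties are then read off directly. Properties (1) and (4) hold because no rule used deletes a spider with habitat $r_a$ or $r_b$ or alters $Z^{\ast}$. Property (2) holds because $z_M\subseteq M$. Property (3) holds because $z_M\notin r_a\cup r_b$, since $r_a$ and $r_b$ lie in $S_1\cup S_2$ or in $X\setminus M$. Hence the $M$-witness is a distinct spider, not an identification of an existing one.

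The main obstacle is the third step: justifying that the disjunct still carrying the original $a$-witness also acquires the $M$-spider. Read semantically, \textsc{SplitSpider} only yields ``$r_a$ or $z_M$,'' so the second disjunct alone does not entail both. I would handle this by using \textsc{CopySpider} across the conjunction with the $M$-branch diagram, exactly as drawn in the trees. This treats \textsc{CopySpider} in $\mathcal R$ as the generative lift that introduces the $M$-level witness. I would state explicitly that $d^{+}_{ab}$ is derivable in $\mathcal R$ rather than semantically entailed, since it is precisely this rule that realises `+' as a construction rather than a logical consequence.
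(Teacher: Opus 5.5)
You follow the paper's schema essentially step for step. However, the step you flag as the main obstacle is a genuine gap, and it cannot be closed inside $\mathcal R$.

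\textsc{CopySpider}, as defined, rewrites a \emph{conjunction} $d_1\wedge d_2$ into the equivalent $d_1'\wedge d_2$. \textsc{Combine} only takes zone-wise maxima of spiders already present in one of its arguments. After \textsc{SplitSpider} you hold only the disjunction $d_0\vee d_M$, so neither branch has a conjunct to copy from: $d_0$ lacks the $z_M$-spider and $d_M$ lacks the $r_a$-spider.

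No other route helps. Every rule of $\mathcal R$ is truth-preserving: \textsc{Combine}, \textsc{SplitSpider}, \textsc{CopySpider} and \textsc{Idempotency} are equivalences, and \textsc{AddFeet}, \textsc{EraseSpider} and \textsc{ConjunctionElimination} are weakenings. Yet $d^{+}_{ab}$ is not entailed by the input. Let $\Phi(S_1)$ and $\Phi(S_2)$ be disjoint two-element sets, and set $\Phi(M)=\Phi(X)=U=\Phi(S_1)\cup\Phi(S_2)$. This interpretation satisfies the input for all four pairs, even in the form of your two-spider diagram $d_0$. But $z_M$ denotes $\varnothing$, so $d^{+}_{ab}$, which has a spider in $z_M$, is false. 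Hence no derivation $d_{a\wedge b}\vdash_{\mathcal R}d^{+}_{ab}$ of your target exists, and ``derivable in $\mathcal R$ rather than semantically entailed'' is not an option for this rule set.

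Reading \textsc{CopySpider} as a generative lift is really adding a new, unsound rule: from $d$ infer $d$ plus a fresh spider in $z_M$. Once that rule is available, the \textsc{AddFeet}/\textsc{SplitSpider}/\textsc{Idempotency} detour does no work. Following the trees ``exactly as drawn'' does not help either. $T_5$--$T_8$ make the same unlicensed move, with the missing conjunct appearing in each branch without a source. The paper's sketch hides the same step in ``a further application of \textsc{Combine} introduces a new spider''. A correct version must do one of two things. It can be stated for $\mathcal R$ extended by an explicit lift rule. Or it can take as an extra premise a diagram asserting an element of $z_M$, from which the $M$-witness can legitimately be copied in by \textsc{CopySpider}.

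Your normalisation step has a second, smaller gap. Except for $(s_1,s_2)$, the input habitats overlap:
$r_a\cap r_b=\{(\{S_i,M,X\},\{S_j\})\}$ for a deixis pair $(s_i,\neg s_j)$, and $r_a\cap r_b=\{(\{X\},\{M,S_1,S_2\})\}$ for the neutral pair.
Thus $d_a\wedge d_b$ can be witnessed by a single element, whereas a unitary diagram with two spiders asserts two distinct ones. Correspondingly, the habitat-preserving injective non-surjective map that \textsc{CopySpider} requires does not exist in these cases. What you can soundly obtain is the disjunction of $d_0$ with a one-spider diagram of habitat $r_a\cap r_b$, not $d_0$ itself.
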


\begin{proof}[Sketch of proof of Proposition~\ref{The:MetaTerms}]
Fix a pair $(a,b)$. We begin with the diagram $d_{a \wedge b}$, representing the conjunctive configuration of the two input semes. When both $d_a$ and $d_b$ are $\alpha$-diagrams, $d_{a \wedge b}$ is obtained directly using \textsc{Combine}. In cases where one or both input diagrams contain spiders whose habitats span multiple zones (as in diagrammatic negations), we first apply \textsc{ConjunctionElimination} and \textsc{CopySpider} to obtain equivalent $\alpha$-diagrams to which \textsc{Combine} is applicable. From this normalised configuration, we apply \textsc{AddFeet} and \textsc{SplitSpider} to expand and refine spider habitats, thereby constructing intermediate diagrams in which the contributions of the two input configurations are jointly represented while preserving their distinctness. A further application of \textsc{Combine} introduces a new spider whose habitat lies within $M$, representing the sememic actualisation of the compound configuration. Finally, \textsc{Idempotency} removes redundant disjunctive structure, yielding a diagram $d^{+}_{ab}$ that contains the original witnesses together with a new witness in $M$, while preserving all shading constraints. The derivation schema is uniform across the four admissible pairs. The proof trees $T_5$–$T_8$ (Figures~\ref{ProofTrees1-2}--\ref{ProofTrees7-8}) provide concrete instances, with additional applications of \textsc{ConjunctionElimination} required in the derivation of the neutral term to account for the absence of a shared determinate habitat between the input configurations.
\end{proof}


\begin{remark}
Proposition~\ref{The:MetaTerms} makes precise the interpretation of the Greimasian operation `+' as a derivational construction rather than a logical or set-theoretic combination. In particular, the derivation schema is uniform in the choice of input pair, and thus applies not only to the four canonical pairs of the semiotic square, but to any conjunctive pair of semic configurations. This yields additional derived configurations such as:
\begin{align}
\resizebox{0.5\textwidth}{!}{
\begin{tikzpicture}[line width=0.5mm,font=\Large]
\begin{scope}[local bounding box=d1]
	\scope[even odd rule]
	\clip \circleB;
	\clip \circleC;
	\fill[red] \circleB ;
	\endscope
\draw \circleB node[label={[label distance=8mm,xshift=-3,font=\huge]-90:$S_{1}$}] (S1){};
\draw \circleC node [label={[label distance=8mm,xshift=3,font=\huge]-90:$S_{2}$}] (S2){};
\draw (0.5,0) circle (2) node[label={[label distance=20mm,font=\huge]90:$M$}] (M) {};
\node[circle,minimum size=3mm,inner sep=0,fill=black] at (0.5,1.5)(m1){};
\node[circle,minimum size=3mm,inner sep=0,fill=black] at (1.5,0)(s2){};
\node[circle,minimum size=3mm,inner sep=0,fill=black] at (-0.5,0)(s1){};
\node[circle,minimum size=3mm,inner sep=0,fill=black] at (3,0)(x2){};
\draw(x2)--(s2);){};
\draw (0.5,0.5) circle (3) node[label={[label distance=30mm,font=\huge]90:$X$}]{};
\node[draw, rectangle, fit=(d1), inner sep=10pt, label={[font=\Huge]$d^{+}_{s_{1} \neg s_{1}}$}] (d1rec) {};
\end{scope}
\begin{scope}[local bounding box=d2,shift={(10,0)}] 
	\scope[even odd rule]
	\clip \circleB;
	\clip \circleC;
	\fill[red] \circleB ;
	\endscope
\draw \circleB node[label={[label distance=8mm,xshift=-3,font=\huge]-90:$S_{1}$}] (S1){};
\draw \circleC node [label={[label distance=8mm,xshift=3,font=\huge]-90:$S_{2}$}] (S2){};
\draw (0.5,0) circle (2) node[label={[label distance=20mm,font=\huge]90:$M$}] (M) {};
\node[circle,minimum size=3mm,inner sep=0,fill=black] at (0.5,1.5)(m1){};
\node[circle,minimum size=3mm,inner sep=0,fill=black] at (-0.5,0)(s1){};
\node[circle,minimum size=3mm,inner sep=0,fill=black] at (1.5,0)(s2){};
\node[circle,minimum size=3mm,inner sep=0,fill=black] at (-2,0)(x2){};
\draw(x2)--(s1);
\draw (0.5,0.5) circle (3) node[label={[label distance=30mm,font=\huge]90:$X$}]{};
\node[draw, rectangle, fit=(d2), inner sep=10pt, label={[font=\Huge]$d^{+}_{s_{2} \neg s_{2}}$}] (d2rec){};
\end{scope}
\end{tikzpicture}
}
\end{align}
which are obtained from internally contradictory inputs via: 
\begin{align*}
T_{9}: d_{s_{1}\wedge \neg s_{1}} \vdash_{\mathcal{R}} d^{+}_{s_{1} \neg s_{1}},\\
T_{10}: d_{s_{2}\wedge \neg s_{2}} \vdash_{\mathcal{R}} d^{+}_{s_{2} \neg s_{2}},
\end{align*}
and have been designated the positive schema $(s_{1},\neg s_{1})$, and negative schema $(s_{2}, \neg s_{2})$ by \citet[page 309]{GreimasCourtes1982}. Our derivation shows that the inference system $\mathcal R$ is thus closed under binary composition: the Greimasian meta-terms arise as a distinguished subset of a larger class of derivable sememic configurations. In contrast to the classical semiotic square, where the positive and negative schemas are typically excluded in accordance with the principle of non-contradiction, our diagrammatic setting permits their construction as well-defined semantic objects. From a proof-theoretic perspective, these additional terms correspond to derivations in which incompatible habitats (i.e. $d_{1},d_{2}$ and $d_{3},d_{4}$) are preserved at the level of input witnesses, while a new witness in $M$ mediates their joint representation. Such configurations have been noted in semiotic analysis such as those found in Majorcan folk tales ``\textit{Axio era y no era}'' (``it was and it was not'') by \citet[page 239]{Jakobson1963}.
\end{remark}

\subsection{A proof-theoretic semantics of the meta-terms}

In order to motivate our semio-logical interpretation of the derived meta-terms from the witnessed proof trees, we first provide a semiotic square by H\'{e}bert that is based on one of \citet[page 200]{Floch1985} that uses a contrary seme pair of $s_{1}$ = /masculine/, $s_{2}$ = /feminine/.

\begin{example}\label{HebertExample}

\citet[page 83]{Hebert2020} uses the following abstract Greimas square as a tool for investigating lexicalisation and homogeneity at each meta-term position:

\begin{equation}
\begin{tikzpicture}[scale=0.75]
\node at (0,0)(s1){/masculine/};
\node at (4,0)(s2){/feminine/};
\node (not-s2) at (0,-4){/mannish/};
\node at (4,-4)(not-s1){/effeminate/};
\draw[thick,dotted,<->] (s1) -- node[label={[label distance=4mm]90:/hermaphrodite/}] {} (s2);
\draw[thick,dotted,<->]  (not-s1) -- node[label={[label distance=4mm]-90:/angel/}] {}(not-s2);
\draw[thick,dashed,->] (not-s2)--node[left,xshift=-5]{/machismo/}(s1);
\draw[thick,dashed,->] (not-s1)--node[right,xshift=5]{/vamp/}(s2);
\draw[<->](s1)--(not-s1);
\draw[<->](s2)--(not-s2);
\end{tikzpicture}
\end{equation}

Like the Greimas example given previously, H\'{e}bert constructs the meta-terms as described in clause (2) of Proposition~\ref{The:MetaTerms} using the semic operator $`+'$ so that at the $\mathbf{S}$ and $\mathbf{\bar{S}}$ positions, there are opposing terms that `combine' the semes. The resulting sememes function at the figurative level to describe anthropomorphic actants, while the Positive Dexies and Negative Dexies function at the axiological level in which the compound terms /machismo/ and /vamp/ (in the sense of ultra-feminine or seductive) are embodiments of the realising modality of `being'. Here, H\'{e}bert notes that because the square is abstract (i.e., not linked to a particular source text) he ``chose to represent only the `natural', `spontaneous' states of masculinity/femininity (in a general sense, that is not just biological), despite having to refer to unreal beings (angels).'' \citep[page 85]{Hebert2020}
\end{example}

We now relate the proof-theoretic construction of Proposition~\ref{The:MetaTerms} to its semantic interpretation in the semiotic square. Consider the meta-term $\mathbf{S}$ we derived through
\[
T_{5}: d_{s_{1}\wedge s_{2}} \vdash_{\mathcal{R}} d^{+}_{s_{1}s_{2}},
\]
for which we give its proof tree in Figure~\ref{ProofTree5}. Here, we have accommodated for the inherent structure of a sememe as a collection of semes by introducing a spider into $M$ via the inference rule \textsc{AddFeet}. The contour accounts for instances that associate to a sememe, but does not merely aggregate the semes $S_1$ and $S_2$. Rather, it induces a new semantic configuration in which their previously disjoint realisations are mediated at a higher level. Within a Greimassian semantics, this equates to a formalisation of how the meta-term $\mathbf{S}$ is a hyponym to the lower-level terms of the complex axis. In effect, the identification of the meta-term in $M$ is the taxonomic organisation of $s_{2}$ and $s_{2}$ into the complex axis $\mathfrak{c}$. 

In H\'{e}bert's square given in Example~\ref{HebertExample}, the syntactic interpretation indicates that although the two semes /masculine/ and /feminine/ are disjoint, and therefore defined via distinct semic categories, at the $\mathbf{S}$ position in the square, there exists the term /hermaphrodite/ in $M$, such that together with /masculine/ ($s_{1}$) and /feminine/ ($s_{2}$), naturally form a subset collection in $X$, the semantic universe. Here, H\'{e}bert suggest that the distinction between terms and meta-terms ``depends on whether [the] two terms judged as distinct and different are apprehended successively or simultaneously.'' \citep[page 88]{Hebert2020} Thus, although we have $Z^{\ast}(d_{1},d_{3})=\{(\{S_1,S_2\},\varnothing)\}$, the existence of a spider in $M$ expresses the formation of a sememe as a unifying semantic construct---i.e., it simultaneously expresses /masculine and /feminine/. Accordingly, we interpret a sememe in the Greimas square not as a `sum' of semes, but as a new semantic object constructed via derivation. The introduction of a spider in $M$ does not collapse the incompatibility of $S_1$ and $S_2$, but mediates their joint representation at a higher level of the diagrammatic hierarchy. In this sense, the operation ‘+’ is realised not as aggregation, but as a diagrammatic lift from a conjunctive seme-pair to a new configuration whose witness lies in $M$.

The interpretation given above for the complex term $\mathbf{S}$ extends uniformly to the remaining meta-terms derived in Proposition~\ref{The:MetaTerms}. In each case, the proof-theoretic construction realises the operation `$+$' as a derivational lift in which a new witness in $M$ mediates between the input configurations while preserving their structural constraints. Consider first the Positive Deixis, obtained via the derivation
\[
T_{7} : d_{s_1 \wedge \neg s_2} \;\vdash_{\mathcal R}\; d^{+}_{s_1 \neg s_2}.
\]
At the proof-theoretic level, this construction combines a determinate seme configuration $s_1$ with a virtualised configuration $\neg s_2$, whose witness ranges over multiple zones (cf. Definition~\ref{Def:Negation}). The derivation proceeds by refining and recombining these habitats so as to introduce a new spider in $M$, whose existence depends on both the actualisation of $s_1$ and the indeterminacy of $\neg s_2$. Unlike the case of the complex term $\mathbf{S}$, the inputs here are not symmetric: one is fully realised, while the other remains partially underdetermined. Semantically, this asymmetry corresponds to the axiological orientation of the Positive Deixis. The resulting sememe does not merely combine $s_1$ and $\neg s_2$, but resolves the indeterminacy of $\neg s_2$ in the direction of $s_1$. In this sense, the derived witness in $M$ may be interpreted as a configuration in which the virtualised opposition is stabilised through a dominant semantic pole. This aligns with the interpretation of the Positive Deixis as expressing a realised or actualised modality: e.g. being /machismo/ in the H\'{e}bert square (Example~\ref{HebertExample}) or identifying a /chauvinist/ presence. The diagrammatic implication of $\neg s_{2} \rightarrow s_{1} \cong d_{7} \Rightarrow_{\text{diag}} d_{8}$, and its covering hyponymic meta-term of the Positive Deixis operates in the thymic category of euphoria in which the underlying contrary relation is preserved but selectively resolved.

A dual analysis applies to the Negative Deixis, obtained via
\[
T_{8} : d_{s_2 \wedge \neg s_1} \;\vdash_{\mathcal R}\; d^{+}_{s_2 \neg s_1}.
\]
Here, the roles of $s_1$ and $s_2$ are reversed: the derivation combines a determinate instance of $s_2$ with the indeterminate configuration $\neg s_1$, yielding a new witness in $M$ that mediates their interaction. As in the previous case, the derivation preserves the structural incompatibility of $S_1$ and $S_2$, while introducing a higher-level configuration that resolves the virtualised component in favour of $s_2$. The resulting sememe thus encodes an opposing axiological orientation (the thymic category of disphoria), corresponding to what is traditionally identified as the Negative Deixis. In the H\'{e}bert square example, this encapsulates the meta-term as ``intensifying a term [/vamp/] by aﬃrming a semantic value and simultaneously negating the opposite of that value.'' \citep[page 87]{Hebert2020}

\begin{figure}[h!]
\centering
\resizebox{0.85\textwidth}{!}{
\begin{tikzpicture}[line width=0.5mm,font=\Large]
\def\circleB{(0,0) circle (1)}
\def\circleC{(1,0) circle (1)}
\def\circleA{(0.5,1) circle (1)}
\begin{scope}[local bounding box=d1]
	\scope[even odd rule]
	\clip \circleB;
	\clip \circleC;
	\fill[red] \circleB ;
	\endscope
\draw \circleB node[label={[label distance=8mm,xshift=-5]-90:$S_{1}$}] (S1){};
\draw \circleC node [label={[label distance=8mm,xshift=5]-90:$S_{2}$}] (S2){};
\draw (0.5,0) circle (2) node[label={[label distance=20mm]90:$M$}] (M) {};
\node[circle,minimum size=3mm,inner sep=0,fill=black] at (-0.5,0)(s1){};
\draw (0.5,0.5) circle (3) node[label={[label distance=30mm]90:$X$}]{};
\node[draw, rectangle, fit=(d1), inner sep=10pt, label={[font=\huge]$d_{1}$}] (d1rec) {};
\end{scope}
\begin{scope}[local bounding box=d2,shift={(12,0)}]
	\scope[even odd rule]
	\clip \circleB;
	\clip \circleC;
	\fill[red] \circleB ;
	\endscope
\draw \circleB node[label={[label distance=8mm,xshift=-5]-90:$S_{1}$}] (S1){};
\draw \circleC node [label={[label distance=8mm,xshift=5]-90:$S_{2}$}] (S2){};
\draw (0.5,0) circle (2) node[label={[label distance=20mm]90:$M$}] (M) {};
\node[circle,minimum size=3mm,inner sep=0,fill=black] at (1.5,0)(s2){};
\draw (0.5,0.5) circle (3) node[label={[label distance=30mm]90:$X$}]{};
\node[draw, rectangle, fit=(d2), inner sep=10pt, label={[font=\huge]$d_{3}$}] (d2rec){};
\end{scope}
\begin{scope}[local bounding box=d3,shift={(0,-12)}]
	\scope[even odd rule]
	\clip \circleB;
	\clip \circleC;
	\fill[red] \circleB ;
	\endscope
\draw \circleB node[label={[label distance=8mm,xshift=-5]-90:$S_{1}$}] (S1){};
\draw \circleC node [label={[label distance=8mm,xshift=5]-90:$S_{2}$}] (S2){};
\draw (0.5,0) circle (2) node[label={[label distance=20mm]90:$M$}] (M) {};
\node[circle,minimum size=3mm,inner sep=0,fill=black] at (-0.5,0)(s1){};
\node[circle,minimum size=3mm,inner sep=0,fill=black] at (-2,0)(x1){};
\draw(s1)--(x1);
\draw (0.5,0.5) circle (3) node[label={[label distance=30mm]90:$X$}]{};
\node[draw, rectangle, fit=(d3), inner sep=10pt, label={[font=\huge]-90:$d_{4}$}] (d3rec) {};
\end{scope}
\begin{scope}[local bounding box=d4,shift={(12,-12)}]
	\scope[even odd rule]
	\clip \circleB;
	\clip \circleC;
	\fill[red] \circleB ;
	\endscope
\draw \circleB node[label={[label distance=8mm,xshift=-5]-90:$S_{1}$}] (S1){};
\draw \circleC node [label={[label distance=8mm,xshift=5]-90:$S_{2}$}] (S2){};
\draw (0.5,0) circle (2) node[label={[label distance=20mm]90:$M$}] (M) {};
\node[circle,minimum size=3mm,inner sep=0,fill=black] at (1.5,0)(s2){};
\node[circle,minimum size=3mm,inner sep=0,fill=black] at (3,0)(x1){};
\draw(s2)--(x1);
\draw (0.5,0.5) circle (3) node[label={[label distance=30mm]90:$X$}]{};
\node[draw, rectangle, fit=(d4), inner sep=10pt, label={[font=\huge]-90:$d_{2}$}] (d4rec){};
\end{scope}
\draw[<->,dashed,shorten <=5pt, shorten >=5pt](d1rec)--node[label={[font=\huge]90:$\#_{\mathrm{diag}}$}]{}(d2rec);
\draw[<->,dashed,shorten <=5pt, shorten >=5pt](d3rec)--node[label={[font=\huge]-90:$\#_{\mathrm{diag}}$}]{}(d4rec);
\draw[->,shorten <=15pt, shorten >=15pt] (d1rec) to [bend left=10] node[pos=0.75,label={[font=\huge]0:$T_{1}$}]{}(d4rec);
\draw[->,shorten <=15pt, shorten >=15pt] (d2rec) to [bend left=10] node[xshift=5,pos=0.75,label={[font=\huge]-90:$T_{3}$}]{}(d3rec);
\draw[->,shorten <=5pt, shorten >=5pt] (d3rec.north)--node[label={[font=\huge]180:$T_{4}$}]{}(d1rec.south);
\draw[->,shorten <=5pt, shorten >=5pt] (d4rec.north)--node[label={[font=\huge]0:$T_{2}$}]{}(d2rec.south);
\begin{scope}[local bounding box=T9,shift={(5,-5.5)},scale=0.25,line width=0.25mm]
	\scope[even odd rule]
	\clip \circleB;
	\clip \circleC;
	\fill[red] \circleB ;
	\endscope
\draw \circleB node(S1){};
\draw \circleC node(S2){};
\draw (0.5,0) circle (2) node (M) {};
\node[circle,minimum size=1mm,inner sep=0,fill=black] at (0.5,1.5)(m1){};
\node[circle,minimum size=1mm,inner sep=0,fill=black] at (1.5,0)(s2){};
\node[circle,minimum size=1mm,inner sep=0,fill=black] at (-0.5,0)(s1){};
\node[circle,minimum size=1mm,inner sep=0,fill=black] at (3,0)(x2){};
\draw(x2)--(s2);){};
\draw (0.5,0.5) circle (3) node{};
\node[draw, rectangle, fit=(T9), inner sep=10pt, label={[font=\huge]$d^{+}_{s_{1} \neg s_{1}}$}] (T9rec) {};
\end{scope}
\begin{scope}[local bounding box=T10,shift={(2.5,-5.5)},scale=0.25,line width=0.25mm] 
	\scope[even odd rule]
	\clip \circleB;
	\clip \circleC;
	\fill[red] \circleB ;
	\endscope
\draw \circleB node (S1){};
\draw \circleC node (S2){};
\draw (0.5,0) circle (2) node (M) {};
\node[circle,minimum size=1mm,inner sep=0,fill=black] at (0.5,1.5)(m1){};
\node[circle,minimum size=1mm,inner sep=0,fill=black] at (-0.5,0)(s1){};
\node[circle,minimum size=1mm,inner sep=0,fill=black] at (1.5,0)(s2){};
\node[circle,minimum size=1mm,inner sep=0,fill=black] at (-2,0)(x2){};
\draw(x2)--(s1);
\draw (0.5,0.5) circle (3) node{};
\node[draw, rectangle, fit=(T10), inner sep=10pt, label={[font=\huge]$d^{+}_{s_{2} \neg s_{2}}$}] (T10rec){};
\end{scope}
\draw[decorate, decoration={brace, amplitude=12pt, raise=15pt}]  (d1.north) -- coordinate (T1) (d2.north) node [label={[font=\huge,pos=0.5,label distance=30]90:$T_{5}$}]{};
\begin{scope}[local bounding box=MT1,shift={($(T1) + (-0.5,5)$)}]
	\scope[even odd rule]
	\clip \circleB;
	\clip \circleC;
	\fill[red] \circleB ;
	\endscope
\draw \circleB node[label={[label distance=8mm,xshift=-5]-90:$S_{1}$}] (S1){};
\draw \circleC node [label={[label distance=8mm,xshift=5]-90:$S_{2}$}] (S2){};
\draw (0.5,0) circle (2) node[label={[label distance=20mm]90:$M$}] (M) {};
\node[circle,minimum size=3mm,inner sep=0,fill=black] at (0.5,1.5)(m1){};
\node[circle,minimum size=3mm,inner sep=0,fill=black] at (1.5,0)(s2){};
\node[circle,minimum size=3mm,inner sep=0,fill=black] at (-0.5,0)(s1){};
\draw (0.5,0.5) circle (3) node[label={[label distance=30mm]90:$X$}]{};
\node[draw, rectangle, fit=(MT1), inner sep=10pt,label={[font=\huge]90:$d^{+}_{s_{1}s_{2}}$}] (MT1BOX) {};
\end{scope}
\draw[decorate, decoration={brace, mirror, amplitude=12pt, raise=15pt}]  (d3.south) -- coordinate (T2) (d4.south) node [label={[font=\huge,pos=0.5,label distance=30]-90:$T_{6}$}]{};
\begin{scope}[local bounding box=MT2,shift={($(T2) + (-0.5,-6.75)$)}]
	\scope[even odd rule]
	\clip \circleB;
	\clip \circleC;
	\fill[red] \circleB ;
	\endscope
\draw \circleB node[label={[label distance=8mm,xshift=-5]-90:$S_{1}$}] (S1){};
\draw \circleC node [label={[label distance=8mm,xshift=5]-90:$S_{2}$}] (S2){};
\draw (0.5,0) circle (2) node[label={[label distance=20mm]90:$M$}] (M) {};
\node[circle,minimum size=3mm,inner sep=0,fill=black] at (0.5,1.5)(m1){};
\node[circle,minimum size=3mm,inner sep=0,fill=black] at (1.5,0)(s2){};
\node[circle,minimum size=3mm,inner sep=0,fill=black] at (-0.5,0)(s1){};
\node[circle,minimum size=3mm,inner sep=0,fill=black] at (-2,0)(x1){};
\node[circle,minimum size=3mm,inner sep=0,fill=black] at (3,0)(x2){};
\draw(x2)--(s2);
\draw(x1)--(s1);
\draw (0.5,0.5) circle (3) node[label={[label distance=30mm]90:$X$}]{};
\node[draw, rectangle, fit=(MT2), inner sep=10pt,label={[font=\huge]-90:$d^{+}_{\neg s_{1}\neg s_{2}}$}] (MT2BOX) {};
\end{scope}
\draw[decorate, decoration={brace, mirror, amplitude=12pt, raise=15pt}]  (d1.west) -- coordinate (T3) (d3.west) node [label={[font=\huge,pos=0.5,label distance=30]180:$T_{7}$}]{};
\begin{scope}[local bounding box=MT3,shift={($(T3) + (-6,-0.75)$)}]
	\scope[even odd rule]
	\clip \circleB;
	\clip \circleC;
	\fill[red] \circleB ;
	\endscope
\draw \circleB node[label={[label distance=8mm,xshift=-5]-90:$S_{1}$}] (S1){};
\draw \circleC node [label={[label distance=8mm,xshift=5]-90:$S_{2}$}] (S2){};
\draw (0.5,0) circle (2) node[label={[label distance=20mm]90:$M$}] (M) {};
\node[circle,minimum size=3mm,inner sep=0,fill=black] at (0.5,1.5)(m1){};
\node[circle,minimum size=3mm,inner sep=0,fill=black] at (-0.5,0)(s1){};
\node[circle,minimum size=3mm,inner sep=0,fill=black] at (-2,0)(x1){};
\draw(x1)--(s1);
\draw (0.5,0.5) circle (3) node[label={[label distance=30mm]90:$X$}]{};
\node[draw, rectangle, fit=(MT3), inner sep=10pt,label={[font=\huge]90:$d^{+}_{ s_{1}\neg s_{2}}$}] (MT3BOX) {};
\end{scope}
\draw[decorate, decoration={brace, amplitude=12pt, raise=15pt}]  (d2.east) -- coordinate (T4) (d4.east) node [label={[font=\huge,pos=0.5,label distance=30]0:$T_{8}$}]{};
\begin{scope}[local bounding box=MT4,shift={($(T4) + (5,-0.75)$)}]
	\scope[even odd rule]
	\clip \circleB;
	\clip \circleC;
	\fill[red] \circleB ;
	\endscope
\draw \circleB node[label={[label distance=8mm,xshift=-5]-90:$S_{1}$}] (S1){};
\draw \circleC node [label={[label distance=8mm,xshift=5]-90:$S_{2}$}] (S2){};
\draw (0.5,0) circle (2) node[label={[label distance=20mm]90:$M$}] (M) {};
\node[circle,minimum size=3mm,inner sep=0,fill=black] at (0.5,1.5)(m1){};
\node[circle,minimum size=3mm,inner sep=0,fill=black] at (1.5,0)(s2){};
\node[circle,minimum size=3mm,inner sep=0,fill=black] at (3,0)(x2){};
\draw(x2)--(s2);
\draw (0.5,0.5) circle (3) node[label={[label distance=30mm]90:$X$}]{};
\node[draw, rectangle, fit=(MT4), inner sep=10pt,label={[font=\huge]90:$d^{+}_{s_{2} \neg s_{1}}$}] (MT4BOX) {};
\end{scope}
\end{tikzpicture}
}
\caption{Diagrammatic reconstruction of the Greimas semiotic square in the language of unitary spider diagrams. Each vertex is represented by a unitary diagram encoding the corresponding semic configuration, while the arrows as $T_{i}: d \vdash_{\mathcal{R}} d^{\prime}$ denote derivations in the inference system $\mathcal{R}$, indexed by the proof trees $T_{i}$. We use `\{' to indicate binary-input derivations as $d_{a \wedge b} \vdash_{\mathcal{R}} d_{ab}^{+}$. The relation $\#_{\mathrm{diag}}$ marks diagrammatic contrariety between $s_1, s_2$ and their negations $\neg s_{1}, \neg s_{2}$. The central region includes the two non-canonical derived meta-terms admitted by Proposition~\ref{The:MetaTerms}. The figure provides a global summary of the proof-theoretic and semantic structure developed in the paper.} 
\label{SDSquare}
\end{figure}

Finally, consider the neutral term $\mathbf{\bar{S}}$ (Figure~\ref{ProofTree6}), derived via
\[
T_{6} : d_{\neg s_1 \wedge \neg s_2} \;\vdash_{\mathcal R}\; d^{+}_{\neg s_1 \neg s_2}.
\]
We note here that the derivation of $\mathbf{\bar{S}}$ involves additional applications of \textsc{ConjunctionElimination} compared to the other meta-terms. This reflects the fact that the input configurations $\neg s_{2}$ and $\neg s_{2}$ are both virtualised and do not share a common determinate habitat. As a result, their combination cannot be effected directly within a single zone, and must instead proceed via intermediate projections and duplications of spiders. This proof-theoretic distinction corresponds to the semantic characterisation of the neutral term as in the thymic category of aphoria such that both contraries are suspended rather than jointly actualised. Our construction combines these indeterminate habitats and introduces a new witness in $M$ that is supported by the joint exclusion of $s_1$ and $s_2$. 
Crucially, the derivation does not resolve this indeterminacy into either pole, but instead preserves it at the level of the resulting configuration. 

From a semantic perspective, the neutral term thus corresponds to a configuration in which both contraries are suspended. In the H\'{e}bert example this is exemplified by the non-gender actant of an /angel/ given that the neutral term $\mathbf{\bar{S}}$ ``only contains those elements \textit{marked} as `neither one', not the elements that simply belong to the resideual class of the square.'' \citep[page 87]{Hebert2020} Thus, the derived witness in $M$ does not mediate a synthesis of $s_1$ and $s_2$, but rather represents a hypotactic region within the semantic universe of $X$ such that their absence is jointly realised. This captures the interpretation of $\mathbf{\bar{S}}$ as a neutral or transcendent term, whose semantic content is defined not by the presence of either seme, but by their simultaneous exclusion within the diagrammatic hierarchy.

Taken together, these cases show that the four canonical meta-terms of the semiotic square arise from a single proof-theoretic mechanism, with their semantic differences determined by the structure of their input configurations. It also follows closely the description Greimas gives on the identification of syntagmatic units given that 

\begin{quote}
the construction of the sememes [meta-terms] is based on a semic analysis allowing us to organize the occurrences in parallel classes which are disjoined because of semic oppositions. In other words the model is the result of a semic description which rejoins and verifies the actantial structure. \citep[page 265]{Greimas1983}
\end{quote}

The operation `$+$' is thus uniformly realised as a derivational lift on all semes of the square, while the distinction between complex, neutral, and deictic terms emerges from the interaction between actualised and virtualised components in the underlying derivations. The constructions developed in this section are summarised diagrammatically in Figure~\ref{SDSquare}.

\section{Conclusion}

In this article, we have developed a diagrammatic proof system for a fragment of structural semantics inspired by the Greimas semiotic square. By interpreting unitary spider diagrams as semantic configurations, we have shown how the core relations of the square of contrariety, negation, and implication may be realised as transformations governed by a system of inference rules. In this setting, the arrows of the semiotic square are no longer treated as informal relations, but as derivations witnessed by proof trees in a formally specified calculus.

Our main result establishes that the Greimasian meta-terms arise uniformly from a proof-theoretic construction. Given a conjunctive pair of semic configurations, a fixed derivation schema produces a new diagram containing a witness in the meta-level domain $M$. This provides a precise interpretation of the operation `+’ not as a logical or set-theoretic combination, but as a derivational lift in which a new semantic configuration is constructed from simpler components. The resulting meta-term does not collapse the incompatibility of its inputs, but mediates their joint representation at a higher level of the diagrammatic hierarchy.

A key feature of the framework is that diagrammatic negation is not identified with Boolean complement. Rather, it corresponds to a restricted semantic counter-position determined by the structure of admissible zones in the diagram. In this way, the system captures the non-classical character of opposition in structural semiotics, where negation introduces indeterminacy rather than exhaustive exclusion. Diagrammatic implication, in turn, resolves this indeterminacy through successive habitat refinements, yielding determinate configurations from virtualised ones.

The resulting proof system thus provides a compositional and generative account of the semiotic square. It makes explicit the procedures by which semantic configurations are constructed and transformed, while preserving the hierarchical organisation of semes and sememes within a structured semantic universe $S_1, S_2 \subseteq M \subseteq X$. Moreover, the derivational framework extends beyond the four canonical meta-terms, yielding a broader class of admissible configurations and thereby situating the traditional square within a larger space of diagrammatically definable structures.

More broadly, this work illustrates how diagrammatic reasoning systems can be used to formalise non-standard semantic operations within a unified inferential setting. By combining the expressive resources of spider diagrams with the conceptual apparatus of structural semiotics, we obtain a framework in which meaning is not merely represented, but constructed through rule-governed transformations. This suggests further avenues for the application of diagrammatic logic to the analysis of semantic structures in linguistics, narrative theory, and related domains.

\backmatter


\bibliography{GreimasSquare}

\end{document}